\newtheorem{corollary}{Corollary}
\newtheorem{definition}{Definition}
\newtheorem{lemma}{Lemma}
\newtheorem{proposition}{Proposition}
\newtheorem{assumption}{Assumption}
\begin{document}

\title{Research Joint Ventures:\\ The Role of Financial Constraints}

\author{Philipp Brunner, Igor Letina and Armin Schmutzler\thanks{Brunner: Department of Economics, University of Bern. Email: philipp.brunner@unibe.ch. Letina: Department of Economics, University of Bern and CEPR. Email: igor.letina@unibe.ch. Schmutzler: Department of Economics, University of Zurich and CEPR. Email: armin.schmutzler@uzh.ch. We are grateful to Andrea Calef,  Esm\'{e}e Dijk, Ernst-Ludwig von Thadden, Saish Nevrekar and Regina Seibel and to seminar audiences in Bern, Helsinki, Zurich, at Oligo (2022), Swiss Society of Economics of Statistics (2022) and Swiss IO Day (2021) for helpful comments.}}

\date{\today }

\maketitle

\begin{abstract}
This paper provides a novel theory of research joint ventures for financially constrained firms.
When firms choose R\&D portfolios, an RJV can help to coordinate research efforts, reducing investments in duplicate projects. This can free up resources, increase the variety of pursued projects and thereby increase the probability of discovering the innovation. 
RJVs improve innovation outcomes when market competition is weak or external financing conditions are bad. An RJV may increase the innovation probability and nevertheless lower total R\&D costs. RJVs that increase innovation also increase consumer surplus and tend to be profitable, but innovation-reducing RJVs also exist. Finally, we compare RJVs to innovation-enhancing mergers.
\end{abstract}

\vspace{4mm}
\begin{tabular}{ll}
\textbf{JEL Codes:}& L13, L24, O31 \\

\textbf{Keywords:}& Innovation, Research Joint Ventures, Financial Constraints,  \\  
& Mergers, Intensity of Competition, Licensing
\end{tabular}

\vspace{12mm}

\newpage
\section{Introduction}

Innovation often involves large R\&D investments. A well-known example is the pharmaceutical industry where blockbuster drugs can require high upfront R\&D expenses.\footnote{For example, see CBO's report ``Research and Development in the Pharmaceutical Industry'', available at \href{https://www.cbo.gov/publication/57126}{https://www.cbo.gov/publication/57126}.} Similarly, automobile producers have invested \textsterling 341 billion within five years on R\&D in electric vehicles.\footnote{Sources for all RJVs mentioned in this section are listed in Appendix \ref{appendix_sources}.} Even industry giants rarely attempt to take on such demanding tasks on their own. In the last few years, major players have agreed on research joint ventures (RJVs). For instance, \textit{Daimler} and \textit{Geely} jointly develop battery-driven \textit{Smart} cars. \textit{PSA} and \textit{Opel} hooked up with \textit{Saft}, a subsidiary of \textit{Total}, to develop batteries. Together with \textit{BP}, \textit{Daimler} and \textit{BMW} develop charging stations.
\textit{Renault}, \textit{Nissan} and \textit{Mitsubishi Motors} agreed on investing \$26 billion to develop common platforms for electric vehicles. Further up in the value chain, suppliers of essential inputs have also joined forces.\footnote{For instance, the German chemical firm \textit{BASF} and the Chinese firm \textit{Shanshan} jointly search for better materials to produce cathodes for batteries.} Not only are the required R\&D investments large, there is also significant uncertainty about which technology the vehicles of the future will rely on. Today, most electric vehicles are powered by lithium-ion batteries, but this technology has significant drawbacks and automotive companies are jointly exploring alternatives.\footnote{For example, \textit{Volvo} and \textit{Daimler} are collaborating on fuel-cell driven cars, while \textit{Ford} and \textit{BMW} have jointly invested in a startup developing solid-state batteries.}   In all these partnerships, at least some of the firms are competing or planning to compete in the product market.\footnote{While we will focus on such horizontal RJVs, purely vertical collaborations are common as well. For instance, \textit{Panasonic} engages in a joint venture with \textit{Toyota} to develop batteries; \textit{Volkswagen} and \textit{Stellantis} develop networks of charging stations with \textit{Enel} and \textit{ENGIE}, respectively.} 

Competition policy typically treats RJVs more leniently than other forms of horizontal cooperation. For instance, the European Union addresses RJVs either under its merger regulation or under Article 101 of the EU treaty, depending on whether it is a full-function joint venture or not. In the latter case, even if an RJV has been found to have anti-competitive object or actual or potential competition-restricting effects (Article 101(1)), it may still be justified on the basis of efficiency gains under certain conditions (Article 101(3)).\footnote{See, in particular, Commission Regulation No. 1217/2010 of 14. December 2010.} 
The legal situation in the United States is similar.\footnote{The \textit{1993 National Cooperative Research and Production Act} specifies that horizontal cooperation in RJVs is not per se illegal, but is to be evaluated under the ``rule of reason''. }

An important prerequisite for justifying a friendly approach of competition policy towards RJVs is that they foster R\&D. The literature focuses on knowledge spillovers as the main justification.\footnote{Early examples include \cite{katz1986analysis}, \cite{d1988cooperative}, \cite{kamien1992research}. See Section \ref{SecDiscussion} for a detailed literature discussion.} Our paper analyzes a different channel through which RJVs can lead to more innovation: When R\&D costs are high (so that firms are financially constrained) and there is significant uncertainty about the right way to generate the desired innovative outcome, an RJV can reduce investments in duplicate R\&D projects, thereby freeing up funds that can be invested in previously unexplored approaches. To clarify under which condition this happens, we introduce a model that combines financial constraints and uncertainty about the right way to generate the desired innovation. Contrary to previous theoretical literature on RJVs, firms choose how to spread investments over different R\&D projects rather than merely how much to invest. This feature allows us to investigate how the RJV members can benefit from reallocating scarce resources. To the best of our knowledge, we provide the first analysis of RJVs that explicitly considers project choice.

In our benchmark model, we analyze a symmetric duopoly. The firms choose in which set of R\&D projects from a continuum of alternatives to invest.
Only one of all possible projects will lead to an innovation, resulting in a positive effect on the firm's product market profits. Therefore, investing in a wider range of projects increases the likelihood of finding the right approach. Projects are identical except that some are more costly than others. 
Each firm has a fixed budget, which can be used for R\&D investments.\footnote{We can also interpret the budget as the internally available time of researchers or the laboratory's capacity, which can be expanded through (more expensive) external researchers or laboratories.} In addition, firms can borrow externally. In line with the empirical literature (see Section \ref{SecDiscussion}), we assume that firms who borrow externally have to pay a positive interest rate on the external loans. The firm chooses its investment strategy so as to maximize expected profits. We assume that the budget is sufficiently small that, in equilibrium, both firms borrow positive amounts from the financial markets.  Our analysis compares the outcome of this R\&D competition game with the alternative that the firms form an RJV in an otherwise identical setting. In the latter case, the firms combine their budgets, and the RJV chooses R\&D investments to maximize joint payoffs. Firms share the research costs equally and, if successful, both receive the innovation. After the R\&D outcomes materialize, the firms compete in the product market.

Our central results give conditions under which an RJV increases the probability of innovation. The intensity of product market competition is crucial. To see this, note that, in the absence of an RJV, an innovating firm may benefit from \textit{escaping competition}, moving ahead by being the only one who has access to a superior technology. Under an RJV, it is obviously impossible to escape competition by innovation, because firms have agreed to share the fruits of their research efforts. Instead, a successful RJV symmetrically increases the profits of both firms.
When competition is soft (e.g., price competition with sufficiently differentiated goods), so that the increase in industry profits from successful joint innovation is large relative to the benefit from escaping competition, the RJV increases the innovation probability. Interestingly, this result does \textit{not} rely on financial constraints. Moreover, like all our main results, it does not require spillovers, the driving force behind innovation-enhancing research cooperation in the literature. 

Next, we suppose that competition is not soft (for instance, homogeneous quantity competition and price competition with weakly differentiated products). In this case, the value of escaping competition is always higher than the joint profit increase from innovating together. Thus, without financial constraints, an RJV would reduce the probability of innovation. This is precisely where our modeling choices play a critical role, because they allow us to identify features of RJVs that are absent in standard models. In an RJV, the participating parties not only coordinate the decisions on how much to invest, but also in which projects. This allows them to reduce duplication and free up resources, which they can spend on further projects without having to access the capital market. When an RJV frees up enough internal funding, it can \textit{potentially} invest in a wider range of projects, compared to independent firms, using just internal funding. Whether an RJV \textit{actually} makes use of this opportunity or whether it just enjoys the cost savings from avoided duplication depends not only on the nature of competition, but also on financial constraints: When external financing conditions are sufficiently bad, then the RJV increases the innovation probability even when competition is not soft. To repeat, this result relies on the existence of financial constraints: Without them, the RJV would invest in less projects than the two independent firms together.

In the situation with relatively intense competition just described, the RJV not only increases the innovation probability, but it also reduces overall R\&D spending. Thus, total industry R\&D costs and the innovation probability do not necessarily move in the same direction. This is in stark contrast with the existing literature, which typically views an RJV as innovation-enhancing if and only if it increases total investment cost. This feature of our model underlines the importance of allowing for different R\&D projects. 

Further, under very mild assumptions, we show that any RJV that increases the probability of innovation also increases expected consumer surplus. This occurs because consumers are better off if innovation is more likely, and conditional on being discovered, if it is used by as many firms as possible. 

Overall, the results just discussed show that RJVs are helpful for inducing innovation and improving consumer welfare under a wider range of circumstances than identified by previous literature. However, in line with existing worries in EU circles, we also found circumstances under which RJVs are harmful to innovation.\footnote{See ``Guidelines on the applicability of Article 101 of the Treaty on the Functioning of the European Union to horizontal co-operation agreements.'' Official Journal of the European Union  (2011/C 11/01).} Thus, to evaluate the innovation effects of RJVs, it is decisive to understand the incentives of firms to form an RJV.
If firms only had an incentive to form RJVs that reduce innovation, then lenient policy towards them would be misguided. We thus ask: Will firms have incentives to engage in RJVs for which our analysis has shown that they enhance innovation? Or will they rather engage in RJVs that reduce innovation? We find general and widely applicable conditions under which firms benefit from forming RJVs that increase the innovation probability. In particular, this will always be true unless competition is very intense. However, we also find circumstances under which firms engage in RJVs even though they reduce overall innovation -- the cost savings in these cases suffice to make the RJVs profitable, thus giving at least some foundations to the above-mentioned concerns.

Next, we compare RJVs and mergers. Which of the two forms of cooperation is more conducive to innovation depends on the nature of product market cooperation and the stringency of financial constraints. This result relates to a recent discussion in merger control that has emphasized R\&D effects, asking whether (potentially) beneficial effects of mergers on innovation provide a justification for waving them through in spite of their well-known mark-up increasing effects. We identify a wide range of parameters for which even mergers that lead to a higher innovation probability than R\&D competition should be prohibited, as an RJV would have the same social benefits without the social costs of eliminating competition.\footnote{More broadly, authors such as \cite{farrell2000scale} have emphasized that, even if efficiency gains outweigh the competition-softening effects of a merger, competition authorities still have to ask whether the merger is actually necessary to achieve these gains.}

Moreover, we explore the link between our analysis and the more familiar rationale for RJVs that relies on knowledge spillovers.
We find that knowledge spillovers and financial constraints are complements: Financial constraints are more likely to increase the probability of innovation the stronger spillovers are, and vice versa.
Finally, we analyze the relation between licensing and RJVs. In line with previous literature, the chance to earn licensing fees increases innovation incentives under R\&D competition, so that the conditions under which an RJV yields a higher probability of innovation than R\&D competition become more restrictive. Moreover, with licensing, if an RJV increases the probability of innovation, it \textit{always} results in lower R\&D spending.

All told, our paper attempts to shed light on how the consideration of project choice and financial constraints affects the analysis of RJVs. While we ignore important aspects such as firm asymmetries, costs of RJV formation and governance issues, and we work under the debatable assumption that the RJV does not induce collusive behaviour in the product market, we are confident that our approach can be a useful input for a more comprehensive welfare analysis.\footnote{See \cite{duso2014collusion} and \cite{helland2019research} for evidence suggesting that RJVs may foster collusion. However, note that our analysis of mergers for the duopoly case can alternatively be interpreted as an RJV with full collusion in the product market.}

Section \ref{SecDiscussion} discusses our contribution in the light of existing literature. In Section \ref{SecModel}, we provide the benchmark duopoly model. Section \ref{SecEffects} analyzes the innovation effects of RJVs and identifies conditions under which they are profitable. In Section \ref{SecExtensions}, we compare RJVs and mergers. Further, we extend the analysis to the case of spillovers and to multiple firms, and we discuss licensing. Section \ref{SecConclusion} concludes.

\section{Relation to the Literature}\label{SecDiscussion}

Our paper analyzes R\&D competition between duopolists who (i) select R\&D projects and (ii) are financially constrained, comparing their choices with those of RJVs and merged firms.  Accordingly, we discuss the relation of our paper to existing treatments of R\&D project choice, financially constrained firms, RJVs and mergers and innovation.

\textbf{Innovation project choice:} Our model of R\&D competition with project choice by symmetric incumbents builds on \cite{letina2016}. \cite{letina2020start} apply that framework to study the innovation decisions of an incumbent and an entrant. These papers neither include financial constraints, nor do they address joint ventures. Contrary to these models, \cite{moraga2022mergers} allow for (two) different types of R\&D, but fix the overall spending.\footnote{Broadly related models of R\&D project choice are \cite{Gilbert2019}, \cite{bryan2017}, \cite{letina2019inducing}, \cite{bardey2016} and \cite{bavly2020}.}

\textbf{Financially constrained firms:} Authors such as \cite{hall2010financing} and \cite{kerr2015financing} have argued why external financing of R\&D is more costly than for other investments,\footnote{Examples are the riskiness of the investments and the difficulty of providing collateral, as physical assets are relatively less important than human capital.} so that internal financing plays a strong role (\cite{czarnitzki2011r}).
Several authors have found negative effects of financial constraints on R\&D.\footnote{See \cite{mohnen2008financial}, 
\cite{savignac2008impact},
\cite{mancusi2014r}, \cite{howell2017financing} \cite{krieger2017developing}, \cite{caggese2019financing}. \cite{czarnitzki2011r} also find that the availability of internal financing has a larger impact on R\&D than on capital investment and that basic research is more prone to financial constraints as it is riskier.} 
In line with our results, \cite{helland2019research} finds that capital-constrained firms are more likely to join an RJV. In contrast with the empirical literature, the theoretical literature is small.\footnote{One exception is \cite{fumagalli2022}, who consider acquisitions of startups that might be financially constrained.} We are not aware of any oligopoly model of project choice by financially constrained firms (with or without RJV formation).

\textbf{The theory of RJVs:} 
The existing theoretical literature on RJVs does not allow for different R\&D projects, and it focuses on spillovers rather than financial constraints. Without RJVs, firms invest in R\&D to gain a competitive advantage (as in our model). If knowledge spillovers to competitors are large, such gains are small and firms limit their R\&D investments: \cite{d1988cooperative} show in a two-stage Cournot duopoly that, with high spillovers, an RJV leads to higher R\&D expenditures, output and welfare than R\&D competition. With low spillovers,  R\&D investments and welfare are higher under R\&D competition than in an RJV.\footnote{For similar results with more than two firms, see \cite{suzumura1992cooperative}. \cite{amir2019spillovers} shows how subsidies can help to achieve the second-best social optimum.} As argued above, we find that, with soft competition, an RJV increases the investment probability (and hence welfare) even with low spillovers. \cite{katz1986analysis} and \cite{kamien1992research} similarly show how product market competition affects the comparison between R\&D competition and cooperation. 
Like other authors, such as \cite{amir2019spillovers}, both papers also argue that an RJV reduces wasteful effort duplication. However, firms can only choose the amount of R\&D investment, which is in a strictly positive relation with the R\&D outcome (the size or probability of an innovation). In our model with project choice, an RJV may increase the innovation probability while costs fall because duplication is eliminated. This feature is particularly relevant when there is fundamental uncertainty about the right approach to R\&D.\footnote{In broadly related work, \cite{kamien2000meet} allow firms to choose different research approaches, but approaches only differ in their spillover rates, and each approach will succeed with certainty, which is in stark contrast to our model. Other important lines of research include stochastic R\&D (\cite{choi1993cooperative}) and absorptive capacity, whereby spillovers are increasing in own R\&D (\cite{kamien2000meet}).}
The literature has also highlighted important caveats to the claim that research cooperation is socially beneficial:  RJVs foster product market collusion, which leads to dynamic inefficiency.\footnote{See \cite{grossman1986research}, \cite{martin1996r}, \cite{jacquemin1988cooperative}, \cite{caloghirou2003research} and \cite{miyagiwa2009collusion}. Conversely, \cite{vilasuso2000public} show that, if forming RJVs is costly, firms may form less RJVs than socially optimal; similarly \cite{falvey2013coordination}.}  Competition authorities who decide on such RJVs have to weigh these risks against the potential benefits, which is difficult given realistic informational constraints \citep{cassiman2000research}.

\textbf{The empirics of RJVs:} Empirical studies support various benefits of RJV identified in the theoretical literature.\footnote{See \cite{link1998case}, \cite{cassiman2002r}, \cite{becker2004r}, \cite{aschhoff2008empirical} }
\cite{veugelers1997internal} emphasize that absorptive capacity is necessary to reap these benefits. \cite{Roeller2007} show that cost-sharing motives are important for RJV formation.
 Finally, \cite{duso2014collusion} and \cite{helland2019research} find empirical evidence that RJVs among competitors are more prone to collusion, which reduces welfare. Thus, authorities should scrutinize horizontal R\&D cooperation.

\textbf{Mergers and innovation:} Several authors have studied whether incumbent mergers increase innovation.
\cite{federico2017, federico2018} and \cite{motta2021effect} identify negative effects in models with one-dimensional R\&D effort; similarly, \cite{letina2016} and \cite{Gilbert2019} obtain negative effects on R\&D diversity in models of project choice.
 \cite{denicolo2018a} find positive effects. In \cite{bourreau2018b}, both possibilities arise, where the positive effects come from allowing for horizontal rather than only vertical R\&D innovations. In our model with project choices of financially constrained firms who engage in purely vertical innovations, we similarly find that the effects of the merger can be positive or negative. Contrary to \cite{bourreau2018b}, however, the possibility of a positive effect reflects the merged entity's ability to coordinate which projects to invest in and the existence of financial constraints.

\section{Model}\label{SecModel}

Our model of R\&D with project choice builds on previous work of \cite{letina2016} and \cite{letina2020start}.\footnote{Accordingly, the model description follows those papers closely.} However, neither of these papers deals with RJVs or budget constraints. We assume that two ex-ante symmetric firms $(i\in\{1,2\})$ can invest in R\&D before they compete in the product market. There are two possible levels of technology -- current technology, which is available to both firms, and new technology, which is only available to the firms that innovate. To improve their technology level, firms can invest in multiple projects $\theta$ from the set of available projects $\Theta = [0,1)$. 
Only one of these projects is correct, that is, leads to an innovation. Let  $\hat{\theta}\in \Theta$ be the correct project. Nature chooses which of the available projects is correct, but firms are not informed about it, hence firms see the location of the correct project as a random variable. We assume that the location of the correct project is uniformly distributed on $\Theta = [0,1)$.
For each  $\theta \in [0,1)$, each firm chooses whether to invest in that research project ($r_i(\theta) =1$) or not ($r_i(\theta) =0$). If $r_i(\hat{\theta})=1$, then firm $i$ will innovate for sure and if $r_i(\hat{\theta})=0$, then firm $i$ will not innovate.\footnote{In the previous version of this paper, \cite{brunner2022research}, we allow firms to partially invest in research projects, that is, choose $r_i(\theta)\in [0,1]$. This richer model admits a symmetric equilibrium. However, all economic insights remain the same as in the current version. } We restrict the firm's choices to the set of measurable functions $r : \Theta \rightarrow \{0,1\}$, which we denote with $\mathcal{R}$. 
The cost of developing a project $\theta$ is given by $C(\theta)$, where we assume that the function $C:[0,1) \rightarrow \mathbb{R}^+$ is differentiable and strictly increasing and that $C(0)=0$ and $\lim_{\theta \rightarrow 1}C(\theta) = \infty$. Therefore, the total research costs of firm $i$ are $\int_0^1 r_i(\theta)C(\theta) d\theta$.\footnote{If this integral does not converge, we assign the value $\infty$ to it.} 

If a firm has chosen $r_i(\hat{\theta})=1$ in the investment stage, it has access to an innovation and enters the product market competition with technology state $t_i=I$. If it has not invested in $\hat{\theta}$, it does not have access to an innovation, and its technology state is $t_i=0$.

For now, we do not explicitly model product market competition. Instead, we formulate weak general assumptions that we show to hold in familiar models of product market competition in Section \ref{sec:example}. We assume that the product market profits of firm $i$ are given in reduced form by the expression $\pi_{t_it_j}$ for $j\neq i$. If both firms innovate, then they will compete with the new technology, and their market profits are given by $\pi_{II}$. 
Similarly, if both firms compete with the current technology, then each of them obtains profits $\pi_{\mathit{0}\mathit{0}}$.
If a single firm innovates, it obtains profits $\pi_{I\mathit{0}}$,
while the other firm obtains $\pi_{\mathit{0}I}$. 
We will impose the following regularity assumptions on the profit functions.
\begin{assumption}[Regularity of profit functions]\label{ass_r}
$ $
\begin{enumerate}[label=(\roman*)]
\item Profits are non-negative: $\pi_{t_it_j} \geq 0$ for all $t_i$ and $t_j$.
\item Innovation increases profits: $\pi_{II} \geq \pi_{\mathit{0}\mathit{0}}$.
 \item Competitor innovation reduces profits: $\pi_{t_i\mathit{0}} \geq \pi_{t_iI}$ for $t_i \in \{\mathit{0},I\}$.
\item Escaping competition is more valuable than catching up:\\
$\pi_{I\mathit{0}} - \pi_{\mathit{0}\mathit{0}}  \geq \pi_{II}-\pi_{\mathit{0}I}$.
\end{enumerate}
\end{assumption}

Obviously, Assumptions \ref{ass_r}(i)-(iii) are compatible with most standard oligopoly models. Furthermore, authors such as \cite{bagwell1994sensitivity}, \cite{leahy1997public}, \cite{farrell2000scale} and \cite{schmutzler2013competition} have argued that submodularity conditions like (iv) hold for many innovation games with standard models of price and quantity competition unless knowledge spillovers are strong. Intuitively, a successful innovation of the competitor reduces own equilibrium outputs and margins, which reduces the benefits from increasing margins and outputs through own innovation.

While we will always maintain that competition is sufficiently intense that Assumption \ref{ass_r}(iv) holds, we will distinguish between three different regimes according to the intensity of competition.\footnote{\cite{boone2008competition,boone2008new} similarly uses the relation between efficiency differences and profit differences in his definition of intensity of competition.}

\begin{definition}[Intensity of competition] $ $
\begin{enumerate}[label=(\roman*)]
\item Competition is \textbf{intense} if avoiding the competitor catching up is more valuable than catching up:
$\pi_{I\mathit{0}} - \pi_{II}  > \pi_{II}-\pi_{\mathit{0}I}$. 
\item Competition is \textbf{soft} if avoiding the competitor catching up is less valuable than
improving together: $\pi_{I\mathit{0}}-\pi_{II}<\pi_{II} - \pi_{\mathit{0}\mathit{0}}$.
\item Competition is \textbf{moderate} if neither of the above cases holds, so that: $\pi_{II} - \pi_{\mathit{0}I} \geq \pi_{I\mathit{0}} - \pi_{II}  \geq \pi_{II}-\pi_{\mathit{0}\mathit{0}}$.
\end{enumerate}
\end{definition}

For cost-reducing investments, competition is typically intense in a homogeneous \linebreak 
Bertrand market, but also for a homogeneous Cournot market with linear demand (see Section \ref{SecCournot}). In Section \ref{SecPriceCompetition}, we will see that all three regimes arise with differentiated price competition, depending on the degree of substitution.

Each firm has a research budget $B$. If a firm spends more than $B$, it has to borrow from the capital market at the interest rate $\rho>0$, reflecting the well-known difficulties of external financing of R\&D investments (see Section \ref{SecDiscussion}).\footnote{Although our financing assumption is simple, it captures the essence of the idea that the marginal costs of own funds, as long as they are available, are lower than the marginal costs of borrowed funds.}
We will assume (in a way which will be made precise in Assumption \ref{ass_budget}) that without an RJV the budget is binding and both firms find it optimal to borrow positive amounts from the capital market.

The expected total payoff of firm $i$, given the strategy of competitor $j$ is then
\begin{align*}
\mathbb{E}\Pi_i(r_i, r_{j})=& \int_0^1  (1-r_{j}(\theta))\left[r_i(\theta)\pi_{I\mathit{0}} +(1-r_i(\theta))\pi_{\mathit{0}\mathit{0}}\right] d\theta   \\
& +  \int_0^1  r_{j}(\theta)\left[r_i(\theta)\pi_{II} +(1-r_i(\theta))\pi_{\mathit{0}I}\right] d\theta  \\
& -\int_0^1 r_i(\theta)C(\theta) d\theta - \rho \max \left\lbrace 0,  \int_0^1 r_i(\theta)C(\theta) d\theta - B  \right\rbrace. 
\end{align*}
The first integral captures the expected payoffs when firm $j$ does not innovate. Similarly, the second integral represents the payoffs when firm $j$ innovates. The third line represents research costs, depending on whether the firm borrows from the capital market or not. Firms choose $r_i(\theta)$ and $r_{j}(\theta)$ simultaneously with the goal of maximizing $\mathbb{E}\Pi_i$ and $\mathbb{E}\Pi_j$, respectively. We will focus on pure strategy equilibria throughout.

\section{Effects of RJVs}\label{SecEffects}

\subsection{Equilibrium under R\&D Competition} \label{SecRDCompetition}

We now characterize the equilibrium strategies under R\&D competition. Given our assumptions on research costs, it is intuitive that both firms will invest in projects near $\theta=0$, whereas neither firm will invest in projects near $\theta=1$. One would thus expect equilibrium strategies to be of the following type.

\begin{definition}\label{definition2}
A \textbf{double cut-off strategy profile} is a profile $(r_i,r_j)$ of research strategies for which $\theta_L \in [0,1) $ and $\theta_H \in [\theta_L,1) $ exist such that
\begin{alignat*}{2}
    &r_i(\theta)=r_j(\theta)=1 \quad &&\text{ if } \theta<\theta_L \\
    &r_i(\theta)=r_j(\theta)=0 \quad &&\text{ if } \theta>\theta_H .
\end{alignat*}
\end{definition}

Note that the definition does not specify which firm invests for $\theta \in (\theta_L,\theta_H)$.
To find the equilibrium cut-off values, consider the equations
\begin{align*}
(1+\rho)C(\theta_1)&=\pi_{I\mathit{0}}-\pi_{\mathit{0}\mathit{0}} \\
(1+\rho)C(\theta_2)&=\pi_{II} - \pi_{\mathit{0}I}.
\end{align*}
$\theta_1$ is the most expensive project in which a firm can profitably invest using external finance, assuming that the competitor does not invest in this project. Similarly, $\theta_2$ is the most expensive project in which a firm can profitably invest using external finance, assuming that the competitor invests in this project. An immediate consequence of Assumption \ref{ass_r}(iv) is that $\theta_2 \leq \theta_1$. The following assumption guarantees that both firms will borrow positive amounts in any equilibrium.
\begin{assumption}\label{ass_budget}
$B < \int_0^{\theta_2}C(\theta) d\theta$.
\end{assumption}

Next, we characterize all equilibria of this game.\footnote{Of course, for any equilibrium strategies $r^*_i$ and $r^*_j$ there exist infinitely many equilibria which only differ on sets of measure zero. We ignore those differences and only regard strategies as distinct if they differ on sets of positive measure.}

\begin{lemma}
[Characterization of investment strategies under competition]\label{prop_equilibrium} $ $\\
\noindent (i) The research competition game has multiple equilibria. A profile of double-cut off strategies $(r^*_i,r^*_j)$ is an equilibrium if it satisfies (a) $\theta_L=\theta_2$ and $\theta_H=\theta_1$ and (b) for each $ \theta \in (\theta_2, \theta_1)$ either:
\begin{align*}
    r^*_i(\theta)&=1 \text{ and } r^*_{j}(\theta)=0 \text{ or }\\
    r^*_i(\theta)&=0 \text{ and } r^*_{j}(\theta)=1.
\end{align*}
(ii) No other pure-strategy equilibria of the research-competition game exist.
\end{lemma}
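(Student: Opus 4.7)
The key observation is that the firm's problem decomposes pointwise across projects once we know (a) whether the firm is borrowing at the margin, and (b) the rival's action $r_j(\theta)$. I will organize the argument around three cut-offs: show first that every equilibrium forces each firm onto the borrowing regime, then pin down best responses at each $\theta$ using the marginal cost $(1+\rho)C(\theta)$, and finally verify that the stated profiles satisfy these best-response conditions.

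\emph{Step 1 (both firms borrow in every equilibrium).} Fix any equilibrium $(r_i^*,r_j^*)$ and any $\theta<\theta_2$. Whether or not firm $i$ is at the borrowing margin, the unit cost of adding project $\theta$ is at most $(1+\rho)C(\theta)<(1+\rho)C(\theta_2)=\pi_{II}-\pi_{\mathit 0 I}\le \pi_{I\mathit 0}-\pi_{\mathit 0\mathit 0}$, where the last inequality is Assumption~\ref{ass_r}(iv). Hence adding $\theta$ is strictly profitable regardless of $r_j^*(\theta)\in\{0,1\}$. So in any equilibrium, $r_i^*(\theta)=1$ for almost every $\theta<\theta_2$, giving research spending at least $\int_0^{\theta_2}C(\theta)\,d\theta>B$ by Assumption~\ref{ass_budget}. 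Thus each firm strictly exceeds $B$ in equilibrium and the relevant marginal cost of any additional project is $(1+\rho)C(\theta)$.

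\emph{Step 2 (pointwise best responses).} Because borrowing is already locked in, the expected payoff $\mathbb{E}\Pi_i$ is additively separable across $\theta$ with integrand depending only on $(r_i(\theta),r_j(\theta))$. Comparing $r_i(\theta)=1$ with $r_i(\theta)=0$ for fixed $r_j(\theta)$, the net gain from investing equals $\pi_{I\mathit 0}-\pi_{\mathit 0\mathit 0}-(1+\rho)C(\theta)$ when $r_j(\theta)=0$ and $\pi_{II}-\pi_{\mathit 0 I}-(1+\rho)C(\theta)$ when $r_j(\theta)=1$. By the definition of $\theta_1$ and $\theta_2$, these gains are positive precisely when $\theta<\theta_1$ and $\theta<\theta_2$, respectively, and negative on the opposite strict side. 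This yields: for $\theta>\theta_1$ both firms strictly prefer $r(\theta)=0$; for $\theta<\theta_2$ both strictly prefer $r(\theta)=1$ regardless of the rival (corroborating Step~1); and for $\theta\in(\theta_2,\theta_1)$ the best response is the opposite of the rival's action.

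\emph{Step 3 (characterization and completeness).} Combining Step~1 and Step~2 gives $r_i^*(\theta)=r_j^*(\theta)=1$ almost everywhere on $[0,\theta_2)$ and $r_i^*(\theta)=r_j^*(\theta)=0$ almost everywhere on $(\theta_1,1)$, which establishes the double cut-off structure with $\theta_L=\theta_2$ and $\theta_H=\theta_1$. On the intermediate interval $(\theta_2,\theta_1)$, the anti-coordination best-response condition rules out both $(0,0)$ and $(1,1)$, so exactly one firm invests at each such $\theta$. Conversely, any measurable assignment satisfying these conditions is an equilibrium: by construction it prescribes each firm's pointwise best response given the rival, and additive separability implies that the pointwise optimum is a global optimum in $\mathcal R$. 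Part~(ii) follows because every deviation from these prescriptions at a positive-measure set of $\theta$ yields a strict improvement in $\mathbb{E}\Pi_i$, contradicting equilibrium.

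\emph{Main obstacle.} The only delicate point is Step~1: a priori, a firm could choose a low-spending plan that stays within $B$, in which case the marginal cost is $C(\theta)$ rather than $(1+\rho)C(\theta)$, and the cut-offs $\theta_1,\theta_2$ are no longer the relevant thresholds. Assumption~\ref{ass_budget} together with Assumption~\ref{ass_r}(iv) rules this out by forcing $r_i^*\equiv 1$ on $[0,\theta_2)$ in every equilibrium, after which the analysis becomes a clean pointwise comparison.
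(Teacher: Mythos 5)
Your proposal is correct and follows essentially the same route as the paper: the paper's Lemma A.1 (any strategy spending at most $B$ is dominated, via Assumptions \ref{ass_r}(iv) and \ref{ass_budget}) plays exactly the role of your Step 1 in locking both firms into the borrowing regime, after which both arguments reduce to the same pointwise comparison of $(1+\rho)C(\theta)$ against $\pi_{I\mathit{0}}-\pi_{\mathit{0}\mathit{0}}$ and $\pi_{II}-\pi_{\mathit{0}I}$ on the three intervals. Your explicit flagging of the budget kink as the one obstacle to naive additive separability is precisely the point the paper's auxiliary lemma is designed to handle.
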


\begin{figure}
 \centering
        \includegraphics[width=0.5\textwidth]{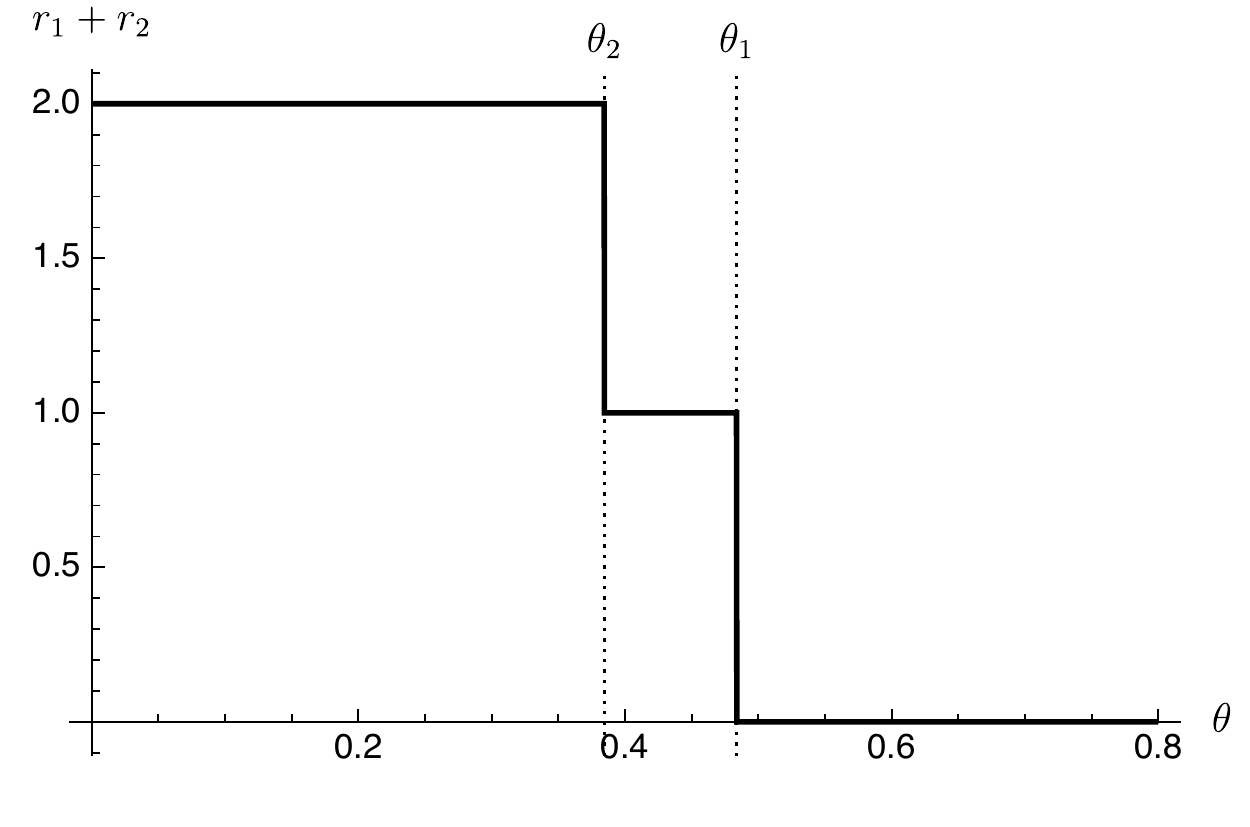} 
        \caption{Industry portfolio of research projects in any equilibrium. 
        } 
        \label{fig:AsyEquilibrium}
\end{figure}

Thus, all equilibria share the double cut-off structure, which is determined by the marginal cost of research projects and the benefit of being successful. Both firms invest in the cheap projects $\theta\in[0,\theta_2)$, while neither firm invests in the expensive projects $\theta\in(\theta_1,1]$. For intermediate projects, the marginal benefits of an innovation are higher than the marginal costs when only one firm finds the innovation,
but not when both firms are successful. Hence, for each $ \theta$ in the interval $(\theta_2, \theta_1)$,  one firm invests while the other does not invest, but the identity of the investing firm is not determined, which leads to the multiplicity of equilibria. However, all equilibria are equivalent --  in the sense that they generate the innovation with the same probability and lead to the same market structure in each state of the world. In any equilibrium, the overall innovation probability is $\theta_1$.  Furthermore, in any equilibrium, the probability of a duopoly with an innovation is $\theta_2$, the probability of a single firm with an innovation is $\theta_1-\theta_2$, while the probability of a duopoly without an innovation is $1-\theta_1$. While not necessary for our results, we can moreover show that in the presence of small cost asymmetries, the equilibrium where the low-cost firm invests in all projects in $[0,\theta_1)$ while the high-cost firm invests in all projects  in $[0,\theta_2)$ risk-dominates any other equilibrium.\footnote{We consider a difference in borrowing costs of $\epsilon>0$. See Appendix \ref{SecAppRisk} for details.}

Note that there is duplication of research efforts in equilibrium, as all projects in the interval $[0,\theta_2)$ are duplicated. 
Figure \ref{fig:AsyEquilibrium} depicts the industry portfolio of research projects in every equilibrium.

The symmetric setting of our model brings out clearly that the asymmetric outcome depicted in the figure exclusively reflects equilibrium considerations rather than exogeneous differences between firms. The value of investing in a particular project depends on the behavior of the competitor. Investing tends to be more worthwhile if the competitor does not invest than if he invests. In the former case, the resulting profit increase is 
given by the value of escaping competition ($\pi_{I\mathit{0}}-\pi_{\mathit{0}\mathit{0}}$), which by Assumption \ref{ass_r}(d) is larger than the value of catching up ($\pi_{II}-\pi_{\mathit{0}I}$), which determines the incentives for investing in projects that the competitor also invests in. The asymmetric investment behavior of firms for intermediate projects directly reflects these differences in incentives.

As the difference between the value of escaping competition and the value of catching up increases (reflecting greater intensity of competition), the area with asymmetric investment becomes larger. An increase in the value of escaping competition increases $ \theta_1$ and thus project variety and the probability of innovation. This tends to lead to more demand for external funding to finance more expensive projects. Conversely, in most standard oligopoly models, the value of catching up decreases with more intense competition, which lowers the amount of duplication and, thus, the need for external funds. Therefore, the overall effect of increased competition on external funding is ambiguous.
 Further, a higher borrowing cost $\rho$ implies a lower probability of finding the innovation and less duplication of effort since $\theta_1$ and $\theta_2$ both decrease in $\rho$. Lastly, by Assumption \ref{ass_budget}, a marginal change in the budget size B does not affect the equilibrium portfolio.

\subsection {Optimal Project Choice of an RJV}\label{SecRJVOptimum}

In our model of RJVs, the firms combine their individual budgets and invest in research together. However, the two firms still compete in the product market after the successful project has been realized.\footnote{This is the main difference to a merger, which will result in a monopolistic market in any case.} Moreover, the research costs are equally shared and both firms obtain the innovation if developed. This eliminates the possibility of an asymmetric product market structure. The firms will compete either with or without innovation. Like an individual firm, the RJV can borrow at the interest rate $\rho$ on the external market if the total budget $2B$ is insufficient. The RJV  chooses the research strategy $r_v$ to maximize the expected total payoff
\begin{align}
\mathbb{E}\Pi_v(r_v)= 2&\int_0^1 \left[ r_v(\theta)\pi_{II} +(1-r_v(\theta))\pi_{\mathit{0}\mathit{0}}\right]d\theta \nonumber \\ & - \int_0^1 r_v(\theta)C(\theta) d\theta 
  -\rho\max\left\{\int_0^1 r_v(\theta)C(\theta) d\theta - 2B,0 \right\}. \label{eq_rjv_profit}   
\end{align}

The optimal strategy will be of the following type.
\begin{definition}
A \textbf{single cut-off strategy} is a research strategy $r_v$ for which a $\theta^* \in [0,1) $ exists such that $r_v(\theta)=1$ if $\theta<\theta^* $ and $r_v(\theta)=0$ if $\theta>\theta^* $.
\end{definition}

Let $\theta^{B}$ be defined as the solution to $\int_0^{\theta^{B}} C(\theta) d\theta=2B$ if $\int_0^1 C(\theta) d\theta>2B$ and $\theta^{B}=1$ otherwise. That is, a joint venture which invests in all projects in the set $(0,\theta^{B})$ either has innovation costs equal to $2B$ or invests in all projects. Next, let $\theta^u$ and $\theta^\rho$ be the solutions to the following equations
\begin{align*}
(1+\rho)C(\theta^\rho)&=2(\pi_{II}-\pi_{\mathit{0}\mathit{0}}) \\
C(\theta^u)&=2(\pi_{II}-\pi_{\mathit{0}\mathit{0}}).
\end{align*}
Thus, $\theta^u$ is the most expensive research project in which an RJV that does not borrow from the capital market wants to invest in. Similarly, $\theta^\rho<\theta^u$ is the most expensive research project in which an RJV that has to borrow would choose to invest in. How $\theta^{B}$ relates to these two values will determine the  optimal portfolio of the RJV.

\begin{lemma}[Investment strategies of an RJV]\label{prop_RJVInvestment} $ $

The RJV chooses a single cut-off strategy with 
\begin{align*}
    \theta^* = \begin{cases} 
    \theta^\rho  \quad &\text{if} \quad \theta^{B} < \theta^\rho \\
     \theta^{B} \quad &\text{if} \quad \theta^{B} \in [\theta^\rho,\theta^u] \\
     \theta^u  \quad &\text{if} \quad \theta^{B}>\theta^u.
    \end{cases}
\end{align*}
\end{lemma}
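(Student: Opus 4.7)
The plan is to reduce the problem in two steps: first show that the optimal strategy must be a single cut-off (so the search reduces to choosing one scalar $\theta^*\in[0,1)$), then do a piecewise optimization in $\theta^*$ that accounts for the kink created by the budget.

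For the first step I would rewrite the objective as
\[
\mathbb{E}\Pi_v(r_v)=2\pi_{\mathit{0}\mathit{0}}+2(\pi_{II}-\pi_{\mathit{0}\mathit{0}})M(r_v)-K(r_v)-\rho\max\{K(r_v)-2B,0\},
\]
where $M(r_v)=\int_0^1 r_v(\theta)\,d\theta$ and $K(r_v)=\int_0^1 r_v(\theta)C(\theta)\,d\theta$. Because the payoff depends on $r_v$ only through $M$ and $K$, and because $K$ is weakly increasing in $\rho\max\{K-2B,0\}$, for any fixed $M$ we want to minimize $K$. Since $C$ is strictly increasing, minimizing $K$ subject to $M(r_v)=m$ means putting mass on the cheapest projects; a standard measure-theoretic swapping argument (if $r_v(\theta')=0$ and $r_v(\theta'')=1$ on sets of positive measure for some $\theta'<\theta''$, swap them and strictly reduce $K$) shows the minimizer is the single-cut-off strategy with cutoff $m$. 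Hence the RJV's problem reduces to choosing $\theta^*\in[0,1)$ to maximize $\Pi(\theta^*):=\mathbb{E}\Pi_v(\mathbbm{1}_{[0,\theta^*)})$.

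The second step is to optimize $\Pi(\theta^*)$. Writing $G(\theta^*)=\int_0^{\theta^*}C(\theta)\,d\theta$, the function $\Pi$ is $C^1$ on $[0,\theta^{B}]$ with derivative $2(\pi_{II}-\pi_{\mathit{0}\mathit{0}})-C(\theta^*)$, and $C^1$ on $[\theta^{B},1)$ with derivative $2(\pi_{II}-\pi_{\mathit{0}\mathit{0}})-(1+\rho)C(\theta^*)$; the two pieces meet continuously at $\theta^{B}$ (where the borrowing term vanishes). By definition of $\theta^u$ and $\theta^\rho$, the derivative on the first piece vanishes at $\theta^u$ and on the second at $\theta^\rho$, and both pieces are strictly concave-like in the sense that the derivative is strictly decreasing (since $C$ is strictly increasing). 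A quick case distinction then yields the three subcases: if $\theta^{B}<\theta^\rho$, the derivative is positive throughout the first piece and also at the start of the second piece, so the optimum is $\theta^\rho$; if $\theta^{B}>\theta^u$, the interior maximizer $\theta^u$ of the first piece is feasible and the second piece is everywhere dominated since $\theta^\rho<\theta^u<\theta^{B}$; if $\theta^\rho\le\theta^{B}\le\theta^u$, the first-piece derivative is non-negative up to $\theta^{B}$ while the second-piece derivative is non-positive from $\theta^{B}$ onward, so the kink point $\theta^{B}$ is the maximizer.

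The main obstacle is bookkeeping around the kink at $\theta^{B}$, since $\Pi$ is not differentiable there: I need to verify continuity at $\theta^{B}$ and to read off the correct one-sided derivatives on each side, using Assumption~\ref{ass_budget} and the definitions of $\theta^u,\theta^\rho,\theta^{B}$ consistently. Once that is handled, the comparison of signs of the one-sided derivatives is immediate, and the three formulas in the statement follow directly.
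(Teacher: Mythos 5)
Your proposal is correct and follows essentially the same route as the paper: reduce to single cut-off strategies by noting the payoff depends on $r_v$ only through the innovation probability and the cost (so for a fixed probability one minimizes cost by investing in the cheapest projects), then compare the signs of the one-sided derivatives $2(\pi_{II}-\pi_{\mathit{0}\mathit{0}})-C(\theta^*)$ and $2(\pi_{II}-\pi_{\mathit{0}\mathit{0}})-(1+\rho)C(\theta^*)$ on either side of the kink at $\theta^{B}$ in the three cases. The only difference is that you spell out the measure-theoretic swapping argument that the paper leaves implicit.
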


Thus, the cut-off project always lies in the interval $[\theta^\rho,\theta^u]$. Which of the three cases in the lemma arises depends on the budget $B$, the interest rate $\rho$, on product market profits and on the cost function. If $\theta^{B}<\theta^\rho$, then the joint venture invests its entire budget $2B$ into research and, in addition, it borrows from the capital market in order to finance its research activities. In contrast with a marginal change in the cost of borrowing $\rho$, a marginal increase in the budget would not affect the investment strategy. When $\theta^{B}\in (\theta^\rho, \theta^u)$, the RJV invests the entire budget, but it does not borrow. Thus, a marginal increase in the budget would lead to an increase in investment, whereas a marginal change in $\rho$ would have no effect. Finally, when $\theta^{B} > \theta^u$, the RJV does not borrow and furthermore only invests a portion of its budget into research. Hence, neither marginal changes in $B$ nor in $\rho$ would change investment behavior, which is fully determined by product market conditions.

Note that in standard oligopoly models, the expression $\pi_{II}-\pi_{\mathit{0}\mathit{0}}$ is decreasing in standard parameterizations of the intensity of competition.\footnote{For instance, this is the case in our two examples with linear Cournot competition and differentiated price competition in Section \ref{sec:example}.} This implies that the critical cut-off projects $\theta^\rho$ and $\theta^u$ are larger when product market competition is softer. Thus, unless it is optimal to just invest the entire budget ($\theta^*=\theta^{B}$), the RJV uses more funding when competition becomes softer. 
This is in line with the findings of \cite{kamien1992research} and \cite{goyal2001r} that softer product market competition increases incentives to cooperate and leads to higher research efforts.

\subsection{R\&D Competition vs. R\&D Cooperation}\label{SecComparison}

Next, we present our central result that deals with the effect of the RJV on the probability that an innovation will be discovered. Define the interest threshold  $\bar{\rho}$ and the budget threshold $\bar{B}(\rho)$ as 
\begin{align*}
    \bar{\rho} & =
    \begin{cases}
    \dfrac{\pi_{I\mathit{0}}-\pi_{II}-(\pi_{II}-\pi_{\mathit{0}\mathit{0}})}{2(\pi_{II}-\pi_{\mathit{0}\mathit{0}})}, & \text{for } \pi_{II} > \pi_{\mathit{0}\mathit{0}} \\
    \infty, & \text{for } \pi_{II} = \pi_{\mathit{0}\mathit{0}}.
  \end{cases}\\
  \bar{B}(\rho) & = \dfrac{\int_0^{\theta_1} C(\theta)d\theta}{2} 
\end{align*}

The budget threshold depends negatively on $\rho$ because $\theta_1$ does.
The thresholds play a critical role for the effects of an RJV on innovation.

\begin{proposition}[Comparison of R\&D competition and RJV]\label{prop_RJVEffect} $ $
\begin{enumerate}[label=(\roman*)]
\item Suppose competition is soft.
Then the innovation probability is strictly larger under the RJV than under R\&D competition.
\item Suppose competition is moderate or intense. 
Then: \newline (a) The innovation probability is strictly larger under the RJV than in any equilibrium under competition if and only if $B > \bar{B}(\rho)$ and $\rho > \bar{\rho}$.\newline
(b) If the formation of the RJV strictly increases the innovation probability, then it weakly decreases total R\&D spending.
\end{enumerate}
\end{proposition}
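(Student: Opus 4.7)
The plan is to reduce each part of the proposition to algebraic comparisons of the cut-offs $\theta_1$ (from Lemma \ref{prop_equilibrium}) and $\theta^*$ (from Lemma \ref{prop_RJVInvestment}). Since $C$ is strictly increasing, any comparison of cut-offs is equivalent to the corresponding comparison of the right-hand sides of their defining equations $(1+\rho)C(\theta_1)=\pi_{I0}-\pi_{00}$, $(1+\rho)C(\theta^\rho)=2(\pi_{II}-\pi_{00})$, $C(\theta^u)=2(\pi_{II}-\pi_{00})$, and $\int_0^{\theta^B}C=2B$. A first convenient step is to rewrite Lemma \ref{prop_RJVInvestment} in the compact form $\theta^*=\max\{\theta^\rho,\min\{\theta^B,\theta^u\}\}$, which I would verify by walking through the three branches of the lemma and using $\theta^\rho<\theta^u$ (immediate from $\rho>0$).

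For part (i), I would observe that softness of competition, $\pi_{I0}-\pi_{II}<\pi_{II}-\pi_{00}$, rearranges directly to $\pi_{I0}-\pi_{00}<2(\pi_{II}-\pi_{00})$, i.e.\ $(1+\rho)C(\theta_1)<(1+\rho)C(\theta^\rho)$. Hence $\theta_1<\theta^\rho\le\theta^*$, which is the claim.

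For part (ii)(a), the reverse weak inequality $\theta_1\ge\theta^\rho$ holds under moderate or intense competition, so the borrowing branch of Lemma \ref{prop_RJVInvestment} can never deliver $\theta^*>\theta_1$. Using the representation above, $\theta^*>\theta_1$ is therefore equivalent to $\min\{\theta^B,\theta^u\}>\theta_1$, which decouples into the two conditions $\theta^B>\theta_1$ and $\theta^u>\theta_1$. The first is, by definition of $\theta^B$, the same as $2B>\int_0^{\theta_1}C(\theta)d\theta$, i.e.\ $B>\bar B(\rho)$. The second becomes $2(1+\rho)(\pi_{II}-\pi_{00})>\pi_{I0}-\pi_{00}$, which I would rearrange to $\rho>\bar\rho$ (the case $\pi_{II}=\pi_{00}$ corresponds to $\bar\rho=\infty$, i.e.\ the inequality is unreachable, consistent with $\theta^u=0$). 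I expect the main book-keeping obstacle to be verifying that the compact form of Lemma \ref{prop_RJVInvestment} is faithful to all three branches; once that is done, the equivalence is immediate.

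For part (ii)(b), I would compute industry R\&D spending in the two regimes. Under R\&D competition, in any equilibrium (Lemma \ref{prop_equilibrium}) the industry portfolio duplicates projects on $[0,\theta_2)$ and covers $(\theta_2,\theta_1)$ once, yielding total spending $\int_0^{\theta_2}C(\theta)d\theta+\int_0^{\theta_1}C(\theta)d\theta$. Under the conditions of (ii)(a), the RJV is in Case 2 or Case 3 of Lemma \ref{prop_RJVInvestment}, so its spending equals $2B$ (Case 2) or $\int_0^{\theta^u}C<\int_0^{\theta^B}C=2B$ (Case 3). In both cases RJV spending is at most $2B$, while Assumption \ref{ass_budget} gives $2B<2\int_0^{\theta_2}C\le \int_0^{\theta_2}C+\int_0^{\theta_1}C$ (using $\theta_2\le\theta_1$), which delivers the (in fact strict) decrease in total R\&D spending.
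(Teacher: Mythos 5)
Your proposal is correct and follows essentially the same route as the paper: part (i) via the equivalence ``soft competition $\Leftrightarrow \theta^\rho>\theta_1$'' (the paper's Lemma A.2), part (ii)(a) via the equivalences $\theta^B>\theta_1\Leftrightarrow B>\bar B(\rho)$ and $\theta^u>\theta_1\Leftrightarrow\rho>\bar\rho$ (the paper's Lemmas A.3--A.5, which you compress neatly into the representation $\theta^*=\max\{\theta^\rho,\min\{\theta^B,\theta^u\}\}$), and part (ii)(b) via the observation that the RJV spends at most $2B$ while Assumption \ref{ass_budget} forces industry spending under competition above $2B$. The only cosmetic difference is that the paper's cost comparison in (ii)(b) includes the interest on borrowed funds, whereas you compare raw R\&D outlays; the inequality holds a fortiori either way.
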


The result reflects the subtle interplay between product market competition and financing conditions. In a model with R\&D project choice, an RJV results in efficiency gains at the investment stage --  it reduces the amount of duplication of research projects. This allows the RJV to ``cast a wider net,'' as the funds that were previously used to finance duplicate research projects can now be redirected to other projects. This duplication reduction effect of the RJV makes it less costly to sustain high innovation probabilities. However, a potential countervailing effect needs to be taken into account: 
Escaping competition can be very valuable for each individual firm. Thus, compared with an RJV, incentives for innovation may be higher for a firm that can fully appropriate the benefits from innovation as the single successful innovator under R\&D competition. If competition is soft, i.e., (i) holds, then this countervailing effect has no bite, as joint profits in an RJV are high enough that the innovation probability will be higher than under R\&D competition. As we will see in Proposition \ref{PropCompSpill}, this result does not even require the existence of financial constraints.

By contrast, Proposition \ref{prop_RJVEffect}(ii) deals with the case that product market competition is moderate or intense.
Then additional requirements are necessary for an RJV to increase innovation. Together, the condition that $\rho > \bar{\rho} $ and $B >\bar{B}(\rho)$  guarantee that the RJV will invest in more projects than both firms would in \emph{any} equilibrium without the RJV, even though product market competition is not soft.\footnote{Note that there is a tension between Assumption \ref{ass_budget} which demands that the budget is not too high and the condition in Proposition \ref{prop_RJVEffect}(ii) that $B >\bar{B}(\rho)$. The Cournot example in Section \ref{sec:example} shows that the conditions can nevertheless be satisfied together for non-degenerate parameter regions.} The advantages of the RJV in this setting come from the ability to avoid duplication and thereby finance a wider range of projects internally, thus avoiding the necessity to borrow from the capital markets. This is illustrated in Figure \ref{fig:Merged}. When either $B \leq \bar{B}(\rho)$ or $\rho \leq \bar{\rho}$, so that the conditions in (iia) are not satisfied, then RJVs (weakly) decrease the innovation probability. 

\begin{figure}
 \centering
        \includegraphics[width=0.75\textwidth]{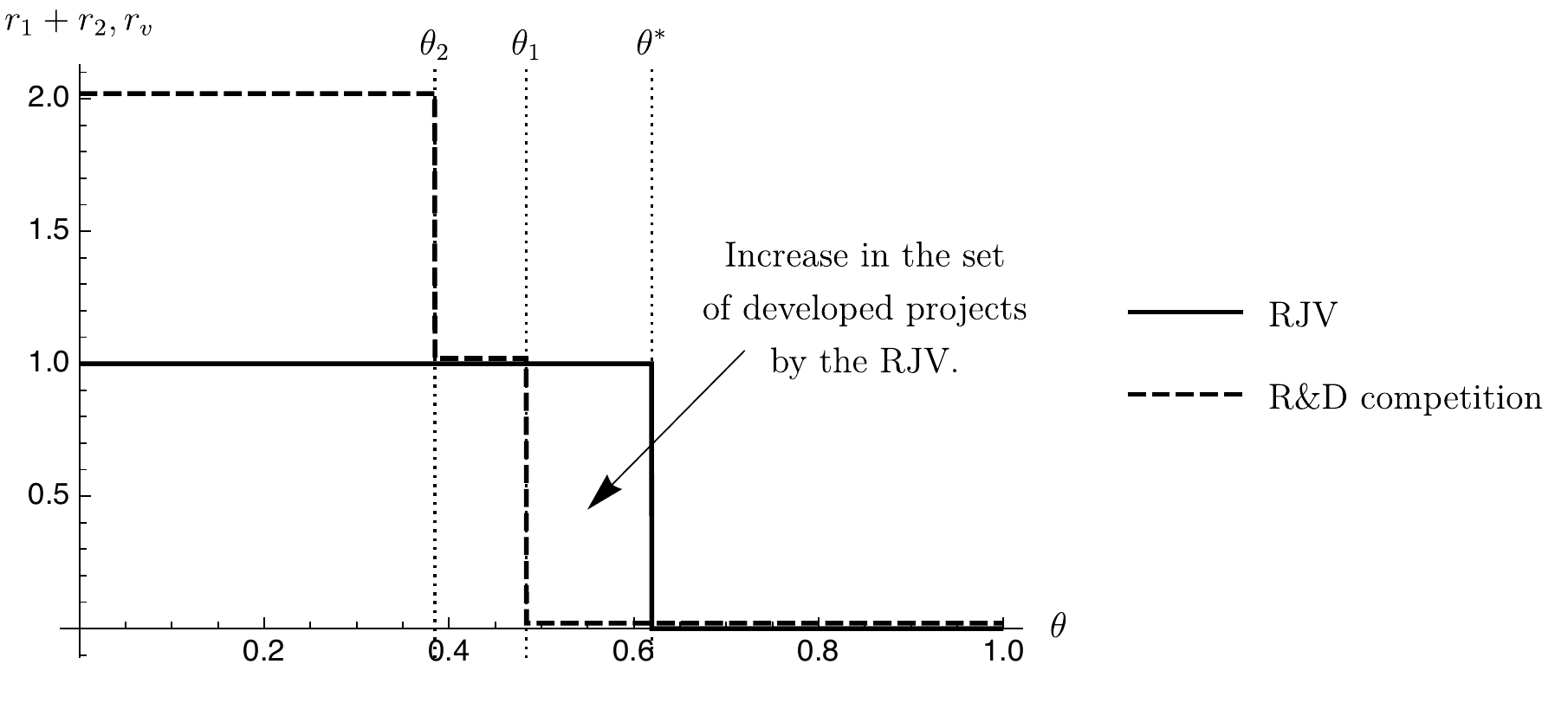}
        \caption{An example of an RJV increasing the set of developed projects.}  \label{fig:Merged}
\end{figure}

Result (iib) deserves particular emphasis. It is common in the innovation literature to use the overall amount of R\&D spending as a measure of the probability that an innovation will be discovered. Usually, a policy is said to promote innovation if it leads to more R\&D spending. The result demonstrates that this approach can be misleading: When competition is not soft and financial constraints are severe, R\&D competition leads to both higher R\&D costs and a lower innovation probability than an RJV. 
Intuitively, in any equilibrium under R\&D competition, both firms invest more than their available budget in R\&D. Therefore, the marginal R\&D project that they are willing to invest in has to be sufficiently profitable, so that incurring the higher marginal cost of borrowed funds is justified. However, whenever the conditions of Proposition \ref{prop_RJVEffect}(iia) are satisfied, an RJV optimally invests weakly less than its total budget. In spite of this reduction in R\&D costs, the probability of innovation increases as the reduction in investment corresponds to avoided duplication rather than reductions in project variety. By contrast, in case (i), we cannot rule out that the RJV spends more on R\&D than the firms under R\&D competition: While the RJV can achieve the same innovation probability as the competitive firms with lower costs, it also faces stronger investment incentives.

Our model assumes (conservatively) that the borrowing costs $\rho$ do not change if firms form an RJV. There are several reasons why an RJV should be able to borrow at a lower cost than an individual firm engaged in competition with a rival. For example, the fact that the probability of discovering the innovation goes up, or that the profit variance goes down, both suggest that an RJV should be able to borrow at a lower cost. If this were the case, the set of parameters for which an RJV increases innovation would become larger.

\subsection{Consumer Welfare}\label{consumer_welfare}

While supporting inventiveness of an industry can be a worthy goal in itself, competition policy often emphasizes consumer surplus. As our next result shows, the two are aligned under very mild conditions. 

The model we have introduced so far does not specify the impact of innovation on consumers at all. However, a natural intuition is that consumers benefit from innovations. The following assumption formalizes this intuition. Denote consumer surplus resulting from market competition when both firms have innovated with $CS_{II}$, when neither firm has innovated with $CS_{\mathit{0}\mathit{0}}$ and with $CS_{I\mathit{0}}$ when only one firm has innovated.

\begin{assumption}\label{ass_CS}
Consumers benefit from innovation: $CS_{II} >  CS_{\mathit{0}\mathit{0}}$ and $CS_{II} > CS_{I\mathit{0}}$.
\end{assumption}

When innovations are aimed at developing better products or lowering production costs, we can expect that some of the benefits will be passed on to consumers, so that Assumption \ref{ass_CS} will hold. With the addition of this assumption, we can show the following result.

\begin{proposition}[Effect of RJVs on consumer surplus]\label{Prop_CS} $ $ \newline
If an RJV strictly increases the innovation probability, then it also strictly increases expected consumer surplus.
\end{proposition}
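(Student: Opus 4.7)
The plan is straightforward: compute the expected consumer surplus under both regimes, take the difference, and find an algebraic decomposition that is manifestly positive under the hypothesis.

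By Lemma \ref{prop_equilibrium}, in any equilibrium under R\&D competition the outcomes ``both firms have innovated,'' ``exactly one firm has innovated,'' and ``neither firm has innovated'' occur with probabilities $\theta_2$, $\theta_1 - \theta_2$, and $1 - \theta_1$, respectively. Under an RJV, by Lemma \ref{prop_RJVInvestment} both firms share the innovation with probability $\theta^*$ and neither has it with probability $1-\theta^*$. Hence
\begin{align*}
\mathbb{E}[CS_{\text{comp}}] &= \theta_2 \, CS_{II} + (\theta_1 - \theta_2)\, CS_{I\mathit{0}} + (1-\theta_1)\, CS_{\mathit{0}\mathit{0}}, \\
\mathbb{E}[CS_{\text{RJV}}]  &= \theta^* \, CS_{II} + (1-\theta^*)\, CS_{\mathit{0}\mathit{0}}.
\end{align*}

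The key step is to rewrite the difference by adding and subtracting the hypothetical surplus the RJV would produce if it only achieved probability $\theta_1$. A short rearrangement (splitting $\theta^* - \theta_2$ as $(\theta^* - \theta_1) + (\theta_1 - \theta_2)$) yields
\[
\mathbb{E}[CS_{\text{RJV}}] - \mathbb{E}[CS_{\text{comp}}] \;=\; (\theta^* - \theta_1)\bigl(CS_{II} - CS_{\mathit{0}\mathit{0}}\bigr) \;+\; (\theta_1 - \theta_2)\bigl(CS_{II} - CS_{I\mathit{0}}\bigr).
\]
The first term is strictly positive because the hypothesis gives $\theta^* > \theta_1$ and Assumption \ref{ass_CS} gives $CS_{II} > CS_{\mathit{0}\mathit{0}}$. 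The second term is non-negative since $\theta_1 \geq \theta_2$ (which follows from Assumption \ref{ass_r}(iv)) and $CS_{II} > CS_{I\mathit{0}}$. Therefore the difference is strictly positive, which is the claim.

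There is no real obstacle here; the content of the result is the right decomposition, which separates the two distinct channels by which the RJV helps consumers: (a) it raises the probability that the innovation is discovered at all, and (b) conditional on discovery, it ensures that \emph{both} firms produce with the new technology rather than just one. Both channels benefit consumers under Assumption \ref{ass_CS}, and at least channel (a) operates strictly whenever the hypothesis of the proposition holds.
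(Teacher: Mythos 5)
Your proof is correct and is essentially the paper's argument in explicit algebraic form: the paper's two inequalities (replacing $CS_{I\mathit{0}}$ by $CS_{II}$ for the single-innovator event, then using $\mathcal{P}^{rjv}_{II} > \mathcal{P}^{com}_{II}+\mathcal{P}^{com}_{I\mathit{0}}$ together with $CS_{II}>CS_{\mathit{0}\mathit{0}}$) correspond exactly to the two terms of your decomposition. Nothing is missing.
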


Formation of an RJV affects consumer surplus in two ways: (i) it changes the probability that the innovation is discovered and (ii) it changes the diffusion of innovation among the competing firms. By Assumption \ref{ass_CS}, consumers benefit from both a higher innovation probability and more diffusion of innovation. Since RJVs always facilitate the diffusion of innovation (because whenever the RJV innovates, both firms can use the resulting innovation), then an RJV that increases the innovation probability clearly leads to higher consumer surplus. 

Together with Proposition \ref{prop_RJVEffect}, this result gives simple conditions for an RJV to increase expected consumer surplus. It should be noted that these conditions are sufficient, but not necessary -- because RJVs always increase the diffusion of innovation, it is possible for an RJV that slightly decreases the innovation probability to lead to a higher expected consumer surplus. 

\subsection{Profitability of RJVs}\label{Profitability}

So far, we have analyzed how the formation of an RJV affects
innovation probability and consumer welfare. 
We have not yet asked whether it is in the firms' interest to agree on an RJV. In the following, we will deal with this issue.
We ask under which conditions joint profits are higher under an RJV than under R\&D competition. If this requirement is not fulfilled, then at least one of the firms would not consent to an RJV. By contrast, if the RJV does increase joint profits and profits are symmetric under R\&D competition, then the RJV will result in a Pareto improvement from the perspective of the two firms.\footnote{A sufficient condition for equal profits under R\&D competition to emerge is that the firms coordinate on an equilibrium where they innovate with an equal probability and where their research costs are equal.} Even when profits are not symmetric ex ante, an RJV that increases joint profits could always be turned into a Pareto improvement using suitable transfers.

Then using Lemmas \ref{prop_equilibrium} and \ref{prop_RJVInvestment}, we
find that net profits with an RJV are at least as high as under competition
if and only if%
\begin{gather}
2\theta^{\ast}\pi_{II}  +2\left(  1-\theta^{\ast}\right)
\pi_{\mathit{0}\mathit{0}}  - \gamma^{rjv}  \geq
\label{WeakIC}\\
2\theta_{2}\pi_{II}  +\left(  \theta_{1}-\theta_{2}\right)
\left(  \pi_{I\mathit{0}}  +\pi_{\mathit{0}I}  \right)  +2\left(
1-\theta_{1}\right)  \pi_{\mathit{0}\mathit{0}} 
-2\gamma^{com}  \text{,}\nonumber
\end{gather}
where $\gamma^{rjv}$ and $\gamma^{com}$ capture total research cost (including the costs of external financing) incurred by the RJV and a single firm under competition, respectively.\footnote{Using Lemma \ref{prop_RJVInvestment}, $\theta^{\ast}$ can be expressed in terms of ($\theta^{B}$, $\theta^{u}$ and $\theta^{\rho}$), which, in turn, can be expressed in terms of fundamentals.}
In the following, we\ will shed more light on this condition by identifying
transparent (sufficient)\ conditions on primitives under which it holds. Define
\begin{align*}
    \Psi =
    \begin{cases}
    \dfrac{\pi_{I\mathit{0}}+\pi_{\mathit{0}I}-2\pi_{II}}{2(\pi_{II}-\pi_{\mathit{0}\mathit{0}})} & \text{for } \pi_{II} > \pi_{\mathit{0}\mathit{0}} \\
    \infty, & \text{for } \pi_{II} = \pi_{\mathit{0}\mathit{0}},
  \end{cases}
\end{align*}
and note that whenever competition is intense, $\Psi>0$.

\begin{proposition}[Profitable innovation-enhancing RJV]\label{PropICNew} $ $ 
\begin{enumerate}[label=(\roman*)]
\item If competition is soft, any RJV strictly increases net profits (as well as the innovation probability).
\item If competition is moderate, any RJV that increases the innovation probability also increases net profits. The converse statement does not hold.
\item If competition is intense and an RJV strictly increases the innovation probability, it increases net profits if, in addition, $\frac{\min\{\theta^{B},\theta^{u}\}-\theta_{1}}{\theta_{1}-\theta_{2}}>\Psi$.
\end{enumerate}
\end{proposition}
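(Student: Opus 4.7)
My plan is to reduce the proposition to the sign of a single expression. Using Lemmas \ref{prop_equilibrium} and \ref{prop_RJVInvestment} to substitute for the equilibrium and RJV portfolios in (\ref{WeakIC}), the RJV--competition profit gap rearranges into
\begin{equation*}
2(\pi_{II}-\pi_{\mathit{0}\mathit{0}})(\theta^{*}-\theta_{1}) \;-\; (\theta_{1}-\theta_{2})\bigl(\pi_{I\mathit{0}}+\pi_{\mathit{0}I}-2\pi_{II}\bigr) \;+\; \bigl(2\gamma^{com}-\gamma^{rjv}\bigr),
\end{equation*}
which I read as an innovation-probability gain, a lost ``escape competition'' value, and the cost saving from eliminating duplication. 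Parts (i)--(iii) are then instances of signing this sum.

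For the soft and moderate regimes I would exploit revealed preference: by Lemma \ref{prop_RJVInvestment}, $\theta^{*}$ weakly dominates any other single-cutoff strategy, so it suffices to show strict positivity of the gap \emph{evaluated at} $\theta=\theta_{1}$. At that value the innovation term drops out. Under soft competition, combining the definition with Assumption \ref{ass_r}(iii) yields $2\pi_{II}>\pi_{I\mathit{0}}+\pi_{\mathit{0}I}$; under moderate competition, $2\pi_{II}\geq\pi_{I\mathit{0}}+\pi_{\mathit{0}I}$ is the middle inequality of the definition. Either way the escape term is non-negative. The cost term I would handle in two sub-cases according to whether the RJV at cut-off $\theta_{1}$ needs to borrow: in both, substituting $2\gamma^{com}=(1+\rho)(2A+D)-2\rho B$ with $A=\int_{0}^{\theta_{2}}C$ and $D=\int_{\theta_{2}}^{\theta_{1}}C$ and invoking Assumption \ref{ass_budget} ($A>B$) makes the cost saving strictly positive. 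This finishes (i), and -- combined with Proposition \ref{prop_RJVEffect}(ii)(a), which under its hypotheses forces $\theta^{*}>\theta_{1}$ -- it also finishes the forward direction of (ii).

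The intense case is the main challenge, because $2\pi_{II}<\pi_{I\mathit{0}}+\pi_{\mathit{0}I}$ turns the escape-competition term strictly negative and the revealed-preference bound at $\theta_{1}$ is no longer sharp enough. Here I would evaluate the identity at $\theta^{*}$ itself. Under the stated hypotheses, Proposition \ref{prop_RJVEffect}(ii)(a) gives $\theta^{B},\theta^{u}>\theta_{1}$, and intense competition forces $\theta^{\rho}<\theta_{1}$ (a short comparison of the defining equations for $\theta^{\rho}$ and $\theta_{1}$), so Lemma \ref{prop_RJVInvestment} yields $\theta^{*}=\min\{\theta^{B},\theta^{u}\}$ and, in particular, $\int_{0}^{\theta^{*}}C\leq 2B$. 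Recombining the first two terms as $2(\pi_{II}-\pi_{\mathit{0}\mathit{0}})\bigl[(\theta^{*}-\theta_{1})-\Psi(\theta_{1}-\theta_{2})\bigr]$ makes the condition in (iii) the exact requirement for their sum to be positive. For the remaining cost term the RJV does not borrow, so $\gamma^{rjv}=A+D+\int_{\theta_{1}}^{\theta^{*}}C$, and the constraint $\int_{0}^{\theta^{*}}C\leq 2B$ bounds $\int_{\theta_{1}}^{\theta^{*}}C\leq 2B-A-D$, which after simplification yields $2\gamma^{com}-\gamma^{rjv}\geq(1+\rho)\bigl(2(A-B)+D\bigr)>0$.

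The main obstacle I anticipate is keeping the cost bookkeeping consistent across the several sub-cases (does the RJV at $\theta_{1}$ borrow? does $\theta^{*}$ attain $\theta^{B}$ or $\theta^{u}$?); the universal lever is Assumption \ref{ass_budget}, which ensures $A>B$ and is what makes every cost comparison resolve in the right direction. Finally, for the ``converse does not hold'' clause in (ii), I would exhibit an instance of moderate competition with $\rho\leq\bar{\rho}$, so that $\theta^{*}\leq\theta_{1}$ by Proposition \ref{prop_RJVEffect}(ii)(a): the saved duplication cost alone then makes (\ref{WeakIC}) strict, and the Cournot example of Section \ref{sec:example} provides a ready-made parameterization.
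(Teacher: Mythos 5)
Your proof is correct and rests on the same decomposition as the paper's: the gross-profit gap $2(\pi_{II}-\pi_{\mathit{0}\mathit{0}})(\theta^{*}-\theta_{1})-(\theta_{1}-\theta_{2})(\pi_{I\mathit{0}}+\pi_{\mathit{0}I}-2\pi_{II})$ plus the cost saving, with part (iii) handled essentially as in the paper (the condition on $\Psi$ signs the gross gap; $\theta^{B},\theta^{u}>\theta_{1}>\theta^{\rho}$ pins down $\theta^{*}=\min\{\theta^{B},\theta^{u}\}$; and $\int_0^{\theta^{*}}C\leq 2B$ together with Assumption \ref{ass_budget} signs the cost term). You depart from the paper in two places. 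First, for (i) and (ii) you apply revealed preference at the cutoff $\theta_{1}$ uniformly, whereas the paper does this only in the sub-case $\theta^{\rho}>\theta^{B}$ of (i) and otherwise argues via ``$\theta^{*}>\theta_{1}$ and costs do not rise.'' Your route is slightly cleaner and in fact establishes the stronger claim that under soft or moderate competition \emph{every} RJV raises net profits, whether or not it raises the innovation probability -- which immediately delivers the failure of the converse in (ii). Second, for that converse the paper constructs a perturbation at the soft/moderate boundary (Proposition \ref{PropICBad}), whereas you combine your stronger profitability claim with any moderate-competition parameterization violating the thresholds of Proposition \ref{prop_RJVEffect}(ii)(a), e.g.\ the white region of the Cournot example; both are valid. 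One small caution: your phrase ``the saved duplication cost alone then makes (\ref{WeakIC}) strict'' is loose, since when $\theta^{*}<\theta_{1}$ the gross-profit gap evaluated at $\theta^{*}$ can be negative -- the counterexample really runs through the revealed-preference bound at $\theta_{1}$ that you established earlier, not through the identity at $\theta^{*}$.
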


The distinction between the three cases reiterates the importance of the intensity of competition. In case (i), competition is soft, so that part (i)\ of Proposition \ref{prop_equilibrium} applies -- the RJV increases innovation. According to Proposition \ref{PropICNew}, in this case, the firms' incentives for RJV formation are fully aligned with the goal of increasing the innovation probability. 
Proposition \ref{PropICNew}(ii) shows that, in the part of the region with moderate competition where the RJV increases the innovation probability (see Proposition \ref{prop_equilibrium}(ii)), the RJV is incentive-compatible. However, with moderate competition, we cannot rule out the case that firms engage in RJVs even when they reduce the probability of innovation: In Proposition \ref{PropICBad} in the appendix, we show that this case arises close to soft competition region. In that proposition, we provide a condition where an RJV would reduce innovation slightly, but without major adverse effects on gross profits. The cost-reducing effect of an RJV will then suffice to make it profitable. 
This result shows that the concern in the European Union that RJVs may have an adverse effect on the probability of innovation may not be entirely unfounded.

Finally, Part (iii) applies when competition is intense. Contrary to soft and moderate competition, the conditions guaranteeing that an RJV increases innovation by Proposition \ref{prop_RJVEffect} do not guarantee incentive compatibility: The additional condition in Proposition \ref{PropICNew}(iii) limits the intensity of competition as captured by $\Psi$.\footnote{Note that $\Psi$ is high if the value of avoiding competition is high relative to the value of catching up.} For instance, it does \textit{not} hold with homogeneous Bertrand competition. It also requires that the budget of the RJV is sufficiently large.\footnote{At the boundary between the intense and moderate competition regime, $\Psi=0$. Thus, the second condition in (iii) reduces to $\theta^B>\theta_1$ and $\theta^u>\theta_1$, which is equivalent to the conditions $B>\bar{B}(\rho)$ and $\rho>\overline{\rho}$ in (ii).}

In most cases, the conditions in Proposition \ref{PropICNew} also guarantee that the RJV does not spend more than its total budget, so that, by Assumption 2, it does not increase total expenditures. An exception arises in the subcase of (i) where the budget is sufficiently low that $\theta^B<\theta^{\rho}$: In this case, the RJV may spend more ($\theta^{\ast}=\theta^\rho$) than the two firms would have spent under R\&D competition. Spending the same amount as before would have reduced costs without affecting innovation and thus would have already been profitable. The fact that the RJV chooses to spend more thus means that this is profitable, despite the increase in R\&D costs.

An immediate corollary of Propositions \ref{Prop_CS} and \ref{PropICNew} is that an RJV increases total welfare whenever conditions $(i)$--$(iii)$ of Proposition \ref{PropICNew} are satisfied. The reason for this is that such an RJV increases the innovation probability, so that, by Proposition \ref{Prop_CS}, it increases consumer welfare and by Proposition \ref{PropICNew}, it increases net profits, therefore increasing total welfare.

\subsection{Examples}\label{sec:example}

In this subsection, we illustrate the general analysis with two standard oligopoly models. For the first one, homogeneous linear Cournot competition, competition is moderate or intense, so that Proposition \ref{prop_RJVEffect}(ii) always applies and financial constraints are necessary for the innovation probability to be higher with an RJV than without.
In the second example, differentiated price competition, competition can be soft. When this is the case, Proposition \ref{prop_RJVEffect}(i) applies and the RJV always increases the innovation probability. In each case, we only sketch the analysis; more details are in Appendix \ref{SecAppExamples}. 

\subsubsection{Cournot Competition}\label{SecCournot}

Suppose that two firms are choosing quantities $q_1$ and $q_2$, with $Q=q_1+q_2$. Assuming an interior solution, the market price is given by $P(Q) = a - b Q$. Each firm can produce the good with some constant marginal cost $c$. The firms can invest in a potential process innovation that reduces the marginal cost of production to $c-I$ for some $I>0$. Denoting $\alpha=a-c$, we assume for simplicity that $\alpha>I$, which guarantees that innovations are non-drastic.
Calculating standard Cournot profits when firms have marginal costs $c$ or $c-I$ yields the reduced form profits  $\pi_{t_it_j}$ and it is straightforward to verify that they satisfy Assumption \ref{ass_r}.
In fact, the stricter condition that competition is not soft, as required by Proposition \ref{prop_RJVEffect}(ii), holds for all parameter values.

\begin{figure}[h]
 \centering
      \includegraphics [width=0.8\textwidth]{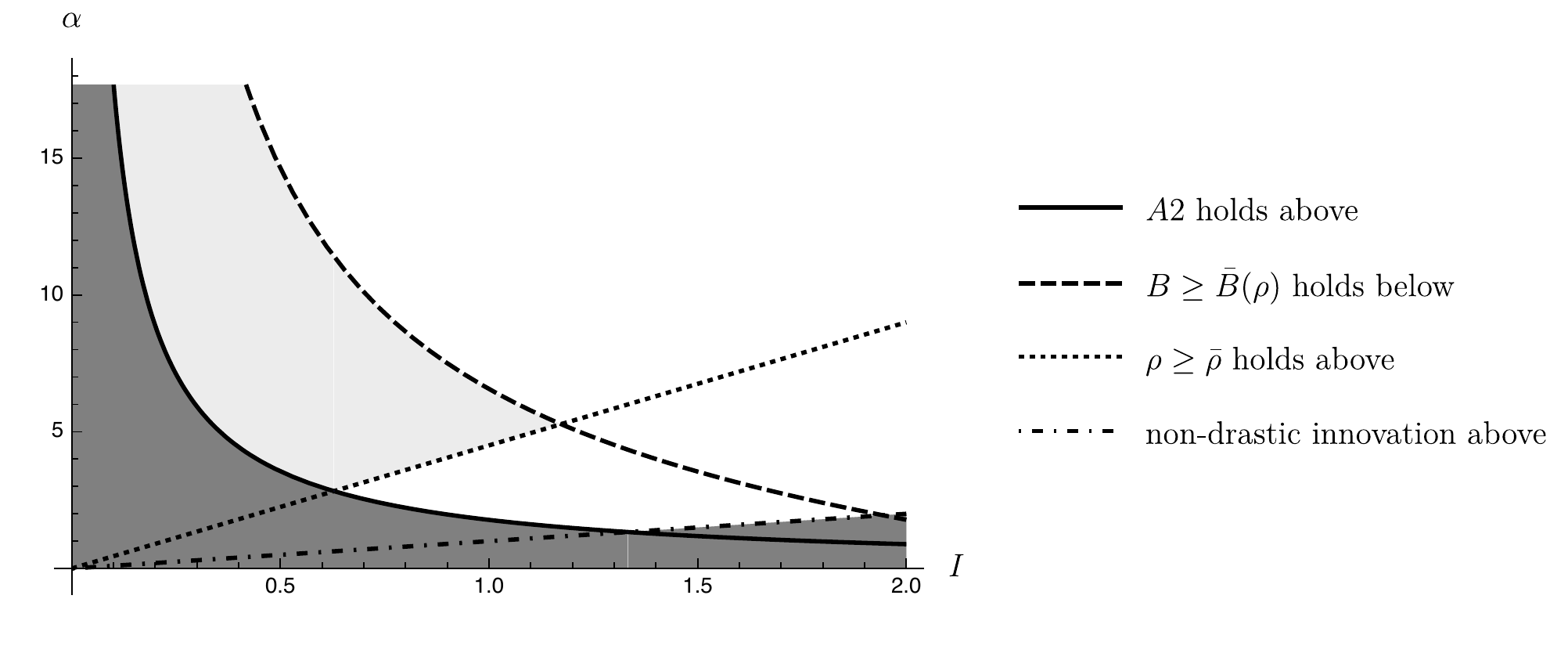}
   \caption{Comparison of R\&D competition and RJV in a Cournot example with inverse demand $P(Q)=a-bQ$, constant marginal costs $c$, $B=0.01$, $\rho=0.1$ and $C(\theta)= \frac{\theta}{1-\theta^2}$. Axes depict cost reduction $I$ and $\alpha = a-c$.}  \label{fig:CournotSpill}
\end{figure}

Thus, after calculating $\overline{\rho}=\frac{I}{2\alpha+I}$, we directly obtain:

\begin{corollary}\label{CorCournotNS}
In the linear Cournot model, the innovation
probability is strictly higher with an RJV than with R\&D competition if and only if $\rho > \bar{\rho}= \frac{I}
{2\alpha+I}$ and $B > \bar{B}(\rho)$. If these conditions both hold, then the
total R\&D expenditures of the RJV are lower than those under R\&D competition.
\end{corollary}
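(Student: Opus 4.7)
The plan is to deduce the corollary directly from Proposition~\ref{prop_RJVEffect}(ii), after (a) verifying that the Cournot reduced-form profits satisfy Assumption~\ref{ass_r}, (b) showing that competition in this linear Cournot setup is never soft, so that part~(ii) rather than part~(i) of Proposition~\ref{prop_RJVEffect} is the relevant case, and (c) computing $\bar\rho$ explicitly in terms of $\alpha$ and $I$. The second sentence of the corollary is then immediate from Proposition~\ref{prop_RJVEffect}(ii)(b).

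For steps (a)--(b), I would substitute marginal costs $c$ (no innovation) and $c-I$ (innovation) into the standard linear Cournot formula $\pi_i=(a-2c_i+c_j)^2/(9b)$ to obtain
\[
\pi_{\mathit{0}\mathit{0}}=\tfrac{\alpha^2}{9b},\quad \pi_{II}=\tfrac{(\alpha+I)^2}{9b},\quad \pi_{I\mathit{0}}=\tfrac{(\alpha+2I)^2}{9b},\quad \pi_{\mathit{0}I}=\tfrac{(\alpha-I)^2}{9b}.
\]
The assumption $\alpha>I$ keeps innovation non-drastic and all profits positive, and Assumption~\ref{ass_r}(i)--(iv) then follows by expanding squares; (iv) in particular reduces to the trivial inequality $I^2\geq 0$. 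To rule out soft competition, I would compute
\[
\pi_{I\mathit{0}}-\pi_{II}-(\pi_{II}-\pi_{\mathit{0}\mathit{0}})=\tfrac{2I^{2}}{9b}>0,
\]
so that competition is moderate or intense for every admissible $(\alpha,I,b)$.

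With these preliminaries in hand, I would evaluate the threshold $\bar\rho$ using its definition from Section~\ref{SecComparison}:
\[
\bar\rho=\frac{\pi_{I\mathit{0}}-\pi_{II}-(\pi_{II}-\pi_{\mathit{0}\mathit{0}})}{2(\pi_{II}-\pi_{\mathit{0}\mathit{0}})}=\frac{2I^{2}/(9b)}{2I(2\alpha+I)/(9b)}=\frac{I}{2\alpha+I},
\]
which matches the formula in the corollary. The ``if and only if'' characterization of when the RJV strictly raises the innovation probability is then Proposition~\ref{prop_RJVEffect}(ii)(a) applied to these profits, and the weak decrease in total R\&D expenditures under the stated conditions follows from Proposition~\ref{prop_RJVEffect}(ii)(b). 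There is no substantive obstacle beyond the Cournot bookkeeping: all conceptual work has already been done at the level of reduced-form profits, and the only thing to be careful about is that the ``not soft'' conclusion holds globally on the admissible parameter region, which the single strict inequality $2I^{2}/(9b)>0$ delivers.
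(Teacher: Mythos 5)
Your proposal is correct and follows essentially the same route as the paper's own argument: compute the reduced-form linear Cournot profits $\pi_{t_it_j}$, verify Assumption \ref{ass_r} and that competition is never soft (the paper likewise notes $2\pi_{II}\leq\pi_{I\mathit{0}}+\pi_{\mathit{0}\mathit{0}}$ holds for all $\alpha>I$), and then substitute into the definition of $\bar\rho$ to get $I/(2\alpha+I)$ before invoking Proposition \ref{prop_RJVEffect}(ii). The only difference is that you spell out the algebra the paper leaves as ``straightforward,'' and your computations check out.
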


When competition is moderate or intense, the RJV only improves the innovation probability if the impact of pooling of resources is significant enough (as captured by the conditions on $\rho$ and B). 
Importantly, Corollary \ref{CorCournotNS} also identifies the role of the product market. A larger product market (captured by higher $\alpha$) and a smaller innovation size $I$ both increase the range of interest rates for which the RJV increases profits.

Figure \ref{fig:CournotSpill} illustrates the result for specific parameter values. Assumption \ref{ass_budget} and the focus on non-drastic innovations imply that we do not consider the darkly shaded region. 
The lightly shaded area depicts the parameter region for which the innovation probability is higher with an RJV than with R\&D competition. The existence of this region means that the requirement of Assumption \ref{ass_budget} that the budget is sufficiently small and the requirement from Proposition \ref{prop_RJVEffect}(ii) that it is sufficiently large are consistent. Note that all RJVs that increase the innovation probability compared to any equilibrium under R\&D competition are profitable in this case.\footnote{This is true because, for the given parameterization, competition is moderate. Hence, according to Proposition \ref{prop_RJVEffect}(ii), such innovation-enhancing RJVs must satisfy the conditions that are sufficient for a profitable RJV according to Proposition \ref{PropICNew}(ii).}
In the parameter region colored in white, an RJV lowers the innovation probability compared to any equilibrium under competition.

\subsubsection{Differentiated Price Competition}\label{SecPriceCompetition}

The linear homogeneous Cournot model is simple to analyze, but it restricts the possible outcomes, because competition is moderate or intense, so that Propositions \ref{prop_RJVEffect}(i) and \ref{PropICNew}(i) never apply. With differentiated goods, competition can be soft (as well as moderate or intense), so that these results become applicable. 
To see this, consider a standard model of differentiated price competition with inverse demand $p_i=1-q_i-bq_j$ for $b \in [0,1)$ and constant marginal cost $c>0$ where firms can engage in cost reductions $I \leq c$.
In the appendix, we derive the equilibrium profits.

\begin{figure}[h]
 \centering
     \includegraphics[width=0.8\textwidth]{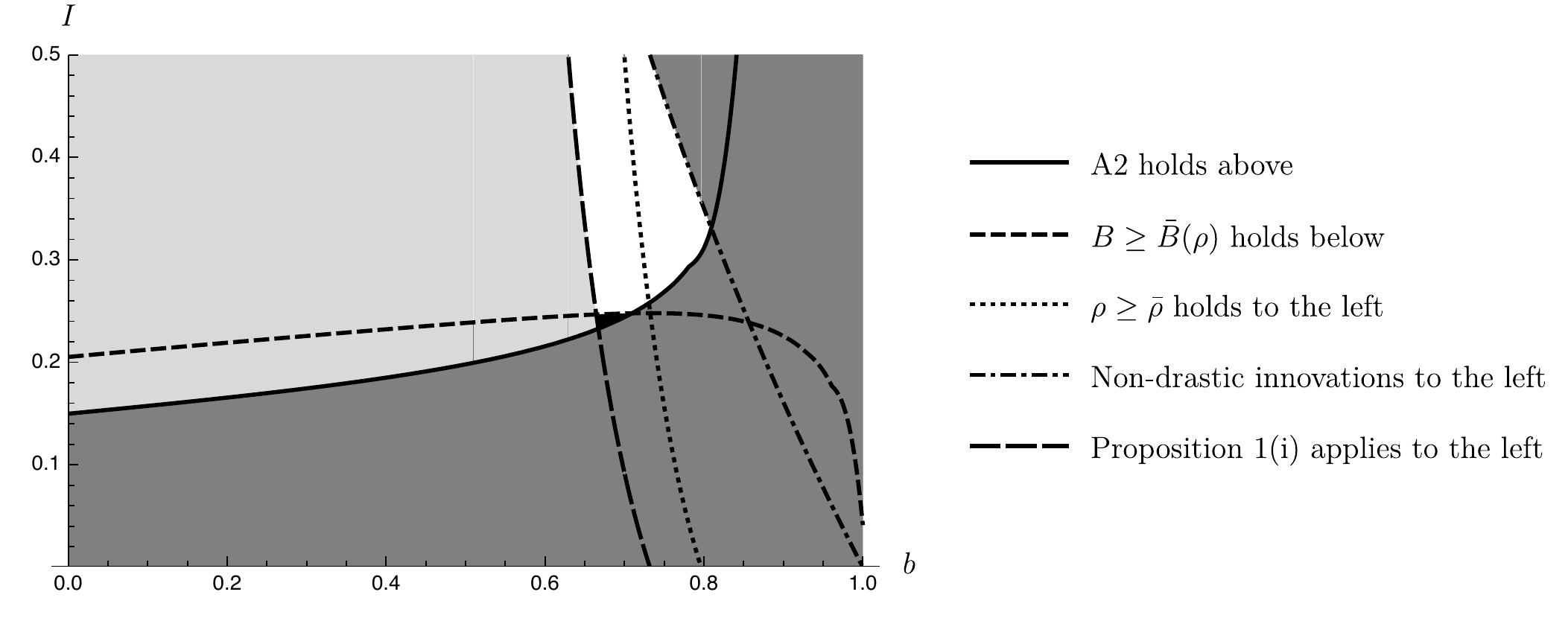}
    \caption{Comparison of R\&D competition and RJV in a differentiated Bertrand example with inverse demands $p_i=1-q_i-bq_j$ and constant marginal costs $c=0.5$. Axes depict substitution parameter $b$ and cost reduction $I$.}  \label{fig:Bertrand}
\end{figure}

Figure \ref{fig:Bertrand} illustrates the stark contrast to the homogeneous Cournot example. We exclude parameter areas where the innovation is drastic (in which case $\pi_{\mathit{0}I}$ would be negative)
and/or Assumption \ref{ass_budget} is violated (darkly shaded region). The central observation is that the RJV increases innovation and is profitable with sufficiently weak competition (in the large shaded grey area). This follows by applying Propositions \ref{prop_RJVEffect}(i) and  \ref{PropICNew}(i). By contrast, the parameter region where Proposition \ref{prop_RJVEffect}(ii) and  \ref{PropICNew}(ii) apply is very small (only the very small black area in the middle of the figure).  
Finally, note that, by Proposition \ref{PropICBad}, it will be profitable to engage in RJVs that reduce innovation (and costs) for parameter constellations near the left boundary of the white region.\footnote{Close to the left boundary of the white region, condition (i) of Proposition \ref{PropICBad} holds. Moreover, $B<\bar{B}(\rho)$, so that (ii) holds.}

\section{Further Results}\label{SecExtensions}

In this section, we provide further results. We first compare the effects of RJVs with those of mergers. Then we allow for spillovers and licensing, respectively. Finally, we consider markets with more than two firms.

\subsection{Mergers vs. RJVs}\label{SecMergers}

Competition policy usually views RJVs more favorably than full mergers as they allow the participants to reap some of the efficiency benefits that might arise in R\&D, without necessarily eliminating product market competition between the firms involved.\footnote{Note, however the empirical work suggesting that RJVs may foster collusion (\cite{duso2014collusion} and \cite{helland2019research}).} However, a precise comparison needs to take differences in the effects of RJVs and mergers on innovation into account. In the following, we therefore analyze the innovation effects of a merger between the two firms, following the above analysis of the RJV closely. Contrary to the RJV, the merged entity not only combines the research budget, but its constituent parts give up competition entirely. We denote the (monopoly) profit of the merged firm as $\pi_{t_m}$, where $t_m \in \{0,I\}$ indicates whether the firm has successfully innovated or not. In line with Assumption \ref{ass_r}(ii), we assume that innovation increases profits. 
\begin{assumption}\label{ass_merger}
$\pi_I >\pi_{\mathit{0}}$.
\end{assumption}

The analysis for the merged firm is entirely analogous to the RJV case, except that we have to replace $\pi_{II}$ with $\pi_I$ and $\pi_{\mathit{0}\mathit{0}}$ with $\pi_{\mathit{0}}$ in the expected payoff formula \eqref{eq_rjv_profit}.

Accordingly, we define critical values $\theta^u_m$ and $\theta^\rho_m<\theta^u_m$ which are analogous to $\theta^u$ and $\theta^\rho$, except that we replace $2(\pi_{II} -\pi_{\mathit{0}\mathit{0}})$ with $\pi_I -\pi_{\mathit{0}}$. It is straightforward that the merged firm optimally uses a single cut-off strategy like the  RJV, with $\theta^u_m$ and $\theta^\rho_m$ instead of $\theta^u$ and $\theta^\rho$ (see Lemma \ref{prop_mergerInvestment} in Appendix \ref{SecAppMergerPortfolio}). As a result, the comparison between investments with a merger and with R\&D competition (see Proposition \ref{prop_CompMergers} in Appendix \ref{SecAppMergerPortfolio}) is analogous to the comparison between the RJV and R\&D competition (Proposition \ref{prop_RJVEffect}), except that we again need to replace $2(\pi_{II} -\pi_{\mathit{0}\mathit{0}})$ with $\pi_I -\pi_{\mathit{0}}$, and the interest rate threshold thus becomes  
\begin{align*}
    \bar{\rho}_m & = \dfrac{\pi_{I\mathit{0}}-\pi_{\mathit{0}\mathit{0}} -(\pi_I-\pi_{\mathit{0}})}
{\pi_I-\pi_{\mathit{0}}}.
\end{align*}

The following result compares the innovation probability under a merger and under an RJV.

\begin{proposition}[Comparison of an RJV and a merger]\label{prop_comparison_m-rjv} $ $
\begin{enumerate}[label=(\roman*)]
\item If $2(\pi_{II}-\pi_{\mathit{0}\mathit{0}}) \geq \pi_I - \pi_{\mathit{0}}$, the innovation probability under an RJV is weakly higher than under a merger. The difference is strict, except when $\theta^{B}\in [\theta^\rho, \theta^u_m]$ or \newline$2(\pi_{II}-\pi_{\mathit{0}\mathit{0}})=\pi_I - \pi_{\mathit{0}}$.
\item If $2(\pi_{II}-\pi_{\mathit{0}\mathit{0}})<\pi_I - \pi_{\mathit{0}}$, the innovation probability under an RJV is weakly lower than under a merger. The difference is strict, except when $\theta^{B}\in [\theta^\rho_m, \theta^u]$.
\end{enumerate}
\end{proposition}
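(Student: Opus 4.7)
The plan is to reduce the comparison of innovation probabilities to a simple comparison of cutoff values. By Lemma \ref{prop_RJVInvestment}, the RJV's single cutoff satisfies $\theta^{\ast}=\max\{\theta^{\rho},\min\{\theta^{B},\theta^{u}\}\}$, and since the merger's problem is structurally identical with $\pi_{I}-\pi_{0}$ replacing $2(\pi_{II}-\pi_{00})$, Lemma \ref{prop_mergerInvestment} gives $\theta_{m}^{\ast}=\max\{\theta_{m}^{\rho},\min\{\theta^{B},\theta_{m}^{u}\}\}$. Because both strategies are single cutoff and the correct project is uniformly distributed, the innovation probability is just the cutoff itself, so it suffices to compare $\theta^{\ast}$ with $\theta_{m}^{\ast}$.

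Next, I would translate the condition on profits into a comparison of the thresholds. Since $C$ is strictly increasing, $2(\pi_{II}-\pi_{00})\geq\pi_{I}-\pi_{0}$ is equivalent to $\theta^{\rho}\geq\theta_{m}^{\rho}$ and $\theta^{u}\geq\theta_{m}^{u}$ (with equalities coinciding). A basic monotonicity argument on the function $(x,y)\mapsto\max\{x,\min\{\theta^{B},y\}\}$ then delivers $\theta^{\ast}\geq\theta_{m}^{\ast}$ in case (i), and the analogous reversed inequalities give $\theta^{\ast}\leq\theta_{m}^{\ast}$ in case (ii). This establishes the weak comparison stated in the proposition.

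For the strictness claim in (i) (case (ii) being symmetric), I would assume $2(\pi_{II}-\pi_{00})>\pi_{I}-\pi_{0}$ and go through the subcases according to the location of $\theta^{B}$ relative to $\theta_{m}^{\rho}<\theta^{\rho}$ and $\theta_{m}^{u}<\theta^{u}$:
\begin{itemize}
\item If $\theta^{B}<\theta_{m}^{\rho}$, then $\theta^{\ast}=\theta^{\rho}>\theta_{m}^{\rho}=\theta_{m}^{\ast}$.
\item If $\theta^{B}\in[\theta_{m}^{\rho},\theta^{\rho})$, then $\theta^{\ast}=\theta^{\rho}$ while $\theta_{m}^{\ast}=\min\{\theta^{B},\theta_{m}^{u}\}\leq\theta^{B}<\theta^{\rho}$.
\item If $\theta^{B}\in[\theta^{\rho},\theta_{m}^{u}]$ (when nonempty), then $\theta^{\ast}=\theta_{m}^{\ast}=\theta^{B}$; this is precisely the exceptional region.
\item If $\theta^{B}\in(\theta_{m}^{u},\theta^{u}]$, then $\theta^{\ast}=\theta^{B}>\theta_{m}^{u}=\theta_{m}^{\ast}$.
\item If $\theta^{B}>\theta^{u}$, then $\theta^{\ast}=\theta^{u}>\theta_{m}^{u}=\theta_{m}^{\ast}$.
\end{itemize}
So strict inequality holds whenever $\theta^{B}\notin[\theta^{\rho},\theta_{m}^{u}]$, matching the statement; when $2(\pi_{II}-\pi_{00})=\pi_{I}-\pi_{0}$ all four thresholds coincide and the problems are literally identical. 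Case (ii) follows by the same casework with the roles of $(\theta^{\rho},\theta^{u})$ and $(\theta_{m}^{\rho},\theta_{m}^{u})$ swapped, and the exceptional region becomes $[\theta_{m}^{\rho},\theta^{u}]$.

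The only mild subtlety is that the exceptional intervals may be empty (e.g.\ if $\theta^{\rho}>\theta_{m}^{u}$ in case (i)), in which case strictness holds for every $\theta^{B}$; the stated characterization is still correct because the ``exception'' is vacuous. Beyond this, the argument is pure bookkeeping on three critical cost thresholds, so I do not expect any step to be a genuine obstacle; the only care needed is to keep track of which threshold dominates in each subregion.
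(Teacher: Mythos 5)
Your proposal is correct and follows essentially the same route as the paper: it converts the profit comparison into an ordering of the four cutoffs (the paper's Lemma \ref{lemma_order_m-rjv}) and then runs the case analysis on the position of $\theta^{B}$, with the compact formula $\theta^{\ast}=\max\{\theta^{\rho},\min\{\theta^{B},\theta^{u}\}\}$ being just a streamlined restatement of Lemmas \ref{prop_RJVInvestment} and \ref{prop_mergerInvestment}. Your remark that the exceptional interval may be empty corresponds exactly to the paper's second ordering $\theta^{\rho}_m<\theta^{u}_m<\theta^{\rho}<\theta^{u}$, which it treats as a separate subcase.
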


A merger leads to similar efficiency gains as an RJV -- in both cases duplicate projects are eliminated, and those resources can be invested into new projects. However, the total profit increase for the members of the RJV will generally differ from those for the merged firm: Whereas innovation increases the joint profit of the RJV by  $2(\pi_{II}-\pi_{\mathit{0}\mathit{0}})$, the corresponding value for the merged firm is  $\pi_I - \pi_{\mathit{0}}$. The above result confirms the intuition that the relative size of these two profit differentials determines whether an RJV or a merged firm will be more likely to generate innovation.

However, there is a subtle effect of financial constraints: Even when the total profit effects of innovation differ for RJVs and mergers, the investments and thus the innovation probability are the same for non-degenerate parameter ranges. This happens when the budgets are intermediate, that is either $\theta^{B}\in[\theta^\rho, \theta^u_m]$ in case $(i)$ or $\theta^{B}\in [\theta^\rho_m, \theta^u]$ in case $(ii)$.  In those cases, both the RJV and the merged firm invest their entire budgets, but the marginal return of additional research projects is not sufficient to justify the cost of borrowing from the capital market. Hence, the RJV and the merged entity each invest exactly the total research budget $2B$ into R\&D.

Proposition \ref{prop_comparison_m-rjv} enables us to analyze whether a merger or an RJV would be better from the consumer surplus perspective, assuming that firms would want to engage in it.\footnote{This will, for instance, be the case if Propositions \ref{PropICNew} or \ref{PropICBad} apply.} To this end, introduce a further weak assumption:

\begin{assumption}\label{ass_CS_Number}
   When technology is the same, consumer surplus is higher with two active firms than with one. 
\end{assumption}

We can use this Assumption together with Assumption \ref{ass_CS} to compare consumer surplus with an RJV and a merger.

\begin{corollary}\label{CorRJVMerger}
    Suppose that Assumptions \ref{ass_CS} and \ref{ass_CS_Number} hold. \newline
    (i) If (a) $2(\pi_{II}-\pi_{\mathit{0}\mathit{0}}) \geq \pi_I - \pi_{\mathit{0}}$ or (b) $2(\pi_{II}-\pi_{\mathit{0}\mathit{0}}) < \pi_I - \pi_{\mathit{0}}$ and $\theta^{B}\in[\theta^\rho_m, \theta^u]$, then the consumer surplus is higher with an RJV than with a merger.\newline
    (ii) If $2(\pi_{II}-\pi_{\mathit{0}\mathit{0}}) < \pi_I - \pi_{\mathit{0}}$ and $\theta^{B}\not\in[\theta^\rho_m, \theta^u]$, then the consumer surplus can be higher or lower with a merger than with an RJV.   
\end{corollary}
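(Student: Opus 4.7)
The plan is to write out expected consumer surplus in both regimes and compare using the innovation-probability ranking of Proposition \ref{prop_comparison_m-rjv} together with the two consumer-surplus assumptions. Let $\theta^*$ denote the RJV cutoff from Lemma \ref{prop_RJVInvestment} and $\theta^*_m$ its merger analogue, and write $CS^m_I, CS^m_{\mathit{0}}$ for monopoly consumer surplus with and without the innovation. Then
\begin{align*}
\mathbb{E}CS_v &= \theta^* CS_{II} + (1-\theta^*) CS_{\mathit{0}\mathit{0}}, \\
\mathbb{E}CS_m &= \theta^*_m CS^m_I + (1-\theta^*_m) CS^m_{\mathit{0}}.
\end{align*}

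For part (i)(a), Proposition \ref{prop_comparison_m-rjv}(i) delivers $\theta^* \geq \theta^*_m$. Since $CS_{II} > CS_{\mathit{0}\mathit{0}}$ by Assumption \ref{ass_CS}, the map $\theta \mapsto \theta CS_{II} + (1-\theta) CS_{\mathit{0}\mathit{0}}$ is strictly increasing, so $\mathbb{E}CS_v \geq \theta^*_m CS_{II} + (1-\theta^*_m) CS_{\mathit{0}\mathit{0}}$. Assumption \ref{ass_CS_Number} then gives $CS_{II} > CS^m_I$ and $CS_{\mathit{0}\mathit{0}} > CS^m_{\mathit{0}}$, so the last expression strictly exceeds $\mathbb{E}CS_m$. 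For part (i)(b), the exception clause in Proposition \ref{prop_comparison_m-rjv}(ii) gives $\theta^* = \theta^*_m$: when $\theta^B \in [\theta^\rho_m, \theta^u]$, both the RJV and the merged firm exhaust exactly the joint budget without borrowing and without leaving funds idle, so both cutoffs equal $\theta^B$. At a common innovation probability, Assumption \ref{ass_CS_Number} applied term by term yields the strict inequality.

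For part (ii), $\theta^B \notin [\theta^\rho_m, \theta^u]$ together with Proposition \ref{prop_comparison_m-rjv}(ii) gives $\theta^*_m > \theta^*$ strictly. Two opposing forces are now at work: the merger delivers the innovation more often, whereas the RJV preserves duopolistic competition in the product market. The plan is to show that either force can dominate by constructing two parameterizations within the linear Cournot framework of Section \ref{SecCournot}, where $\pi_I, \pi_{\mathit{0}}$ and the monopoly consumer surplus all have simple closed forms. In one example I would choose parameters with a sizable innovation-probability gap and modest loss from monopolization so that $\mathbb{E}CS_m > \mathbb{E}CS_v$; in the other I would shrink the gap (e.g.\ choose $\theta^B$ only slightly above $\theta^u$, so that $\theta^*_m - \theta^*$ is small) while keeping the deadweight loss from monopoly large, delivering $\mathbb{E}CS_v > \mathbb{E}CS_m$. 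Because the claim is only that \emph{either} ranking is possible, two numerical examples suffice.

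The main obstacle is purely one of bookkeeping in part (i)(b): I need to verify that the exception $\theta^B \in [\theta^\rho_m, \theta^u]$ is contained in both $[\theta^\rho, \theta^u]$, so that Lemma \ref{prop_RJVInvestment} gives $\theta^* = \theta^B$, and in $[\theta^\rho_m, \theta^u_m]$, so that the merger analogue gives $\theta^*_m = \theta^B$. Both inclusions follow immediately from the strict monotonicity of $C$ together with the assumed inequality $2(\pi_{II}-\pi_{\mathit{0}\mathit{0}}) < \pi_I - \pi_{\mathit{0}}$, which implies $\theta^\rho < \theta^\rho_m$ and $\theta^u < \theta^u_m$. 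Everything else reduces to the convex-combination arguments above.
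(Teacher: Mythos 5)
Your part (i) is correct and is essentially the paper's own argument: the paper proves this corollary only in the prose following its statement, by combining the innovation-probability ranking of Proposition \ref{prop_comparison_m-rjv} with Assumptions \ref{ass_CS} and \ref{ass_CS_Number}, exactly as you do. Your bookkeeping for (i)(b) is also right: under $2(\pi_{II}-\pi_{\mathit{0}\mathit{0}})<\pi_I-\pi_{\mathit{0}}$, Lemma \ref{lemma_order_m-rjv} gives $\theta^\rho<\theta^\rho_m$ and $\theta^u<\theta^u_m$, so $[\theta^\rho_m,\theta^u]$ sits inside both $[\theta^\rho,\theta^u]$ and $[\theta^\rho_m,\theta^u_m]$ and both cutoffs equal $\theta^B$; the strict inequality then comes from Assumption \ref{ass_CS_Number} alone. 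The only caveat concerns part (ii): you correctly identify that an existential claim needs two examples, but you leave them unconstructed, and the construction is less mechanical than the sketch suggests. In linear Cournot the ratio $(\pi_I-\pi_{\mathit{0}})/\bigl(2(\pi_{II}-\pi_{\mathit{0}\mathit{0}})\bigr)$ is only $9/8$, so a ``sizable innovation-probability gap'' requires a cost function that is nearly flat between the two relevant cutoffs, and one must simultaneously respect Assumption \ref{ass_budget} and place $\theta^B$ correctly (e.g.\ taking $\theta^B<\theta^\rho$ with a small budget works, whereas some other placements of $\theta^B$ make the required inequalities incompatible). The paper itself offers no examples for part (ii) either, so your treatment is no less rigorous than the original, but a self-contained proof would have to carry out that construction.
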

Part (i)(a) of Corollary \ref{CorRJVMerger} shows that, in case $(i)$ of Proposition \ref{prop_comparison_m-rjv}, the RJV unambiguously increases consumer surplus. The reason is that the RJV both weakly increases the probability that the innovation will be discovered and increases competition for any level of technology. Even in case $(ii)$ of  of Proposition \ref{prop_comparison_m-rjv}, where the profit increase from innovation is larger for the merger than for the RJV, the RJV unambiguously leads to higher consumer surplus than the merger if $\theta^{B}\in[\theta^\rho_m, \theta^u]$, as the innovation probability is the same for both forms of cooperation. However, if $\theta^{B} \not\in[\theta^\rho_m, \theta^u]$, the comparison is ambiguous: The merged firm would be more likely to discover the innovation, while the RJV would maintain the more competitive market structure. 
One consequence of this analysis is that from a consumer perspective, an RJV is preferable to a merger, except possibly when the innovation probability would be significantly higher under a merger. This suggests that firms should not only be required to show that a merger would have positive innovation effects, but also that these effects would not occur with an RJV.

\subsection{Spillovers}\label{Spillovers}
Our model differs from the previous literature on RJVs not only by its focus on financial constraints as opposed to spillovers, but also by the feature that firms can choose between different R\&D projects. To simplify the comparison with the existing literature, we first consider a variant of our project choice model without financial constraints, but with spillovers. Thereafter, we analyze the interaction between financial constraints and spillovers.
 
\subsubsection{Spillovers without financial constraints}\label{SpilloversNoFC}

We modify the setting of Section \ref{SecModel} by assuming that the firms with cost functions $C(\theta)$ choose their investment portfolio without any budget constraint. Moreover, with R\&D competition, if a firm has invested successfully in a project and the rival has not, then with probability $\sigma\in[0,1]$ the rival will obtain access to the innovation. Thus, it is now possible that a firm obtains the innovation without investing itself.

We provide the equilibrium characterization for R\&D competition in Appendix \ref{SecAppSpillNFC}. As in the benchmark model, we obtain an equilibrium in double cut-off strategies. A full description of the equilibrium is given in Lemma \ref{prop_equilibrium_spill}. The analysis with RJVs is simpler than in the case with financial constraints. The increase in joint profit from a successful innovation is $2\pi_{II}-2\pi_{\mathit{0}\mathit{0}}$. Hence, the RJV invests in all projects up to a cut-off value, which is given by
$\theta^{u}$, 
and it does not invest in the remaining ones. The following result compares investments in the RJV with those under R\&D competition.

\begin{proposition}\label{PropCompSpill}
Consider the model with spillovers, but without financial constraints. Assume that $\pi_{I\mathit{0}}>\pi_{II}$. Then the innovation probability is strictly larger under the RJV than under R\&D competition if and only if \[
\sigma > 1 - \frac{\pi_{II}-\pi_{\mathit{0}\mathit{0}}}{\pi_{I\mathit{0}}-\pi_{II}}.
\]
This condition is always satisfied if competition is soft.
\end{proposition}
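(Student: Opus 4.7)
The plan is to reduce the proposition to a comparison between two one-variable cutoff conditions: the outer cutoff $\theta_1$ that determines the innovation probability under R\&D competition (by Lemma \ref{prop_equilibrium_spill} in Appendix \ref{SecAppSpillNFC}, where the equilibrium is of the double-cutoff type) and the RJV cutoff $\theta^u$. Because $C$ is strictly increasing, $\theta^u>\theta_1$ is equivalent to $C(\theta^u)>C(\theta_1)$, and I intend to derive the threshold on $\sigma$ purely from this inequality.

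First, I will set up the marginal condition that defines $\theta_1$ when firm $j$ is not investing in a project $\theta$. With spillovers, if firm $i$ invests in the correct project, firm $j$ obtains the innovation with probability $\sigma$ anyway, so firm $i$'s expected profit gain from investing in $\theta$ conditional on $\theta=\hat{\theta}$ equals $(1-\sigma)\pi_{I\mathit{0}}+\sigma\pi_{II}-\pi_{\mathit{0}\mathit{0}}$. Since there are no financial constraints, the cutoff solves
\[
C(\theta_1)=(1-\sigma)\pi_{I\mathit{0}}+\sigma\pi_{II}-\pi_{\mathit{0}\mathit{0}}.
\]
The innovation probability under R\&D competition is then $\theta_1$, because the union of projects invested by at least one firm is $[0,\theta_1)$, and the location of $\hat\theta$ is uniform. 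For the RJV, the joint marginal gain from a successful project is $2(\pi_{II}-\pi_{\mathit{0}\mathit{0}})$, so its cutoff is given by $C(\theta^u)=2(\pi_{II}-\pi_{\mathit{0}\mathit{0}})$, as already noted before the proposition.

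Second, I will perform a direct algebraic comparison: $\theta^u>\theta_1$ iff $2(\pi_{II}-\pi_{\mathit{0}\mathit{0}})>(1-\sigma)\pi_{I\mathit{0}}+\sigma\pi_{II}-\pi_{\mathit{0}\mathit{0}}$, which I will rearrange into $(1-\sigma)(\pi_{I\mathit{0}}-\pi_{II})<\pi_{II}-\pi_{\mathit{0}\mathit{0}}$. Since $\pi_{I\mathit{0}}>\pi_{II}$ by hypothesis, I can divide by $\pi_{I\mathit{0}}-\pi_{II}>0$ to obtain the stated threshold $\sigma>1-\frac{\pi_{II}-\pi_{\mathit{0}\mathit{0}}}{\pi_{I\mathit{0}}-\pi_{II}}$.

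Finally, for the soft-competition statement, I will observe that the definition $\pi_{I\mathit{0}}-\pi_{II}<\pi_{II}-\pi_{\mathit{0}\mathit{0}}$ makes the ratio $\frac{\pi_{II}-\pi_{\mathit{0}\mathit{0}}}{\pi_{I\mathit{0}}-\pi_{II}}$ strictly greater than one, so the threshold is negative and holds for every $\sigma\in[0,1]$. The main obstacle is not analytical but bookkeeping: confirming that the double-cutoff characterization survives the introduction of spillovers, so that the innovation probability is genuinely pinned down by $\theta_1$ alone; this follows from the appendix lemma, for which I would check that the submodularity Assumption \ref{ass_r}(iv) still yields $\theta_2\le\theta_1$ after substituting the spillover-adjusted marginal benefits.
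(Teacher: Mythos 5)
Your proposal is correct and follows essentially the same route as the paper: it identifies the competitive innovation probability with the outer cutoff $\theta^{nc}_1$ defined by $C(\theta^{nc}_1)=(1-\sigma)\pi_{I\mathit{0}}+\sigma\pi_{II}-\pi_{\mathit{0}\mathit{0}}$, compares it to the RJV cutoff $C(\theta^u)=2(\pi_{II}-\pi_{\mathit{0}\mathit{0}})$, and rearranges using $\pi_{I\mathit{0}}-\pi_{II}>0$ to obtain exactly the stated threshold, with the soft-competition claim following because the threshold is then negative. The only cosmetic difference is that the paper dispatches the soft-competition case by reference to the no-spillover argument while you derive the equivalence uniformly; your closing check that Assumption~\ref{ass_r}(iv) preserves $\theta^{nc}_2\le\theta^{nc}_1$ is the same bookkeeping the paper relegates to ``straightforward to show.''
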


The proof is in Appendix \ref{SecAppSpillNFC}. As in the case with financial constraints, for RJVs to generate a higher innovation probability than R\&D competition, it is crucial that the value of escaping competition is sufficiently small relative to the value of joint innovation. A simple, but important implication of Proposition \ref{PropCompSpill} needs to be emphasized: When competition is soft, then the RHS of the inequality in Proposition \ref{PropCompSpill} is negative and an RJV increases innovation for any level of spillovers (including $\sigma =0)$. When competition is moderate or intense, the RHS is positive, but an RJV can still increase the innovation probability if the spillovers are strong enough relative to the strength of the competition. The exception is (homogeneous) Bertrand competition, which is so intense that $\pi_{II}=\pi_{\mathit{0}\mathit{0}}=0$, so that the inequality cannot be satisfied for any $\sigma \in [0,1]$.

\subsubsection{Spillovers with financial constraints}\label{SpilloversFC}

In Appendix \ref{SecAppSpillFC}, we integrate the model with spillovers just discussed into the model with financial constraints.
Large parts of the analysis follow directly from our results in Section \ref{SecEffects}. To apply those results, one needs to define the \emph{expected} payoffs $\tilde{\pi}_{t_it_j}$ of discovering the innovation (i.e., before any spillovers happen and taking into account the possibility of a spillover) and then observe that Assumption \ref{ass_r} holds with $\pi$ replaced by $ \tilde{\pi}$. Then, the results of Section \ref{SecEffects} apply after replacing realized product market profits with expected payoffs.
Adapting Assumption \ref{ass_budget}, we assume that the research budgets of the individual firms are sufficiently small that they will borrow positive amounts in any equilibrium.
It is straightforward to show that there is an equilibrium in double cut-off strategies under R\&D competition (see Lemma \ref{prop_equilibriumSpillFC} in Appendix \ref{SecAppSpillFC} for details).
The comparison between R\&D competition and RJV is also very similar to the case without spillovers (see Proposition \ref{prop_RJVEffect_spill} in Appendix \ref{SecAppSpillFC}): When the total profit increase $2\pi_{II}-2\pi_{\mathit{0}\mathit{0}}$ from innovation is high enough, then the RJV will lead to a greater innovation probability than R\&D competition independent of financial constraints. If the total profit increase from innovation is lower, the RJV only leads to a greater innovation probability if both the interest rate $\rho$ and the RJV budget $2B$ are above a threshold; in this case, the RJV saves investment costs by avoiding duplication.  

The following differences to the benchmark model are relevant for the comparison between investments under R\&D competition and under the RJV. 
First, RJVs unconditionally increase innovation whenever $2\pi_{II}-2\pi_{\mathit{0}\mathit{0}}>\pi_{I\mathit{0}}-\pi_{\mathit{0}\mathit{0}}-\sigma(\pi_{I\mathit{0}}-\pi_{II})$, which is more likely to be satisfied when spillovers are strong (i.e., when $\sigma$ is high).
Second, when that condition is not satisfied, an increase in $\sigma$  lowers the thresholds for the budget and the interest rate which are needed to guarantee that the RJV increases the innovation probability.
The conditions under which an RJV increases the innovation probability are thus weaker with higher spillovers, just as they are with higher interest rates:

\begin{proposition}[Benefit of RJV increases in the spillover rate] \label{prop_RJV_better} $ $
Fix any $\sigma$ and $\rho$. If the innovation probability is strictly larger under the RJV than under R\&D competition, then it is also strictly larger for any $\sigma' \geq \sigma$ and $\rho' \geq \rho$.
\end{proposition}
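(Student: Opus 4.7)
The plan is to rewrite the inequality between innovation probabilities as a single cost--benefit condition that is manifestly monotone in $\sigma$ and $\rho$. As in Section \ref{SpilloversFC}, the R\&D-competition equilibrium cut-off $\theta_{1}(\sigma,\rho)$ is pinned down by $(1+\rho)C(\theta_{1}) = V_{1}(\sigma)$, where
\[
V_{1}(\sigma) \ :=\ \tilde{\pi}_{I\mathit{0}} - \tilde{\pi}_{\mathit{0}\mathit{0}} \ =\ (\pi_{I\mathit{0}} - \pi_{\mathit{0}\mathit{0}}) - \sigma(\pi_{I\mathit{0}} - \pi_{II})
\]
is the effective escape-competition benefit net of spillovers. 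Because both firms in an RJV share any discovery, the RJV's problem does not involve $\sigma$, and by the spillover analog of Lemma \ref{prop_RJVInvestment} its cut-off is $\theta^{\ast}(\rho) = \min\{\theta^{u},\, \max\{\theta^{\rho}(\rho),\, \theta^{B}\}\}$, where $(1+\rho)C(\theta^{\rho}) = 2(\pi_{II}-\pi_{\mathit{0}\mathit{0}}) =: V_{2}$, $C(\theta^{u}) = V_{2}$, and $\theta^{B}$ depends on neither $\sigma$ nor $\rho$. Strict monotonicity of $C$ then gives the equivalent reformulation
\[
\theta^{\ast}(\rho) \,>\, \theta_{1}(\sigma,\rho) \quad \Longleftrightarrow \quad (1+\rho)\,C(\theta^{\ast}(\rho)) \,>\, V_{1}(\sigma).
\]

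I would then handle the two parameters separately. For $\sigma$, Assumption \ref{ass_r}(iii) gives $\pi_{I\mathit{0}} \geq \pi_{II}$, so $V_{1}(\sigma)$ is weakly decreasing in $\sigma$; since the left-hand side above is independent of $\sigma$, raising $\sigma$ to any $\sigma' \geq \sigma$ preserves the strict inequality. The substantive step is showing that $\rho \mapsto (1+\rho)\,C(\theta^{\ast}(\rho))$ is non-decreasing. I would argue this by going through the three RJV regimes: in the borrowing regime ($\theta^{B}<\theta^{\rho}$) the expression equals the constant $V_{2}$; in the budget-binding regime ($\theta^{\rho}\le\theta^{B}\le\theta^{u}$) it equals $(1+\rho)\,C(\theta^{B})$, strictly increasing in $\rho$; and in the slack regime ($\theta^{B}>\theta^{u}$) it equals $(1+\rho)\,C(\theta^{u})$, also strictly increasing. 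Because $\theta^{B}$ and $\theta^{u}$ are independent of $\rho$ while $\theta^{\rho}(\rho)$ is strictly decreasing, the only regime switch an increase in $\rho$ can trigger is borrowing $\to$ budget-binding, and at that boundary both formulas coincide with $V_{2}$, delivering continuity. Chaining the two monotonicities, for any $(\sigma',\rho') \geq (\sigma,\rho)$,
\[
(1+\rho')\,C(\theta^{\ast}(\rho')) \ \geq\ (1+\rho)\,C(\theta^{\ast}(\rho)) \ >\ V_{1}(\sigma) \ \geq\ V_{1}(\sigma'),
\]
which is exactly the claim.

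The main obstacle is not conceptual but bookkeeping: one has to verify continuity of $(1+\rho)\,C(\theta^{\ast}(\rho))$ at the borrow-to-budget-binding frontier and confirm that no other regime transition is triggered by raising $\rho$. Both follow from the observation that only $\theta^{\rho}$ moves with $\rho$, and it moves monotonically, so once these two facts are in place the comparative-statics conclusion is a single chain of inequalities.
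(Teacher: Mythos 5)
Your proof is correct, and it reorganizes the argument in a way that differs from the paper's. The paper proves this result by splitting into two cases according to whether the (spillover-adjusted) weak-competition condition $\pi_{II}>(1-\sigma)(\pi_{I\mathit{0}}-\pi_{II})+\pi_{\mathit{0}\mathit{0}}$ holds: in the first case it notes that the condition is independent of $\rho$ and relaxes as $\sigma$ grows, so $\theta^{\rho}>\tilde{\theta}_{1}$ persists; in the second it invokes the characterization $B>\tilde{B}(\rho)$, $\rho>\tilde{\rho}$ and observes that the RJV cut-off then equals $\theta^{B}$ or $\theta^{u}$, both independent of $(\sigma,\rho)$, while $\tilde{\theta}_{1}$ is weakly decreasing in both parameters (its Lemma on monotonicity of $\tilde{\theta}_{1}$, $\tilde{B}$, $\tilde{\rho}$). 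Your version compresses all of this into the single equivalent inequality $(1+\rho)C(\theta^{\ast}(\rho))>V_{1}(\sigma)$ and establishes that the left side is non-decreasing in $\rho$ (with the only possible regime transition being borrowing to budget-binding, where the two formulas agree at $V_{2}$) and the right side non-increasing in $\sigma$; this removes the case split on the intensity of competition and makes the comparative statics a one-line chain. The underlying ingredients are the same -- the competition cut-off falls with $\sigma$ and $\rho$ while the RJV side is insulated -- but your packaging is more unified and, in my view, slightly cleaner; the paper's route has the advantage of reusing its already-stated Proposition on when the RJV raises the innovation probability with spillovers. One point both treatments leave implicit: for the claim to be well-posed at $(\sigma',\rho')$ one needs the maintained budget assumption $B<\int_{0}^{\tilde{\theta}_{2}}C(\theta)\,d\theta$ to continue to hold there (since $\tilde{\theta}_{2}$ falls with $\sigma$ and $\rho$, it does), which is worth a sentence but is not a gap specific to your argument.
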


As in the case without spillovers, an RJV results in efficiency gains at the investment stage by reducing duplication, and resources can be invested in a larger set of projects. Moreover, whereas spillover effects reduce investment under competitive R\&D, this is not the case with an RJV. Thus, the positive effect of R\&D cooperation on the innovation probability must be larger with spillovers than without, reflecting the internalization of positive spillovers by the RJV.

\subsection{Licensing}\label{sec_license}

Like RJVs, licensing agreements are an instrument for firms to share the fruits of innovation. The literature has demonstrated the possible benefits and costs of such agreements when R\&D efforts are one-dimensional. Here, we show how the possibility of licensing influences R\&D project choice in the absence of an RJV and, thereby, the effects of switching to an RJV. In particular, we will show that even when licensing of innovations is possible, RJVs can still lead to an increase in the innovation probability.

We thus extend our benchmark model to allow for licensing of innovations.\footnote{In Appendix \ref{SecAppLicense}, we describe the details of the model. Here, we sketch the main ideas.} 
We suppose that, if only one firm has innovated successfully, it can license the innovation to the competitor with a two-part tariff $(L,\eta)$, consisting of an output-independent fixed fee $L$ and a variable, output-dependent part $\eta$ (e.g., royalties).\footnote{As will become clear later, if only simpler licensing contracts were available, our analysis would still apply. See \cite{shapiro1985} for a discussion of licensing with and without royalties. \cite{fauli2003merge} analyze licensing with fixed fee, royalty and two-part tariff contracts as an alternative to mergers.} 
When the unsuccessful firm licenses the innovation, both the innovator and the licensee have the technology state $t_i=I$. However, the incentives of the licensee to compete vigorously are dampened by the variable part of the licensing contract $\eta$.\footnote{For example, royalties increase the licensee's marginal cost and, thus, soften competition. This leads to asymmetric product market competition, although the firms use equal technology.} This reduction of the intensity of competition increases total industry profits (compared to the situation when both firms independently innovate) by some amount $\Delta \geq 0$.

We assume that the innovator makes a take it or leave it offer, extracting all the rents from the licensee.
In particular, the innovator sets the fixed fee $L$ such that the unsuccessful firm earns its outside option $\pi_{\mathit{0}I}$ and, thus, is indifferent between accepting the contract or not. Therefore, the innovator is willing to license the innovation if her profits with licensing, $2\pi_{II}+\Delta-\pi_{\mathit{0}I}$, are at least as high as her profits without, $\pi_{I\mathit{0}}$.
Licensing always happens if competition is soft or moderate and sometimes when it is intense.\footnote{This is related to the result of \cite{katzshapiro1985} that, in a Cournot setting, a successful innovator will license small innovations, but not large or drastic innovations.}  
As in the analysis of spillovers in Section \ref{SpilloversFC}, after replacing the function $\pi_{t_it_j}$ appropriately, the analysis directly follows Section \ref{SecEffects}. Specifically, we define a function $\pi^L$ on $\{0,I\} \times \{0,I\}$, which is identical with $\pi$ except that it takes into account licensing payments when only one firm is successful.
The only difference between $\pi^L$ and $\pi$
is that $\pi^L_{I\mathit{0}} = \max \lbrace\pi_{I\mathit{0}},2\pi_{II}+\Delta-\pi_{\mathit{0}I}\rbrace$. This function captures profits as a function of technology level, but taking into account possible gains from licensing. Using this modified profit function, we derive thresholds $\theta^L_1$ and $\theta^L_2$ by replacing $\pi$ with $\pi^L$ in the definitions of $\theta_1$ and $\theta_2$. Crucially, whereas $\theta^L_2 = \theta_2$, $ \theta^L_1 \geq \theta_1$, reflecting the potential gains from licensing.

When $2\pi_{II}+\Delta-\pi_{\mathit{0}I}<\pi_{I\mathit{0}}$, the equilibrium under R\&D competition is exactly the same as in Lemma \ref{prop_equilibrium}, because licensing never occurs in this case. When $2\pi_{II}+\Delta-\pi_{\mathit{0}I}\geq\pi_{I\mathit{0}}$, licensing increases the innovation probability in any equilibrium to $\theta_1^L \geq \theta_1$, as the opportunity to license increases the incentives to explore further projects.

For the comparison with the RJV, we replace the budget threshold  $\bar{B}(\rho)$ and the interest threshold  $\bar{\rho}$ with thresholds  $\bar{B}^L(\rho)$ and $\bar{\rho}^L$ that are based on $\pi^L$ rather than $\pi$, leading to the following modification of Proposition \ref{prop_RJVEffect}.

\begin{proposition}[Comparison of R\&D competition with licensing and RJV]\label{prop_RJVEffect_licensing} $ $
\begin{enumerate}[label=(\roman*)]
\item Suppose $2\pi_{II}+\Delta-\pi_{\mathit{0}I} \geq \pi_{I\mathit{0}}$. Then: \newline (a) The innovation probability is strictly larger under the RJV than under competition if and only if $B > \bar{B}^L(\rho)$ and $\rho > \bar{\rho}^L$.\newline
(b)  If the formation of the RJV strictly increases the innovation probability, then it  weakly decreases total R\&D spending.
\item Suppose $2\pi_{II}+\Delta-\pi_{\mathit{0}I} < \pi_{I\mathit{0}}$. Then the effect of an RJV on the innovation probability is the same as in the absence of a licensing possibility.
\end{enumerate}
\end{proposition}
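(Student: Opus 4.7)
The plan is to reduce Proposition \ref{prop_RJVEffect_licensing} to Proposition \ref{prop_RJVEffect} via a change of profit function, exploiting the fact that $\pi^L$ coincides with $\pi$ on every entry except $\pi^L_{I\mathit{0}}=\max\{\pi_{I\mathit{0}},\,2\pi_{II}+\Delta-\pi_{\mathit{0}I}\}$. The first step is to check that $\pi^L$ satisfies Assumption \ref{ass_r}: non-negativity, the two monotonicity properties, and the submodularity $\pi^L_{I\mathit{0}}-\pi^L_{\mathit{0}\mathit{0}}\geq \pi^L_{II}-\pi^L_{\mathit{0}I}$ all carry over, the last because replacing $\pi_{I\mathit{0}}$ by a weakly larger value only strengthens the inequality. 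Assumption \ref{ass_budget} is preserved as well, since the cut-off $\theta^L_2$ defined from $\pi^L$ coincides with $\theta_2$ (only $\pi_{I\mathit{0}}$ changed).

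Next, I would verify that the equilibrium characterization under R\&D competition with licensing is the exact analogue of Lemma \ref{prop_equilibrium} with $\pi$ replaced by $\pi^L$. Each firm's expected payoff from being the sole successful innovator is $\pi^L_{I\mathit{0}}$ (taking the optimal licensing decision into account), while its payoff from being the only loser remains $\pi_{\mathit{0}I}=\pi^L_{\mathit{0}I}$, since the innovator extracts all rents via the fixed fee. Thus the marginal investment incentives are identical to those in the benchmark after substituting $\pi^L$ for $\pi$, producing cut-offs $\theta^L_2=\theta_2$ and $\theta^L_1\geq \theta_1$, the latter reflecting the anticipated licensing revenue. The RJV problem is untouched by licensing, as its payoff in \eqref{eq_rjv_profit} only involves $\pi_{II}$ and $\pi_{\mathit{0}\mathit{0}}$; hence Lemma \ref{prop_RJVInvestment} applies verbatim and $\theta^*$ is unaffected.

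For part (ii), when $2\pi_{II}+\Delta-\pi_{\mathit{0}I}<\pi_{I\mathit{0}}$, the innovator strictly prefers the monopoly profit to any licensing deal, so licensing never occurs, $\pi^L=\pi$, and Proposition \ref{prop_RJVEffect} applies unchanged. For part (i), I would invoke Proposition \ref{prop_RJVEffect} directly on the game with profit function $\pi^L$, with $\bar{B}^L(\rho)$ and $\bar{\rho}^L$ obtained by substituting $\pi^L$ into the definitions of $\bar{B}(\rho)$ and $\bar{\rho}$. The main subtlety, and the step I expect to require the most care, is verifying that competition in the $\pi^L$-sense is never soft, so that part (ii) of Proposition \ref{prop_RJVEffect} (the iff involving $B$ and $\rho$) applies rather than part (i). This holds because $\pi^L_{I\mathit{0}}-\pi^L_{II}=\pi_{II}+\Delta-\pi_{\mathit{0}I}\geq \pi_{II}-\pi_{\mathit{0}\mathit{0}}=\pi^L_{II}-\pi^L_{\mathit{0}\mathit{0}}$, using $\Delta\geq 0$ and $\pi_{\mathit{0}\mathit{0}}\geq \pi_{\mathit{0}I}$ from Assumption \ref{ass_r}(iii). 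Claim (i)(a) is then immediate from Proposition \ref{prop_RJVEffect}(ii)(a), and claim (i)(b) follows from Proposition \ref{prop_RJVEffect}(ii)(b), both applied to the game defined by $\pi^L$.
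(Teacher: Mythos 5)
Your proposal is correct and follows essentially the same route as the paper, which proves the result by substituting $\pi^L$ for $\pi$ and invoking Proposition \ref{prop_RJVEffect} together with the licensing analogue of Lemma \ref{prop_equilibrium} and the unchanged RJV problem. The one place where you add genuine value is the explicit check that, when $2\pi_{II}+\Delta-\pi_{\mathit{0}I}\geq\pi_{I\mathit{0}}$, competition in the $\pi^L$-sense is never soft (via $\Delta\geq 0$ and $\pi_{\mathit{0}\mathit{0}}\geq\pi_{\mathit{0}I}$), which is exactly the step the paper leaves implicit in asserting that part (i) takes the form of Proposition \ref{prop_RJVEffect}(ii).
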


In case $(i)$, firms want to license the innovation. In case $(ii)$, they do not. Importantly, the conditions under which the RJV leads to a higher innovation probability are more rigid than without licensing. This is obvious in the case of soft competition, in which Proposition \ref{prop_RJVEffect}(i) states that an RJV is \textit{always} preferable to R\&D competition, while Proposition \ref{prop_RJVEffect_licensing}(i) requires that $B > \bar{B}^L(\rho)$ and $\rho > \bar{\rho}^L$. When competition is not soft, the conditions under which an RJV increases the innovation probability are also more restrictive with licensing than without, since $\bar{B}^L(\rho)\geq \bar{B}(\rho)$ and $\bar{\rho}^L \geq \bar{\rho}$ whenever Proposition \ref{prop_RJVEffect_licensing}(i) applies. The difference arises because licensing increases innovation incentives under R\&D competition, so that there is less to gain from an RJV. Moreover, an RJV that increases the innovation probability weakly decreases total R\&D spending, because it invests weakly less than the available budget while both firms invest strictly more than their budget under R\&D competition. 

To put the results into perspective, we can think of ex-post licensing and RJVs as imperfect substitutes for sharing the fruits of R\&D. Nonetheless, the above results show that even when ex-post licensing is possible, an RJV may still lead to a higher innovation probability than R\&D competition if financial constraints are sufficiently tight.

\subsection{Multiple firms}
We extend our model by allowing for more than two competing firms. With multiple firms, there are many conceivable ways in which RJVs could be formed, including industry-wide RJVs as well as several competing RJVs. We analyze two illustrative cases. First, we consider a market with three firms that can form an industry-wide RJV. Second, we consider the case of four firms that form two competing RJVs.
The analysis is very similar to the benchmark model with two firms. Therefore, we defer details to the Appendix \ref{SecAppMultiplefirms}.

\subsubsection{Industry-wide RJV}
We extend the analysis to the case of three firms, which can form one RJV. Suitably adjusting Assumptions \ref{ass_r} and \ref{ass_budget}, the analysis and results are analogous to the benchmark model with two firms. 
The only notable difference is that the R\&D competition game now has multiple equilibria in triple cut-off strategies characterized by the three critical values $\theta_3\leq\theta_2\leq\theta_1$.\footnote{All three firms invest below $\theta_3$, two between $\theta_3$ and $\theta_2$, one between $\theta_2$ and $\theta_1$, and none above $\theta_1$.} However, the innovation probability in any equilibrium is still given by $\theta_1$, the most expensive project in which a single firm can profitably invest relying on external resources. The analysis of the RJV when all firms participate and the resulting comparison between R\&D competition and cooperation is qualitatively unchanged. Therefore, we find similar results to Proposition \ref{prop_RJVEffect}: When competition is not too intense, the innovation probability is higher in the RJV; otherwise, this conclusion requires the budget and the external financing costs to be high enough. In the latter case, total R\&D-spending in the RJV is lower than under competition.

\subsubsection{Multiple RJVs}
Next, we consider the formation of multiple RJVs. We consider a market with four firms that form two symmetric RJVs, each with two firms. Therefore, R\&D cooperation does not eliminate competition in the innovation stage entirely, but reduces the number of competing agents. Hence, even with an RJV cheap projects are still duplicated. We assume that the budget of an RJV is sufficiently large that it never borrows in equilibrium. Otherwise, the analysis of two competing RJVs turns out to be similar to the R\&D competition regime in the baseline model. Analogously to Proposition \ref{prop_RJVEffect}, we find: When competition is relatively soft, then the innovation probability is higher with two RJVs than with R\&D competition without additional conditions. 
Under relatively moderate or intense competition, cooperation on R\&D increases the innovation probability only if the budget and the interest rate are sufficiently high. In this case, total R\&D-spending with two RJVs is lower than when four firms invest individually.

\section{Conclusion}\label{SecConclusion}

This paper provides a novel theory of RJVs for financially constrained firms who can choose the set of research projects that they will pursue. RJVs allow firms to share their R\&D budget and to coordinate their R\&D investment decisions, while maintaining product market competition.

We find that, if product market competition is sufficiently soft, the RJV will increase the probability of an innovation even when there are no financial constraints. As product market competition increases, a positive innovation effect of the RJV requires that the external funding conditions are sufficiently bad and the budget of the RJV is sufficiently large. In the latter case, the RJV reduces research costs by avoiding duplication -- this shows that the relation between R\&D spending and R\&D success probability need not be positive. Moreover, any RJV that increases the innovation probability also increases expected consumer welfare.

Importantly, the conditions under which the RJV increases the probability of a successful innovation and the conditions under which it is profitable for the participants often coincide; in particular, for soft or intermediate competition, firms always want to form RJVs if they increase the innovation probability. This increases consumer welfare under mild conditions. Nonetheless, we also identify situations under which firms find it profitable to form an innovation-reducing RJV merely because they can coordinate on reducing R\&D costs, which is in line with concerns of policy makers.

We obtain qualitatively similar results on the effects of mergers on innovation. More interestingly, we find conditions under which a merger does not lead to a higher innovation probability than an RJV. In such situations, even if the merger has pro-competitive effects on innovation relative to the benchmark of R\&D competition, the merger should be prohibited because, contrary to the alternative of an RJV, it results in an adverse effect on product market competition.

\newpage
\appendix

\section{Proofs}

\subsection{Proof of Lemma \ref{prop_equilibrium}}

We will first prove an intermediate result.

\begin{lemma}\label{dominated_r}
Any strategy $r_i$ such that $\int_0^1r_i(\theta) C(\theta)d \theta \leq B$ is dominated.
\end{lemma}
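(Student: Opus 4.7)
The plan is to show that any strategy $r_i$ with total R\&D cost at most $B$ can be strictly improved, regardless of what the competitor does, by adding investment in some cheap project that $r_i$ currently skips. The key leverage comes from Assumption \ref{ass_budget}, which says that $B < \int_{0}^{\theta_{2}} C(\theta)\,d\theta$: a firm sticking to pure internal funding cannot even afford to cover the whole duplication region $[0,\theta_{2})$.

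First, I would fix an arbitrary $r_i$ with $\int_0^1 r_i(\theta)C(\theta)\,d\theta \leq B$ and an arbitrary competitor strategy $r_j$. By Assumption \ref{ass_budget}, the set $N = \{\theta \in [0,\theta_2): r_i(\theta)=0\}$ has strictly positive Lebesgue measure (otherwise the cost would already be at least $\int_0^{\theta_2} C(\theta)\,d\theta > B$). I would then pick any measurable subset $A \subseteq N$ of positive measure and define a modified strategy $r_i'$ that equals $r_i$ outside $A$ and equals $1$ on $A$.

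Next, I would compare $\mathbb{E}\Pi_i(r_i',r_j)$ with $\mathbb{E}\Pi_i(r_i,r_j)$. Adding investment on $A$ raises the gross expected revenue by $(\pi_{I\mathit{0}}-\pi_{\mathit{0}\mathit{0}})\,|A\setminus \{\theta : r_j(\theta)=1\}| + (\pi_{II}-\pi_{\mathit{0}I})\,|A\cap \{\theta : r_j(\theta)=1\}|$, which is bounded below by $(\pi_{II}-\pi_{\mathit{0}I})\,|A|$, using Assumption \ref{ass_r}(iv). The additional R\&D cost is at most $(1+\rho)\int_A C(\theta)\,d\theta$ (this bound is tight only if the whole increment is financed externally). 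Since $A \subseteq [0,\theta_2)$ and $C$ is strictly increasing, $(1+\rho)C(\theta) < (1+\rho)C(\theta_2) = \pi_{II}-\pi_{\mathit{0}I}$ for almost every $\theta \in A$, so the cost increment is strictly smaller than the revenue increment. Hence $\mathbb{E}\Pi_i(r_i',r_j) > \mathbb{E}\Pi_i(r_i,r_j)$ for every $r_j$, which is precisely strict dominance.

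The only delicate point is the kink introduced by the borrowing term $\rho\max\{0,\cdot\}$: the marginal cost of the added projects is $C(\theta)$ if the budget slack absorbs the whole addition and $(1+\rho)C(\theta)$ once the constraint binds, and the boundary case where the addition spans both regimes needs to be handled. Using the uniform upper bound $(1+\rho)C(\theta)$ on the marginal cost sidesteps this, since the revenue gain on $A$ dominates it by a strict margin. This is really the only obstacle; the rest is straightforward measure-theoretic bookkeeping.
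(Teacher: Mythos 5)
Your proof is correct and follows essentially the same route as the paper's: both use Assumption \ref{ass_budget} to locate a positive-measure unfunded subset of $[0,\theta_2)$, add investment there, lower-bound the revenue gain by $\pi_{II}-\pi_{\mathit{0}I}$ per unit measure via Assumption \ref{ass_r}(iv), and bound the marginal cost by $(1+\rho)C(\theta) < \pi_{II}-\pi_{\mathit{0}I}$. The only cosmetic differences are that the paper fills in all of $[0,\theta_2)$ (so the deviation provably requires external financing), whereas you fill an arbitrary positive-measure subset and handle the borrowing kink with the uniform $(1+\rho)C(\theta)$ upper bound on the cost increment --- both are valid.
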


\begin{proof}
If $\int_0^1r_i(\theta) C(\theta)d \theta \leq B$, then by Assumption \ref{ass_budget} there exists a set $\Theta' \subseteq [0,\theta_2)$ of positive measure, such that $r_i(\theta)=0$ for all $\theta \in \Theta'$. Consider a strategy $r'_i$, where $r'_i(\theta)=1$ for all $\theta \in [0, \theta_2)$ and $r'_i(\theta)=r_i(\theta)$ otherwise. We will show that $\mathbb{E}\Pi_i(r_i',r_{j})>\mathbb{E}\Pi_i(r_i, r_{j})$ for any strategy of the opponent $r_{j}$.

Noting that the strategy $r_i'$ requires external financing (while $r_i$ does not), and taking into account that
$r'_i(\theta)=r_i(\theta)$ for all $\theta > \theta_2$ then 
\begin{align*}
\mathbb{E}\Pi_i(r_i', r_{j}) - \mathbb{E}\Pi_i(r_i, r_{j}) = \\
\int_0^{\theta_2}  (r_i'(\theta)-r_i(\theta))\Big[(1-r_{j}(\theta))\left( \pi_{I\mathit{0}} -\pi_{\mathit{0}\mathit{0}}\right) + r_{j}(\theta)\left(\pi_{II} -\pi_{\mathit{0}I}\right) \Big] d\theta\\
-(1+\rho)\int_0^1 r_i'(\theta)C(\theta) d\theta + \rho B +\int_0^1  r_i(\theta)C(\theta)d \theta \\
\geq \int_0^{\theta_2}  (r_i'(\theta)-r_i(\theta))\Big[(1-r_{j}(\theta))\left( \pi_{I\mathit{0}} -\pi_{\mathit{0}\mathit{0}}\right) + r_{j}(\theta)\left(\pi_{II} -\pi_{\mathit{0}I}\right) \Big] d\theta\\
-(1+\rho)\int_0^1 r_i'(\theta)C(\theta) d\theta + (1+ \rho)\int_0^1  r_i(\theta)C(\theta)d \theta  \\ 
= \int_0^{\theta_2}  (r_i'(\theta)-r_i(\theta))\Big[(1-r_{j}(\theta))\left( \pi_{I\mathit{0}} -\pi_{\mathit{0}\mathit{0}}\right) + r_{j}(\theta)\left(\pi_{II} -\pi_{\mathit{0}I}\right) \Big] d\theta\\
-(1+\rho)\int_0^{\theta_2} (r_i'(\theta)-r_i(\theta))C(\theta) d\theta  \\ 
\geq \int_0^{\theta_2}  (r_i'(\theta)-r_i(\theta))\Big[\left(\pi_{II} -\pi_{\mathit{0}I}\right) -(1+\rho)C(\theta)  \Big] d\theta\\
> 0.
\end{align*}
The first inequality follows from the assumption that $\int_0^1r_i(\theta) C(\theta)d \theta \leq B$, the second from the fact that $\pi_{I\mathit{0}} -\pi_{\mathit{0}\mathit{0}} \geq  \pi_{II} -\pi_{\mathit{0}I}$ and $r_i'(\theta)-r_i(\theta) \geq 0$ and the last from the fact that $\pi_{II}-\pi_{\mathit{0}I} > (1+\rho)C(\theta)$ for all $\theta < \theta_2$ and $r_i'(\theta)>r_i(\theta)$ on the set of positive measure $\Theta'$.
\end{proof}

\noindent \emph{Proof of (i):} Take any strategy $r_{j}$ which corresponds to one of the equilibrium strategies given in Lemma \ref{prop_equilibrium}. Note that for any fixed $r_{j}$, the equilibrium candidate strategy $r_i$ is uniquely determined. Suppose $(r_i, r_{j})$ does not constitute an equilibrium. Then, there exists a strategy $r'_i$ such that $\mathbb{E}\Pi_i(r_i', r_{j})>\mathbb{E}\Pi_i(r_i, r_{j})$. By Assumption \ref{ass_budget}, all equilibrium candidates satisfy $\int_0^1r_i(\theta) C(\theta)d \theta > B$. Moreover,
by Lemma \ref{dominated_r} we can focus on strategies such that $\int_0^1r'_i(\theta) C(\theta)d \theta > B$ is satisfied. 

Denote the expected total payoff of project $\theta$, conditional on it being correct, as $v_i(\theta,r_i,r_{j})$. Then there exists a set $\Theta' \subseteq [0,1)$ with positive measure such that $v_i(\theta,r'_i,r_{j}) > v_i(\theta,r_i,r_{j})$ for all $\theta\in \Theta'$, or more explicitly: 
\begin{align}\label{proof_1_orig_ineqality}
&(1-r_{j}(\theta))[r'_i(\theta)\pi_{I\mathit{0}}+(1-r'_i(\theta))\pi_{\mathit{0}\mathit{0}}] +r_{j}(\theta)[r'_i(\theta)\pi_{II}+(1-r'_i(\theta))\pi_{\mathit{0}I}]  \nonumber\\ &-(1+\rho)C(\theta) r'_i(\theta) > (1-r_{j}(\theta))[r_i(\theta)\pi_{I\mathit{0}}+(1-r_i(\theta))\pi_{\mathit{0}\mathit{0}}]  \nonumber\\ & +r_{j}(\theta)[r_i(\theta)\pi_{II}+(1-r_i(\theta))\pi_{\mathit{0}I}]-(1+\rho)C(\theta)  r_i(\theta).
\end{align}
If $\theta<\theta_2$ then $r_{j}(\theta)=1$ so this inequality simplifies to
\begin{equation}\label{proof_1a}
    \begin{split}
 r'_{i}(\theta)(\pi_{II} -\pi_{\mathit{0}I}-(1+\rho)C(\theta)) > r_i(\theta)(\pi_{II} -\pi_{\mathit{0}I}-(1+\rho)C(\theta))  
    \end{split}
\end{equation} 
Since for $\theta<\theta_2$ we have $\pi_{II} -\pi_{\mathit{0}I}-(1+\rho)C(\theta)>0$ and $r_i(\theta)=1$, this would imply $r'_{i}(\theta)  > 1 $ which is a contradiction.\newline

If $\theta>\theta_1$ then $r_{j}(\theta)=0$ so inequality \eqref{proof_1_orig_ineqality} simplifies to
\begin{align}\label{proof_2a}
 r'_i(\theta)[\pi_{I\mathit{0}}-\pi_{\mathit{0}\mathit{0}} -(1+\rho)C(\theta)]  >   r_i(\theta)[\pi_{I\mathit{0}}-\pi_{\mathit{0}\mathit{0}} -(1+\rho)C(\theta)].
\end{align}
Since for $\theta>\theta_1$ we have $\pi_{I\mathit{0}}-\pi_{\mathit{0}\mathit{0}} -(1+\rho)C(\theta)<0$ and $r_i(\theta)=0$ this would imply $r'_{i}(\theta)  < 0 $ which is a contradiction. 

Next, consider $\theta \in (\theta_2 , \theta_1)$. This case only arises if $\theta_2 < \theta_1$, which immediately implies $\pi_{I\mathit{0}}+\pi_{\mathit{0}I}-\pi_{II}-\pi_{\mathit{0}\mathit{0}}>0$. If $r_{j}(\theta)=1$ then, as before, inequality \eqref{proof_1_orig_ineqality} simplifies to \eqref{proof_1a}. However, now $\pi_{II}-\pi_{\mathit{0}I}-(1+\rho)C(\theta)<0$ and, for the candidate equilibrium, $r_i(\theta)=0$. \eqref{proof_1a} would thus require that $r'_{i}(\theta)  < 0 $, which is a contradiction. Similarly if $r_{j}(\theta)=0$ the inequality \eqref{proof_1_orig_ineqality} simplifies to \eqref{proof_2a}, but $\theta<\theta_1$ implies   $\pi_{I\mathit{0}}-\pi_{\mathit{0}\mathit{0}} -(1+\rho)C(\theta)>0$ and, for the candidate equilibrium, $r_i(\theta)=1$. \eqref{proof_2a} would thus require that  $r'_{i}(\theta) > 1 $, which is a contradiction. 

\vspace*{5mm}
\noindent \emph{Proof of (ii):} Suppose there exist two strategies, $r_i$ and $r_{j}$, which constitute an equilibrium, and a set of positive measure $I \subseteq [0,1)$, such that  $r_i$ is different from the strategies characterized in the Lemma at all points of the set $I$. By Lemma \ref{dominated_r} we can focus on strategies such that the budget is binding. 
Let $I_1 = I \cap (0,\theta_2)$, $I_2 = I \cap (\theta_2, \theta_1)$ and $I_3 = I \cap (\theta_1, 1)$. Note that at least one of the sets  $I_1$, $I_2$, or $I_3$ has positive measure.

Define
\begin{align*}
\Gamma_i(\theta,r_{j})= &\pi_{I\mathit{0}}-\pi_{\mathit{0}\mathit{0}}-(1+\rho)C(\theta)\\
&-r_{j}(\theta)(\pi_{I\mathit{0}}+\pi_{\mathit{0}I}-\pi_{II}-\pi_{\mathit{0}\mathit{0}}).
\end{align*}
We can express $v_i(\theta,r_i,r_{j})$, the expected total payoff of project $\theta$, conditional on it being correct, as 
\begin{align*}
v_i(\theta,r_i,r_{j})= r_i(\theta)\Gamma_i(\theta,r_{j})+(1-r_{j}(\theta)) \pi_{\mathit{0}\mathit{0}}+r_{j}(\theta)\pi_{\mathit{0}I}.
\end{align*}
Since $\pi_{I\mathit{0}}+\pi_{\mathit{0}I}-\pi_{II}-\pi_{\mathit{0}\mathit{0}} \geq 0$, $\Gamma_i(\theta,r_{j})$ is decreasing in $r_{j}(\theta)$.

Assume first that $I_1$ has positive measure. Then $r_i(\theta) = 0$ for all $\theta \in I_1$. Since $C(\theta)$ is strictly increasing and $(1+\rho)C(\theta_2)=\pi_{II}-\pi_{\mathit{0}I}$, then $\Gamma_i(\theta,r_{j})>0$ for any $r_{j}$. Thus, the best response of firm $i$ is $r_i(\theta)=1$ for all $\theta \in I_1$, which is a contradiction.

Next, assume $I_3$ has positive measure. Then $r_i(\theta) =1$ for all $\theta \in I_3$. But, analogously to before, $\Gamma_i(\theta,r_{j})<0$ for any $r_{j}$. Thus, the best response of firm $i$ is $r_i(\theta)=0$. A contradiction.

Finally, assume $I_2$ has positive measure, which implies that $r_i(\theta)=r_j(\theta)$ for all $\theta\in I_2$. Suppose first that $r_i(\theta) =0$ on a set of positive measure $I_2' \subseteq I_2$. Observe that $\Gamma_{j}(\theta,r_{i})>0$ for all $\theta \in I_2'$. Since this is an equilibrium, it must be that $r_{j}(\theta)=1$ for all $\theta \in I_2'$. A contradiction. Next, suppose that $r_i(\theta) =1$ on a set of positive measure $I_2'' \subseteq I_2$. Observe that $\Gamma_{j}(\theta,r_{i})<0$ for all $\theta \in I_2''$. Analogously to the argument above, it must be that $r_{j}(\theta)=0$ for all $\theta \in I_2''$, a contradiction. Thus, it cannot be that $r_i(\theta)=r_j(\theta)$ for all $\theta\in I_2$. \qed

\subsection{Proof of Lemma \ref{prop_RJVInvestment}}

We can rewrite the expected total payoff of the RJV as
\begin{align*}
    \mathbb{E}\Pi_v(r_v)  = 2&\pi_{\mathit{0}\mathit{0}} + 2\left(\pi_{II}-\pi_{\mathit{0}\mathit{0}}\right) \int_0^1  r_v(\theta) d\theta  \\
&  - \int_0^1 r_v(\theta)C(\theta) d\theta  -\rho\max\left\{\int_0^1 r_v(\theta)C(\theta) d\theta - 2B,0 \right\}
\end{align*}
where the probability that the RJV discovers the innovation is given by $\int_0^1  r_v(\theta) d\theta$ while $\int_0^1 r_v(\theta)C(\theta) d\theta + \rho\max\left\{\int_0^1 r_v(\theta)C(\theta) d\theta - 2B,0 \right\}$ captures total innovation costs.

Since research projects only differ with respect to investment costs and these costs are increasing in $\theta$, for any fixed probability of innovation $\hat{\theta}$, the RJV optimally chooses a cut-off strategy to obtain this probability: It sets $r_v(\theta)=1$ for $\theta<\hat{\theta}$ and $r_v(\theta)=0$ otherwise, so that $\int_0^1  r_v(\theta)C(\theta) d\theta$=$\int_0^{\hat{\theta}} C(\theta) d\theta$.

The RJV's optimal portfolio can be obtained by maximizing  
\begin{align*}
    \mathbb{E}\hat{\Pi}_v(\hat{\theta})  = 2\pi_{\mathit{0}\mathit{0}} + 2\left(\pi_{II}-\pi_{\mathit{0}\mathit{0}}\right)\hat{\theta} - \int_0^{\hat{\theta}} C(\theta) d\theta  -\rho\max\left\{\int_0^{\hat{\theta}} C(\theta) d\theta - 2B,0 \right\}.
\end{align*}

Note that

\begin{equation*}
    \dfrac{\partial \mathbb{E}\hat{\Pi}_v}{\partial \hat{\theta}} =
    \begin{cases}
    2\left(\pi_{II}-\pi_{\mathit{0}\mathit{0}}\right) - C(\hat{\theta}) & \text{for } \hat{\theta} < \theta^{B} \\
    2\left(\pi_{II}-\pi_{\mathit{0}\mathit{0}}\right) - (1+\rho)C(\hat{\theta}) & \text{for } \hat{\theta} > \theta^{B}.
    \end{cases}
\end{equation*}

Now consider the three cases from the proposition (i.e., whether $\theta^{B}<\theta^{\rho}$,  $\theta^{B} \in [\theta^{\rho} , \theta^u]$, or $\theta^{B} > \theta^u$). First, if  $\theta^{B}<\theta^{\rho}$ then 
\begin{equation*}
    \dfrac{\partial \mathbb{E}\hat{\Pi}_v}{\partial \hat{\theta}} = 
    \begin{cases}
    2\left(\pi_{II}-\pi_{\mathit{0}\mathit{0}}\right) - C(\hat{\theta})>0 & \text{for } \hat{\theta} < \theta^{B} \\
    2\left(\pi_{II}-\pi_{\mathit{0}\mathit{0}}\right) - (1+\rho)C(\hat{\theta}) > 0 & \text{for } \hat{\theta} \in (\theta^{B}, \theta^{\rho}) \\
    2\left(\pi_{II}-\pi_{\mathit{0}\mathit{0}}\right) - (1+\rho)C(\hat{\theta}) < 0 & \text{for } \hat{\theta} \in (\theta^{\rho},1).
    \end{cases}
\end{equation*}
Thus, $ \hat{\theta} = \theta^{\rho}$ maximizes the expected return of the RJV's portfolio. Second, if $\theta^{B}\in [\theta^{\rho}, \theta^u] $ then
\begin{equation*}
    \dfrac{\partial \mathbb{E}\hat{\Pi}_v}{\partial \hat{\theta}} = 
    \begin{cases}
    2\left(\pi_{II}-\pi_{\mathit{0}\mathit{0}}\right) - C(\hat{\theta})>0 & \text{for } \hat{\theta} < \theta^{B} \\
    2\left(\pi_{II}-\pi_{\mathit{0}\mathit{0}}\right) - (1+\rho)C(\hat{\theta}) < 0 & \text{for } \hat{\theta} >\theta^{B},
    \end{cases}
\end{equation*}
so that $ \hat{\theta} = \theta^{B}$ maximizes the expected return of the RJV's portfolio. Third, if $\theta^{B} > \theta^u$ then 
\begin{equation*}
    \dfrac{\partial \mathbb{E}\hat{\Pi}_v}{\partial \hat{\theta}} = 
    \begin{cases}
    2\left(\pi_{II}-\pi_{\mathit{0}\mathit{0}}\right) - C(\hat{\theta})>0 & \text{for } \hat{\theta} < \theta^{u} \\
    2\left(\pi_{II}-\pi_{\mathit{0}\mathit{0}}\right) - C(\hat{\theta})<0 & \text{for } \hat{\theta} \in (\theta^{u}, \theta^{B}) \\
    2\left(\pi_{II}-\pi_{\mathit{0}\mathit{0}}\right) - (1+\rho)C(\hat{\theta}) < 0 & \text{for } \hat{\theta} \in (\theta^{B},1).
    \end{cases}
\end{equation*} Thus, $ \hat{\theta} = \theta^u$ maximizes the expected return of the RJV's portfolio.
\qed

\subsection{Proof of Proposition \ref{prop_RJVEffect}}

First, we provide a lemma distinguishing the two parts of Proposition \ref{prop_RJVEffect}.

\begin{lemma}\label{lemma_theta_rho_theta_1}
$2\pi_{II} > \pi_{I\mathit{0}} + \pi_{\mathit{0}\mathit{0}} \Leftrightarrow \theta^\rho > \theta_1$.
\end{lemma}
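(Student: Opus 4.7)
The plan is to exploit the fact that both $\theta_1$ and $\theta^\rho$ are defined implicitly through the strictly increasing function $(1+\rho)C(\cdot)$, so comparing them reduces to comparing the right-hand sides of their defining equations.

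First, I would write down the two defining equations from the paper: $(1+\rho)C(\theta_1)=\pi_{I\mathit{0}}-\pi_{\mathit{0}\mathit{0}}$ and $(1+\rho)C(\theta^\rho)=2(\pi_{II}-\pi_{\mathit{0}\mathit{0}})$. Since $C$ is differentiable and strictly increasing on $[0,1)$ and $1+\rho>0$, the map $\theta\mapsto(1+\rho)C(\theta)$ is strictly increasing, so $\theta^\rho>\theta_1$ if and only if $(1+\rho)C(\theta^\rho)>(1+\rho)C(\theta_1)$.

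Substituting the two defining equations, this is equivalent to $2(\pi_{II}-\pi_{\mathit{0}\mathit{0}})>\pi_{I\mathit{0}}-\pi_{\mathit{0}\mathit{0}}$, which rearranges directly to $2\pi_{II}>\pi_{I\mathit{0}}+\pi_{\mathit{0}\mathit{0}}$, establishing both directions of the equivalence.

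There is no real obstacle here; the only thing to be careful about is that both $\theta_1$ and $\theta^\rho$ are well-defined elements of $[0,1)$, which follows from Assumption \ref{ass_r} (so the right-hand sides are nonnegative) together with $C(0)=0$ and $\lim_{\theta\to 1}C(\theta)=\infty$, guaranteeing existence and uniqueness of the solutions to the defining equations. Once that is noted, the argument is a one-line application of strict monotonicity of $C$.
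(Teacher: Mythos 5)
Your proposal is correct and follows essentially the same argument as the paper: rewrite the two defining equations and use strict monotonicity of $(1+\rho)C(\cdot)$ to convert the comparison of cut-offs into a comparison of the profit differences. The additional remark on well-definedness of $\theta_1$ and $\theta^\rho$ is a harmless (and reasonable) extra precaution that the paper leaves implicit.
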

\begin{proof}
\begin{align*}
2\pi_{II} &> \pi_{I\mathit{0}} + \pi_{\mathit{0}\mathit{0}} \\
2 (\pi_{II} - \pi_{\mathit{0}\mathit{0}} ) &> \pi_{I\mathit{0}} - \pi_{\mathit{0}\mathit{0}} \\
(1+\rho)C(\theta^{\rho}) & > (1+\rho)C(\theta_1) \\
\theta^\rho &>\theta_1.
\end{align*}
\end{proof}

(i) By Lemma \ref{lemma_theta_rho_theta_1}, $2\pi_{II}>\pi_{I\mathit{0}}+\pi_{\mathit{0}\mathit{0}}$ implies $\theta^\rho > \theta_1$. By Lemma \ref{prop_RJVInvestment}, the probability that the RJV innovates is at least $\theta^{\rho}$. By Lemma \ref{prop_equilibrium}, the probability of innovation under competition is $\theta_1$. Therefore, the probability that the innovation will be discovered is strictly larger under the RJV than under competition. 

(ii) To prove part (a), we first provide an auxiliary result (Lemma \ref{lemma_theta_u_theta_1}). Using this lemma, we separately show that ``if'' part follows from Lemma \ref{lemma_Prop3i} below and ``only if'' part from Lemma \ref{lemma_Prop3ii} below.

\begin{lemma}\label{lemma_theta_u_theta_1}
Suppose $2\pi_{II} \leq \pi_{I\mathit{0}} + \pi_{\mathit{0}\mathit{0}}$. Then
$\rho > \bar{\rho} \Leftrightarrow  \theta^u > \theta_1$.
\end{lemma}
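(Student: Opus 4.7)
The plan is to reduce the claimed equivalence to a direct algebraic comparison by exploiting the strict monotonicity of $C$. Since $C$ is strictly increasing, $\theta^u > \theta_1$ is equivalent to $C(\theta^u) > C(\theta_1)$. Substituting the defining equations of $\theta^u$ and $\theta_1$, namely $C(\theta^u) = 2(\pi_{II}-\pi_{\mathit{0}\mathit{0}})$ and $C(\theta_1) = (\pi_{I\mathit{0}}-\pi_{\mathit{0}\mathit{0}})/(1+\rho)$, this becomes
\[
2(1+\rho)(\pi_{II}-\pi_{\mathit{0}\mathit{0}}) \;>\; \pi_{I\mathit{0}}-\pi_{\mathit{0}\mathit{0}},
\]
which rearranges to
\[
2\rho(\pi_{II}-\pi_{\mathit{0}\mathit{0}}) \;>\; \pi_{I\mathit{0}} + \pi_{\mathit{0}\mathit{0}} - 2\pi_{II}.
\]

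The remainder of the proof splits on whether $\pi_{II} = \pi_{\mathit{0}\mathit{0}}$ or $\pi_{II} > \pi_{\mathit{0}\mathit{0}}$ (Assumption \ref{ass_r}(ii) rules out $\pi_{II}<\pi_{\mathit{0}\mathit{0}}$). In the generic case $\pi_{II}>\pi_{\mathit{0}\mathit{0}}$, one divides both sides by the strictly positive quantity $2(\pi_{II}-\pi_{\mathit{0}\mathit{0}})$, which preserves the inequality and yields exactly $\rho > \bar{\rho}$ as given by the definition of $\bar{\rho}$ in the first branch of its piecewise specification.

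The only delicate point — and the main (mild) obstacle — is the boundary case $\pi_{II}=\pi_{\mathit{0}\mathit{0}}$, where the expression for $\bar{\rho}$ is set to $\infty$. Here one argues separately that both sides of the equivalence are vacuously false: on the one hand, $C(\theta^u)=0=C(0)$ together with strict monotonicity of $C$ forces $\theta^u=0<\theta_1$, so the right-hand side fails; on the other hand, $\rho>\infty$ cannot hold, so the left-hand side fails as well. This verifies the equivalence trivially in the degenerate case, and the observation that the hypothesis $2\pi_{II}\leq \pi_{I\mathit{0}}+\pi_{\mathit{0}\mathit{0}}$ guarantees that $\bar{\rho}\geq 0$ (so the threshold is non-trivial) completes the argument. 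Note that this hypothesis is not actually used in the algebraic equivalence itself; it merely ensures that the statement is non-vacuous, connecting it to the complementary case handled by Lemma \ref{lemma_theta_rho_theta_1}.
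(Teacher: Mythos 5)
Your proof is correct and follows essentially the same route as the paper's: the same chain of algebraic equivalences via the defining equations of $\theta^u$ and $\theta_1$ and strict monotonicity of $C$ in the generic case $\pi_{II}>\pi_{\mathit{0}\mathit{0}}$, and the same vacuous-truth argument (both sides of the biconditional fail) when $\pi_{II}=\pi_{\mathit{0}\mathit{0}}$. Your closing observation that the hypothesis $2\pi_{II}\leq\pi_{I\mathit{0}}+\pi_{\mathit{0}\mathit{0}}$ is not needed for the equivalence itself is also accurate and consistent with the paper's proof.
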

\begin{proof}
First suppose that $\bar{\rho}<\infty$. Then
\begin{align*}
\rho &> \bar{\rho} = \dfrac{\pi_{I\mathit{0}}-2\pi_{II}+\pi_{\mathit{0}\mathit{0}}}{2(\pi_{II}-\pi_{\mathit{0}\mathit{0}})}\\
2\rho(\pi_{II}-\pi_{\mathit{0}\mathit{0}})&>\pi_{I\mathit{0}}-2\pi_{II}+\pi_{\mathit{0}\mathit{0}}\\
2(1+\rho)(\pi_{II}-\pi_{\mathit{0}\mathit{0}})&>\pi_{I\mathit{0}}-\pi_{\mathit{0}\mathit{0}}\\
2(\pi_{II}-\pi_{\mathit{0}\mathit{0}})&>\dfrac{\pi_{I\mathit{0}}-\pi_{\mathit{0}\mathit{0}}}{1+\rho}\\
C(\theta^u) &> C(\theta_1) \\
\theta^u &> \theta_1.
\end{align*}
Next suppose $\bar{\rho}=\infty$. Then, clearly  
$\rho<\bar{\rho}$. Hence, the statement of the lemma holds if and only if $\theta^u  \leq \theta_1$ or $2(\pi_{II} -\pi_{\mathit{0}\mathit{0}}) 
\leq \frac{\pi_{I\mathit{0}} - \pi_{\mathit{0}\mathit{0}}}{1+\rho}$. As $\bar{\rho}=\infty$ implies $\pi_{II}=\pi_{\mathit{0}\mathit{0}}$, this requirement holds. 
\end{proof}

\begin{lemma}\label{lemma_Prop3i}
Suppose $2\pi_{II} \leq \pi_{I\mathit{0}} + \pi_{\mathit{0}\mathit{0}}$. If $B > \bar{B}(\rho)$ and $\rho > \bar{\rho}$, then the probability that the innovation will be discovered is strictly larger under the RJV than under competition.
\end{lemma}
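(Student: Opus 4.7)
The plan is to combine the characterizations of the RJV cut-off in Lemma~\ref{prop_RJVInvestment} with the two auxiliary comparisons just established (Lemmas~\ref{lemma_theta_rho_theta_1} and~\ref{lemma_theta_u_theta_1}). Under R\&D competition, Lemma~\ref{prop_equilibrium} tells us the innovation probability is exactly $\theta_1$, so the task reduces to showing $\theta^* > \theta_1$, where $\theta^*$ is the RJV cut-off from Lemma~\ref{prop_RJVInvestment}.

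First, I would translate the two hypotheses into comparisons between the relevant cut-offs. The condition $B > \bar{B}(\rho) = \tfrac{1}{2}\int_0^{\theta_1} C(\theta)\,d\theta$ means $2B > \int_0^{\theta_1} C(\theta)\,d\theta$; by the definition of $\theta^{B}$ (including the boundary case $\theta^{B}=1$), this is equivalent to $\theta^{B} > \theta_1$. The condition $\rho > \bar{\rho}$ gives $\theta^{u} > \theta_1$ directly by Lemma~\ref{lemma_theta_u_theta_1}. Finally, the standing hypothesis $2\pi_{II} \leq \pi_{I\mathit{0}} + \pi_{\mathit{0}\mathit{0}}$ gives $\theta^{\rho} \leq \theta_1$ by Lemma~\ref{lemma_theta_rho_theta_1}.

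Next, I would walk through the three cases of Lemma~\ref{prop_RJVInvestment}. The case $\theta^{B} < \theta^{\rho}$ is ruled out, since $\theta^{B} > \theta_1 \geq \theta^{\rho}$. In the case $\theta^{B} \in [\theta^{\rho}, \theta^{u}]$, the RJV cut-off equals $\theta^{*} = \theta^{B} > \theta_1$. In the case $\theta^{B} > \theta^{u}$, it equals $\theta^{*} = \theta^{u} > \theta_1$. In both remaining cases $\theta^{*} > \theta_1$, which yields the strict inequality in innovation probabilities.

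The argument is essentially a case enumeration once the hypotheses are translated into cut-off comparisons, so no step is really an obstacle; the only subtlety is to handle the boundary $\theta^{B}=1$ in the definition and to keep Lemmas~\ref{lemma_theta_rho_theta_1} and~\ref{lemma_theta_u_theta_1} (including the degenerate subcase $\bar{\rho} = \infty$, i.e., $\pi_{II} = \pi_{\mathit{0}\mathit{0}}$) straight when invoking them.
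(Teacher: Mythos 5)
Your proposal is correct and follows essentially the same route as the paper's proof: translate $B>\bar{B}(\rho)$ into $\theta^{B}>\theta_1$, use Lemma~\ref{lemma_theta_rho_theta_1} to rule out the case $\theta^{B}<\theta^{\rho}$, and then observe that in the two remaining cases of Lemma~\ref{prop_RJVInvestment} the RJV cut-off is $\theta^{B}>\theta_1$ or $\theta^{u}>\theta_1$ (the latter via Lemma~\ref{lemma_theta_u_theta_1}). The boundary considerations you flag ($\theta^{B}=1$ and $\bar{\rho}=\infty$) are handled correctly and do not change the argument.
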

\begin{proof} 
$B >\int_0^{\theta_1} C(\theta)d\theta/2$ implies  $\int_0^{\theta^{B}} C(\theta)d\theta >\int_0^{\theta_1} C(\theta)d\theta$ and therefore $\theta_1 < \theta^{B} $. Furthermore, by Lemma \ref{lemma_theta_rho_theta_1}, $\theta^\rho \leq \theta_1$  so that $\theta^\rho < \theta^{B}$. Then, either $\theta^{B}\in(\theta^\rho, \theta^u)$ or $\theta^{B} \geq \theta^u $. If $\theta^{B}\in(\theta^\rho, \theta^u)$, then the RJV invests in all projects in the set $(0, \theta^{B})$  and discovers the innovation with probability $\theta^{B}$. 
Without the RJV, in any equilibrium,  the firms invest in projects in the set $(0, \theta_1)$ and the innovation is discovered with probability $\theta_1$. Since $\theta^{B} > \theta_1$ it immediately follows that the probability of innovation strictly increases under the RJV.

Next, suppose $\theta^{B} \geq \theta^u $. Then, the RJV invests in all projects in the set $(0, \theta^u )$  and discovers the innovation with probability $\theta^u$. Since $\rho > \bar{\rho}$ implies $\theta^u > \theta_1$ by Lemma \ref{lemma_theta_u_theta_1}, it follows that the probability of innovation strictly increases under the RJV.
 \end{proof}

\begin{lemma}\label{lemma_Prop3ii}
Suppose $2\pi_{II} \leq \pi_{I\mathit{0}} + \pi_{\mathit{0}\mathit{0}}$. If the probability that the innovation will be discovered is strictly larger under the RJV than under competition, then  $B > \bar{B}(\rho)$ and $\rho > \bar{\rho}$.
\end{lemma}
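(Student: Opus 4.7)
The plan is to establish the contrapositive of the statement directly from the characterization of $\theta^{\ast}$ in Lemma \ref{prop_RJVInvestment}. Assume that the RJV strictly increases the innovation probability, i.e., $\theta^{\ast} > \theta_1$. By Lemma \ref{prop_RJVInvestment}, $\theta^{\ast}$ equals one of $\theta^{\rho}$, $\theta^{B}$, or $\theta^{u}$, depending on where $\theta^{B}$ lies relative to $\theta^{\rho}$ and $\theta^{u}$. The goal is to show that in every case where the strict inequality $\theta^{\ast} > \theta_1$ can arise, both $B > \bar{B}(\rho)$ and $\rho > \bar{\rho}$ must hold.

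First I would eliminate the case $\theta^{B} < \theta^{\rho}$. Here $\theta^{\ast} = \theta^{\rho}$, but the hypothesis $2\pi_{II} \leq \pi_{I\mathit{0}} + \pi_{\mathit{0}\mathit{0}}$ combined with Lemma \ref{lemma_theta_rho_theta_1} yields $\theta^{\rho} \leq \theta_1$. This contradicts $\theta^{\ast} > \theta_1$, so this case cannot occur. Thus we must be in one of the remaining two cases: either $\theta^{B} \in [\theta^{\rho}, \theta^{u}]$ (so $\theta^{\ast} = \theta^{B}$) or $\theta^{B} > \theta^{u}$ (so $\theta^{\ast} = \theta^{u}$). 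In the first subcase, $\theta^{\ast} > \theta_1$ reads $\theta^{B} > \theta_1$, and the ordering $\theta^{B} \leq \theta^{u}$ immediately gives $\theta^{u} > \theta_1$. In the second subcase, $\theta^{\ast} > \theta_1$ reads $\theta^{u} > \theta_1$, and the ordering $\theta^{B} > \theta^{u}$ immediately gives $\theta^{B} > \theta_1$. So in either subcase we have both $\theta^{B} > \theta_1$ and $\theta^{u} > \theta_1$.

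It remains to translate these two inequalities back into the statements $B > \bar{B}(\rho)$ and $\rho > \bar{\rho}$. The inequality $\theta^{B} > \theta_1$ is equivalent to $\int_0^{\theta^{B}} C(\theta)\,d\theta > \int_0^{\theta_1} C(\theta)\,d\theta$ by monotonicity of the integral (since $C$ is strictly positive on $(0,1)$ and $\theta^{B} > \theta_1 > 0$), which by the defining relation $\int_0^{\theta^{B}} C(\theta)\,d\theta = 2B$ is exactly $B > \bar{B}(\rho)$. For the second, Lemma \ref{lemma_theta_u_theta_1} directly translates $\theta^{u} > \theta_1$ into $\rho > \bar{\rho}$, yielding the desired conclusion. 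I do not anticipate any serious obstacle; the only subtlety is making sure the degenerate case $\pi_{II} = \pi_{\mathit{0}\mathit{0}}$ (so that $\bar{\rho} = \infty$) is ruled out, which follows because in that case $\theta^{u} = \theta^{\rho} = 0$, so $\theta^{\ast} > \theta_1$ is impossible and there is nothing to prove.
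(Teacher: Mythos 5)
Your proof is correct and follows essentially the same route as the paper's: use Lemma \ref{lemma_theta_rho_theta_1} to rule out the case $\theta^{\ast}=\theta^{\rho}$, deduce $\theta^{B}>\theta_1$ and $\theta^{u}>\theta_1$ in the remaining two cases of Lemma \ref{prop_RJVInvestment}, and translate these via the definition of $\theta^{B}$ and Lemma \ref{lemma_theta_u_theta_1} into $B>\bar{B}(\rho)$ and $\rho>\bar{\rho}$. Your explicit treatment of the degenerate case $\pi_{II}=\pi_{\mathit{0}\mathit{0}}$ is a small but welcome addition; otherwise the argument matches the paper's.
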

\begin{proof}
As $2\pi_{II} \leq \pi_{I\mathit{0}} + \pi_{\mathit{0}\mathit{0}}$, Lemma \ref{lemma_theta_rho_theta_1} implies $\theta^\rho \leq \theta_1$. 
Hence, if the probability that the innovation will be discovered is strictly larger under the RJV than under competition, then $\theta^{B}>\theta^\rho$ by Lemma \ref{prop_RJVInvestment}. 
Therefore, either $\theta^{B}\in(\theta^\rho, \theta^u)$ or $\theta^{B} \geq \theta^u$. 
If $\theta^{B}\in(\theta^\rho, \theta^u)$, then,  by Lemma \ref{prop_RJVInvestment}, the increase in the probability of discovering the innovation under the RJV implies $\theta^{B} > \theta_1$, so that  $\theta^u>\theta^{B} > \theta_1$. 
If $\theta^{B} \geq \theta^u$, then the increase in the probability of discovering the innovation under RJV implies $\theta^u > \theta_1$, so that  $\theta^{B} \geq \theta^u> \theta_1$. In either case, both $\theta^u > \theta_1$ and $\theta^{B} > \theta_1$.

Note that $\theta^{B} > \theta_1$ implies 
\begin{align*}
    \int_0^{\theta^{B}} C(\theta)d\theta > \int_0^{\theta_1} C(\theta)d\theta.
\end{align*}
It follows immediately that $B >\int_0^{\theta_1} C(\theta)d\theta/2 = \bar{B}(\rho)$. Furthermore, $\theta^u > \theta_1$ implies, by Lemma \ref{lemma_theta_u_theta_1}, that $\rho > \bar{\rho}$.
\end{proof}

Finally, we prove part (b) of (ii).
With moderate or intense competition, $2\pi_{II}\leq\pi_{I\mathit{0}}+\pi_{\mathit{0}\mathit{0}}$. If the formation of the RJV strictly increases the probability of discovering the innovation, then, by Lemma \ref{lemma_Prop3ii}, $B > \bar{B}(\rho)$ and $\rho > \bar{\rho}$.
Using the same argument as in the proof of Lemma \ref{lemma_Prop3ii}, we have that either $\theta^{B}\in(\theta^\rho, \theta^u)$ or $\theta^{B} \geq \theta^u $. If $\theta^{B} \in (\theta^\rho,\theta^u)$, then by Lemma \ref{prop_RJVInvestment}, the total costs of the RJV are
\begin{equation}
    \int_0^{\theta^{B}}C(\theta)d\theta = 2B.
\end{equation}
If $\theta^{B} \geq \theta^u$, then by Lemma \ref{prop_RJVInvestment}, the total cost of the RJV are
\begin{equation}
    \int_0^{\theta^u}C(\theta)d\theta \leq 2B.
\end{equation}
By Lemma \ref{prop_equilibrium}, the total costs for equilibrium strategies  $r^*_i$ and $r^*_j$ under competition are
\begin{align*}
   (1+\rho) \int_0^1 [r^*_i(\theta)+r^*_j(\theta)]C(\theta)&d\theta-2\rho B \\
    &=(1+\rho)\left[2\int_0^{\theta_2}C(\theta)d\theta + \int_{\theta_2}^{\theta_1}[r^*_i(\theta)+r^*_j(\theta)]C(\theta) d\theta \right]-2\rho B \\
    &>   (1+\rho)\left[2B + \int_{\theta_2}^{\theta_1}[r^*_i(\theta)+r^*_j(\theta)]C(\theta)d\theta \right]-2\rho B \\
    &= 2B + (1+\rho) \int_{\theta_2}^{\theta_1}[r^*_i(\theta)+r^*_j(\theta)]C(\theta)d\theta \\
    &\geq 2B,
\end{align*}
where the first inequality follows from Assumption \ref{ass_budget}, and the second inequality from $\theta_1 \geq \theta_2$ and $r^*_i(\theta)+r^*_j(\theta) \geq 0$ for any $\theta$.

It immediately follows that the total cost under competition is weakly larger than the total cost under RJV, which proves the proposition.\qed

\subsection{Proof of Proposition \ref{Prop_CS}}

Denote with $\mathcal{P}^{com}_{II}$ the probability that both firms discover the innovation under competition and with $\mathcal{P}^{com}_{I\mathit{0}}$ the probability that a single firm discovers the innovation under competition. Analogously, let $\mathcal{P}^{rjv}_{II}$ be the probability that the innovation is discovered under the RJV. 

The expected consumer surplus is strictly higher under RJV than under competition if
\begin{align}
    \mathcal{P}^{rjv}_{II} CS_{II} + \left[ 1 - \mathcal{P}^{rjv}_{II} \right] CS_{\mathit{0}\mathit{0}} > \nonumber \\  \mathcal{P}^{com}_{II}CS_{II} + \mathcal{P}^{com}_{I\mathit{0}} CS_{I\mathit{0}} + \left[1 - \mathcal{P}^{com}_{II} - \mathcal{P}^{com}_{I\mathit{0}} \right]CS_{\mathit{0}\mathit{0}}. \label{eq_prop_cs1}
\end{align}

We proceed to show that this holds under the assumptions of the proposition. First, observe that by Assumption \ref{ass_CS}, $CS_{II}>CS_{I\mathit{0}}$, so that 
\begin{align}
    \left[ \mathcal{P}^{com}_{II} + \mathcal{P}^{com}_{I\mathit{0}} \right] CS_{II} + \left[1 - \mathcal{P}^{com}_{II} - \mathcal{P}^{com}_{I\mathit{0}} \right]CS_{\mathit{0}\mathit{0}}  \geq \nonumber \\  
    \mathcal{P}^{com}_{II}CS_{II} + \mathcal{P}^{com}_{I\mathit{0}} CS_{I\mathit{0}} + \left[1 - \mathcal{P}^{com}_{II} - \mathcal{P}^{com}_{I\mathit{0}} \right]CS_{\mathit{0}\mathit{0}}. \label{eq_prop_cs2}
\end{align}

Second, since the RJV increases the probability of innovation, $\mathcal{P}^{rjv}_{II} > \mathcal{P}^{com}_{II} + \mathcal{P}^{com}_{I\mathit{0}}$ and since by Assumption \ref{ass_CS}, $CS_{II}>CS_{\mathit{0}\mathit{0}}$, it must be that 
\begin{align}
    \mathcal{P}^{rjv}_{II} CS_{II} + \left[1 - \mathcal{P}^{rjv}_{II} \right]CS_{\mathit{0}\mathit{0}} > \nonumber \\
    \left[ \mathcal{P}^{com}_{II} + \mathcal{P}^{com}_{I\mathit{0}} \right] CS_{II} + \left[1 - \mathcal{P}^{com}_{II} - \mathcal{P}^{com}_{I\mathit{0}} \right]CS_{\mathit{0}\mathit{0}} . \label{eq_prop_cs3}
\end{align}

Finally, observe that combining inequalities \eqref{eq_prop_cs3}  and \eqref{eq_prop_cs2} gives inequality \eqref{eq_prop_cs1}, which completes the proof. \qed

\subsection{Profitability of RJVs: Proof of Proposition \ref{PropICNew}}
Using (\ref{WeakIC}), the RJV  strictly increases gross profits if and only if%
\begin{gather}
2\theta^{\ast}\pi_{II}  +2\left(  1-\theta^{\ast}\right)
\pi_{\mathit{0}\mathit{0}}  >\label{ProfDiffA2}\\
2\theta_{2}\pi_{II}  +\left(  \theta_{1}-\theta_{2}\right)
\left(  \pi_{I\mathit{0}}  +\pi_{\mathit{0}I}  \right)  +2\left(
1-\theta_{1}\right)  \pi_{\mathit{0}\mathit{0}} \text{,} \nonumber
\end{gather}
which can be rewritten as
\begin{gather}
2\left(  \theta_{1}-\theta_{2}\right)  \pi_{II}  +2\left(
\theta^{\ast}-\theta_{1}\right)  \pi_{II}  >\label{ProfDiffB2}%
\\
\left(  \theta_{1}-\theta_{2}\right)  \left(  \pi_{I\mathit{0}}
+\pi_{\mathit{0}I}  \right)  +2\left(  \theta^{\ast}-\theta_{1}\right)
\pi_{\mathit{0}\mathit{0}}  \text{.}\nonumber
\end{gather}

We use this condition to prove Proposition \ref{PropICNew}(i) to (iii) in turn, except for the statement in (ii) that, with moderate competition profitable RJVs need not increase innovation, which we will deal with separately in Proposition \ref{PropICBad} below.

(i) By Lemma \ref{lemma_theta_rho_theta_1}, soft competition ($2\pi_{II}>\pi_{I\mathit{0}}+\pi_{\mathit{0}\mathit{0}}$), implies $\theta^\rho> \theta_1$ and thus, by Lemma \ref{prop_RJVInvestment}, $\theta^*> \theta_1$. 
Further, observe that $2\pi_{II}>\pi_{I\mathit{0}}+\pi_{\mathit{0}\mathit{0}}$ implies $\pi_{II}>\pi_{\mathit{0}\mathit{0}}$, because $\pi_{II}=\pi_{\mathit{0}\mathit{0}}$ would imply $\pi_{II}>\pi_{I\mathit{0}}$, which contradicts Assumption \ref{ass_r}(iii). Thus, under soft competition, inequality \eqref{ProfDiffB2} holds and the RJV strictly increases gross profit.

If $\theta^{\rho} \leq \theta^{B}$, Lemma \ref{prop_RJVInvestment}
implies that the RJV spends exactly its budget or less;\ hence an RJV does not
increase R\&D expenditure and, as it strictly increases gross profits, it must also strictly increase net profits. If instead $\theta^{\rho} > \theta^{B}$, then Lemma \ref{prop_RJVInvestment}
implies that $\theta^*=\theta^{\rho}>\theta_1$. Thus, the RJV strictly increases the probability of innovation. By revealed preference, the RJVs profit must be at least as high as if it had chosen $\theta^*=\theta_1$. Even this choice would lead to higher net profits than R\&D competition: First, it saves the R\&D costs of duplication; second, for those values of $\theta$ where total gross profits under the RJV differ from those under R\&D competition ($\theta \in (\theta_2,\theta_1)$), total gross profits under the RJV are strictly higher than under R\&D competition, as  $2\pi_{II}  >\pi_{I\mathit{0}}
+\pi_{\mathit{0}\mathit{0}}\geq \pi_{I\mathit{0}}
+\pi_{\mathit{0}I}$.

(ii) By Proposition \ref{prop_equilibrium}, moderate competition and $B>\bar{B}(\rho)$ and $\rho>\overline{\rho}$ imply that the RJV weakly reduces cost
and that $\theta^{\ast}>\theta_{1}$. 
Further, $\rho>\bar{\rho}$ implies that $\pi_{II}>\pi_{\mathit{0}\mathit{0}}$. Otherwise, we would have $\bar{\rho}=\infty$ if $\pi_{II}=\pi_{\mathit{0}\mathit{0}}$, which contradicts $\rho>\bar{\rho}$. Hence, since $2\pi_{II}\geq\pi_{I\mathit{0}}+\pi_{\mathit{0}I}$ under moderate competition, together with $\pi_{II}>\pi_{\mathit{0}\mathit{0}}$ and $\theta^{\ast}>\theta_{1}$, inequality \eqref{ProfDiffB2} holds and the RJV strictly increases gross profits.  As it does not increase costs, it also increases net profits.

(iii)
Suppose first that $\theta^{\ast}>\theta_{1}$.
Rearranging (\ref{ProfDiffA2}), the requirement that the expected gross profit
difference is strictly positive becomes
\begin{equation}
\left(  \theta^{\ast}-\theta_{1}\right)  \left(  2\pi_{II}
-2\pi_{\mathit{0}\mathit{0}}  \right) > \left(  \theta_{1}-\theta_{2}\right)
\left(  \pi_{I\mathit{0}}  +\pi_{\mathit{0}I}  -2\pi_{II}  \right) \nonumber
\end{equation}
As $2\pi_{II}  <\pi_{I\mathit{0}}  +\pi_{\mathit{0}I} $ (intense competition) implies  $\theta_1>\theta_2$ and the restriction on $\Psi$ can only hold if $\pi_{II}  >\pi_{\mathit{0}\mathit{0}} $, we can rearrange again to get%
\[
\frac{\theta^{\ast}-\theta_{1}}{\theta_{1}-\theta_{2}} > \frac{\pi_{I\mathit{0}}  +\pi_{\mathit{0}I}  -2\pi_{II}  }{2\pi_{I,I}  -2\pi_{\mathit{0}\mathit{0}}  }=\Psi\text{.}
\]
Thus, provided $\theta^{\ast} > \theta_{1}$, $\frac{\theta^{\ast}-\theta_{1}
}{\theta_{1}-\theta_{2}}>\Psi$ is equivalent with the requirement that the RJV strictly
increases expected gross profits. But $\pi_{I\mathit{0}}  +\pi_{\mathit{0}I}  -2\pi_{II} >  0$ implies $\Psi > 0$. Using $\theta
_{1}-\theta_{2} > 0,$ $\frac{\theta^{\ast}-\theta_{1}}{\theta_{1}-\theta_{2}
} > \Psi$ thus implies $\theta^{\ast}>\theta_{1}+\Psi\left(  \theta_{1}
-\theta_{2}\right) >\theta_{1}$.

Further, $2\pi_{II}
<\pi_{I\mathit{0}}  +\pi_{\mathit{0}I} \leq\pi_{I\mathit{0}}  +\pi_{\mathit{0}\mathit{0}} $ implies that $\theta_1>\theta^{\rho}$ and $\Psi>0$. Hence, $\frac{\min\{\theta^{B},\theta^{u}\}-\theta_{1}}%
{\theta_{1}-\theta_{2}} > \Psi$ implies $\theta^{B} \geq \theta^{\rho}$. Therefore, using Lemma \ref {prop_RJVInvestment}, we obtain that $\theta^{\ast}=\min\{\theta^{B},\theta^{u}\}$, so
that the RJV is not spending more than its budget and hence not more than the
individual firms.
Therefore, the RJV strictly increases gross profit and (as it does not increase costs) net profits.

Next, we finish the proof by dealing deal with the possibility that an RJV may be profitable in spite of reducing innovation incentives (see Proposition \ref{PropICNew}(ii)). We sharpen the statement as follows.

\begin{proposition}[Profitable innovation-reducing RJV]\label{PropICBad} $ $ \newline
    Suppose that the following conditions hold:
    \begin{enumerate}[label=(\roman*)]
   \item $2\pi_{II} -(\pi_{I\mathit{0}}  +\pi_{\mathit{0}\mathit{0}}) =0$.
    \item $B \leq \bar{B}(\rho)$ or $\rho \leq \bar{\rho}$.
    \item $\pi_{II} > \pi_{\mathit{0}I}$.
   \end{enumerate}
    Then there exists some $\hat{\pi}_{I\mathit{0}}>\pi_{I\mathit{0}}$ such that for all $\pi_{I\mathit{0}}' \in ( \pi_{I\mathit{0}}, \hat{\pi}_{I\mathit{0}}) $ and keeping other parameters fixed, the RJV is profitable, but reduces the innovation. 
\end{proposition}
\begin{proof}
   We first show that when $2\pi_{II} -(\pi_{I\mathit{0}}  +\pi_{\mathit{0}\mathit{0}}) =0$, the RJV leaves the innovation probability unaffected. To see this, first note that, together with condition (ii), Proposition \ref{prop_equilibrium} implies that the innovation probability is not strictly higher in the RJV than under R\&D competition. Next, $2\pi_{II} -(\pi_{I\mathit{0}}  +\pi_{\mathit{0}\mathit{0}}) =0$ implies that $\theta_{1}=\theta^{\rho}$. As $\theta^* \geq \theta^{\rho}$ by Lemma \ref{prop_RJVInvestment}, we obtain $\theta^* \geq \theta_1$, so that the RJV does not have a negative effect on the innovation probability either. All told, there is no effect of the RJV on the innovation probability.

Next, still assuming that $2\pi_{II} -(\pi_{I\mathit{0}}  +\pi_{\mathit{0}\mathit{0}}) =0$, total gross profits are weakly higher in the RJV than under competition for $\theta \in (\theta_2,\theta_1)$ because $2\pi_{II}\geq\pi_{I\mathit{0}}  +\pi_{\mathit{0}I}$. As gross profits are the same with and without RJV for the remaining realizations of $\theta$, expected total gross profits are at least weakly higher with the RJV than without. Note that by condition (iii), $\theta_2>0$, so that the costs with the RJV are strictly lower than the total costs with R\&D competition, with the difference being equal to $\int_0^{\theta_2}C(\theta)d\theta$.

Finally, observe that a ceteris paribus increase from $\pi_{I\mathit{0}}$ to $\pi_{I\mathit{0}}'$ does not affect $\theta_2$ nor $\theta^*$ but increases $\theta_1$ to some $\theta'_1 > \theta^*$. By continuity, there exists some $\hat{\pi}_{I\mathit{0}}$ such that for all $\pi_{I\mathit{0}}' \in (\pi_{I\mathit{0}}, \hat{\pi}_{I\mathit{0}}) $, the change in total gross profits under R\&D competition is smaller than $\int_0^{\theta_2}C(\theta)d\theta$. For all such $\pi_{I\mathit{0}}'$, the RJV is profitable but it decreases the innovation probability from $\theta'_1$ to $\theta^*$.
 
\end{proof}

Note that conditions guarantee that this case arises when competition is moderate, but parameters are close to the soft competition regime. Then, under condition (ii) in Proposition \ref{PropICBad}, an RJV woudd reduce innovation slightly, but without major adverse effects on gross profits. The cost-reducing effect of an RJV will then suffice to make it profitable.

\subsection{Equilibrium Selection via Risk Dominance}
\label{SecAppRisk}

Consider a variant of the original model, where the firms have different borrowing costs. Firm $h$ has a borrowing cost of $\rho_{h}$, while firm $\ell$ has a borrowing cost of $\rho_{\ell} = \rho_{h} - \epsilon$, where $\epsilon > 0$ is arbitrarily small. Define the following cutoffs $\theta_i^k$:
\begin{align*}
    (1+\rho_h)C(\theta_1^h) &= \pi_{I0} - \pi_{00}, \\
    (1+\rho_h)C(\theta_2^h) &= \pi_{II} - \pi_{0I}, \\
    (1+\rho_{\ell})C(\theta_1^{\ell}) &= \pi_{I0} - \pi_{00}, \\
    (1+\rho_{\ell})C(\theta_2^{\ell}) &= \pi_{II} - \pi_{0I}.
\end{align*}
There always exists $\epsilon$ small enough such that $\theta_2^{\ell} < \theta_1^h$. Then, the following inequality holds:
\begin{align*}
    \theta_2^h < \theta_2^{\ell} < \theta_1^h < \theta_1^{\ell}.
\end{align*}

We maintain Assumption \ref{ass_budget}, with $\theta_2^h$ substituting $\theta_2$. Using arguments which are by now familiar, we can show that in any equilibrium:
\begin{enumerate}[label=(\roman*)]
    \item both firms invest in $[0, \theta_2^h)$,
    \item firm $\ell$ invests in $[\theta_2^h, \theta_2^{\ell})$,
    \item either firm $\ell$ or firm $h$ invests in each $\theta$ in $[\theta_2^{\ell}, \theta_1^h)$,
    \item firm $\ell$ invests in all $\theta$ in $[\theta_1^h, \theta_1^{\ell})$,
    \item and no firm invests in $[\theta_1^{\ell}, 1)$.
\end{enumerate}

Denote the strategy $(r^d_{\ell}, r^d_h)$ as the following equilibrium strategy:
\[
r^d_{\ell}(\theta) = \begin{cases}
1 & \text{for all } \theta \in [0, \theta_1^{\ell}) \\
0 & \text{otherwise}
\end{cases}
\]
\[
r^d_h(\theta) = \begin{cases}
1 & \text{for all } \theta \in [0, \theta_2^{h}) \\
0 & \text{otherwise}.
\end{cases}
\]
Denote with $(r_{\ell}^*, r_h^*)$ any other equilibrium strategy. Let \( D \subseteq [\theta_2^{\ell}, \theta_1^{h}] \) be the set of projects $\theta$ on which $(r_{\ell}^*, r_h^*)$ differs from $(r^d_{\ell}, r^d_h)$. Let \( \mathcal{M}^D = \int_D dj >0\) be the probability that the successful project is in the set $D$. 

We are interested in the (infinite) set of 2x2 games where firm \( \ell \) can choose between \( r_{\ell}^d \)  and \( r_{\ell}^* \), and firm \( h \) can choose between \( r_h^d \) and \( r_h^* \).
\begin{table}[htb!]
    \centering
    \setlength{\extrarowheight}{2pt}
    \begin{tabular}{cc|c|c|}
      & \multicolumn{1}{c}{} & \multicolumn{2}{c}{Firm $h$}\\
      & \multicolumn{1}{c}{} & \multicolumn{1}{c}{$r^*_h$}  & \multicolumn{1}{c}{$r^d_h$} \\\cline{3-4}
      \multirow{2}*{Firm $\ell$}  & $r^*_{\ell}$ & $\mathbb{E}\Pi_\ell(r_{\ell}^*, r_h^*), \mathbb{E}\Pi_h(r_h^*,r_{\ell}^*)$ & $\mathbb{E}\Pi_\ell(r_{\ell}^*, r_h^d), \mathbb{E}\Pi_h(r_h^d, r_\ell^*)$ \\\cline{3-4}
      & $r^d_{\ell}$ & $\mathbb{E}\Pi_\ell(r_{\ell}^d, r_h^*), \mathbb{E}\Pi_h(r_h^*, r_{\ell}^d)$ & $\mathbb{E}\Pi_\ell(r_{\ell}^d, r_h^d), \mathbb{E}\Pi_h(r_h^d, r_\ell^d)$ \\\cline{3-4}
    \end{tabular}
    \caption{Payoff matrix of the 2x2 game.}
    \label{payoff_matrix_2x2_game}
  \end{table}

The expected equilibrium payoffs are as follows:
\begin{align*}
    \mathbb{E}\Pi_{\ell}(r_{\ell}^d, r_h^d) &= \theta_2^{h} \pi_{II} + \left( \theta_1^{\ell} - \theta_2^{h} \right) \pi_{I0} + \left( 1 - \theta_1^{\ell} \right) \pi_{00} \\
    &\quad - \left( 1 + \rho_\ell \right) \int_0^{\theta_1^{{\ell}}} C(j) dj + \rho_{\ell} B \\
    \mathbb{E}\Pi_{h}(r_h^d,r_{\ell}^d) &= \theta_2^{h} \pi_{II} + \left( \theta_1^{\ell} - \theta_2^{h} \right) \pi_{0I} + \left( 1 - \theta_1^{\ell} \right) \pi_{00} \\
    &\quad - \left( 1 + \rho_h \right) \int_0^{\theta_2^{{\ell}}} C(j) dj + \rho_{h} B \\
    \mathbb{E}\Pi_{\ell}(r_{\ell}^*, r_h^*) &= \theta_2^{h} \pi_{II} + \left( \theta_1^{\ell} - \theta_2^{h} - \mathcal{M}^D \right) \pi_{I0} + \mathcal{M}^D \pi_{0I} + \left( 1 - \theta_1^{\ell} \right) \pi_{00} \\
    &\quad - \left( 1 + \rho_\ell \right) \int_{[0,\theta_1^{\ell}) \setminus \mathcal{M}^D} C(j) dj + \rho_{\ell} B\\
    \mathbb{E}\Pi_{h}(r_h^*,r_{\ell}^*) &= \theta_2^{h} \pi_{II} + \left( \theta_1^{\ell} - \theta_2^{h} - \mathcal{M}^D\right) \pi_{0I} + \mathcal{M}^D \pi_{I0} + \left( 1 - \theta_1^{\ell} \right) \pi_{00} \\
    &\quad - \left( 1 + \rho_h \right) \int_{[0,\theta_2^{h}) \cup \mathcal{M}^D} C(j) dj + \rho_{h} B
\end{align*}

Non-equilibrium payoffs are as follows:
\begin{align*}
    \mathbb{E}\Pi_{\ell}(r_{\ell}^d, r_h^*) &= \theta_2^{h} \pi_{II} + \left( \theta_1^{\ell} - \theta_2^{h} - \mathcal{M}^D\right) \pi_{I0} + \mathcal{M}^D \pi_{II} + \left( 1 - \theta_1^{\ell} \right) \pi_{00} \\
    &\quad - \left( 1 + \rho_\ell \right) \int_0^{\theta_1^{{\ell}}} C(j) dj + \rho_{\ell} B \\
    \mathbb{E}\Pi_{h}(r_h^*,r_{\ell}^d) &= \theta_2^{h} \pi_{II} + \left( \theta_1^{\ell} - \theta_2^{h} - \mathcal{M}^D \right) \pi_{0I} + \mathcal{M}^D \pi_{II} + \left( 1 - \theta_1^{\ell} \right) \pi_{00} \\
    &\quad - \left( 1 + \rho_h \right) \int_{[0,\theta_2^{h}) \cup \mathcal{M}^D} C(j) dj + \rho_{h} B \\
    \mathbb{E}\Pi_{\ell}(r_{\ell}^*, r_h^d) &= \theta_2^{h} \pi_{II} + \left( \theta_1^{\ell} - \theta_2^{h} - \mathcal{M}^D \right) \pi_{I0} + \mathcal{M}^D \pi_{00} + \left( 1 - \theta_1^{\ell} \right) \pi_{00} \\
    &\quad - \left( 1 + \rho_\ell \right) \int_{[0,\theta_1^{\ell}) \setminus \mathcal{M}^D} C(j) dj + \rho_{\ell} B\\
    \mathbb{E}\Pi_{h}(r_h^d,r_{\ell}^*) &= \theta_2^{h} \pi_{II} + \left( \theta_1^{\ell} - \theta_2^{h} - \mathcal{M}^D\right) \pi_{0I} + \mathcal{M}^D \pi_{00} + \left( 1 - \theta_1^{\ell} \right) \pi_{00} \\
    &\quad - \left( 1 + \rho_h \right) \int_0^{\theta_2^{h}} C(j) dj + \rho_{h} B
\end{align*}

Let a game $\mathcal{G}$ be any 2x2 game in which the actions of the firm $i \in \{\ell,h\}$ are $\{r_{i}^d,r_{i}^*\}$, where $(r_{\ell}^d,r_{h}^d)$ the equilibrium specified above and $(r_{\ell}^*,r_{h}^*)$ is any other equilibrium of the original game. The payoffs are as specified in Table \ref*{payoff_matrix_2x2_game}.  For any such game, we prove the following result.

\begin{proposition}[Risk dominance]
    In any game $\mathcal{G}$, the strategy profile $(r_{\ell}^d,r_{h}^d)$ is risk-dominant.    
\end{proposition}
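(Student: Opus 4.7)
The plan is to verify the Harsanyi--Selten product condition directly: $(r_\ell^d, r_h^d)$ risk-dominates $(r_\ell^*, r_h^*)$ iff the product of unilateral deviation losses at the former profile weakly exceeds the product at the latter. Denote these four deviation losses by
\begin{align*}
\Delta_\ell^d &= \mathbb{E}\Pi_\ell(r_\ell^d, r_h^d) - \mathbb{E}\Pi_\ell(r_\ell^*, r_h^d), &
\Delta_h^d &= \mathbb{E}\Pi_h(r_h^d, r_\ell^d) - \mathbb{E}\Pi_h(r_h^*, r_\ell^d),\\
\Delta_\ell^* &= \mathbb{E}\Pi_\ell(r_\ell^*, r_h^*) - \mathbb{E}\Pi_\ell(r_\ell^d, r_h^*), &
\Delta_h^* &= \mathbb{E}\Pi_h(r_h^*, r_\ell^*) - \mathbb{E}\Pi_h(r_h^d, r_\ell^*).
\end{align*}
The goal is to show $\Delta_\ell^d \cdot \Delta_h^d \geq \Delta_\ell^* \cdot \Delta_h^*$.

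First, I would substitute the eight explicit payoff expressions given above and simplify. Using the fact that $r_\ell^d$ and $r_\ell^*$ differ exactly on the set $D\subseteq[\theta_2^\ell,\theta_1^h]$ of measure $\mathcal{M}^D$, and similarly for firm $h$, the gross-profit and cost terms telescope to
\begin{align*}
\Delta_\ell^d &= \mathcal{M}^D(\pi_{I\mathit{0}}-\pi_{\mathit{0}\mathit{0}}) - (1+\rho_\ell)\int_D C(j)\,dj, &
\Delta_h^d &= \mathcal{M}^D(\pi_{\mathit{0}I}-\pi_{II}) + (1+\rho_h)\int_D C(j)\,dj,\\
\Delta_\ell^* &= \mathcal{M}^D(\pi_{\mathit{0}I}-\pi_{II}) + (1+\rho_\ell)\int_D C(j)\,dj, &
\Delta_h^* &= \mathcal{M}^D(\pi_{I\mathit{0}}-\pi_{\mathit{0}\mathit{0}}) - (1+\rho_h)\int_D C(j)\,dj.
\end{align*}
These are symmetric under swapping the roles of the firms once one also swaps $(1+\rho_h)\leftrightarrow(1+\rho_\ell)$ in the cost term.

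Second, I would verify that all four quantities are strictly positive. Since both profiles are equilibria of the full game, each difference is non-negative; strict positivity then follows from the fact that $D$ has positive measure and lies strictly inside $(\theta_2^h,\theta_1^\ell)$, together with the cut-off definitions: for $j\in D$ one has $(1+\rho_\ell)C(j)<\pi_{I\mathit{0}}-\pi_{\mathit{0}\mathit{0}}$ (so $\Delta_\ell^d,\Delta_h^*>0$) and $(1+\rho_h)C(j)>\pi_{II}-\pi_{\mathit{0}I}$ (so $\Delta_h^d,\Delta_\ell^*>0$).

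Third, I would exploit the asymmetry introduced by $\rho_\ell=\rho_h-\epsilon<\rho_h$ to obtain the crosswise inequalities
\begin{equation*}
\Delta_\ell^d - \Delta_h^* = \Delta_h^d - \Delta_\ell^* = (\rho_h-\rho_\ell)\int_D C(j)\,dj = \epsilon\int_D C(j)\,dj > 0.
\end{equation*}
Hence $\Delta_\ell^d>\Delta_h^*>0$ and $\Delta_h^d>\Delta_\ell^*>0$, and multiplying these two strict inequalities gives $\Delta_\ell^d\cdot\Delta_h^d>\Delta_\ell^*\cdot\Delta_h^*$, establishing (strict) risk dominance.

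I expect the main obstacle to be mostly bookkeeping: carefully handling the four payoff expressions (in particular making sure I interpret $\mathcal{M}^D$ consistently as the Lebesgue measure of $D$ in probability terms and as the symbol for the set under integration) so that all cross-terms cancel and only the contrasts on $D$ survive. Once the simplified formulas are in hand, the decisive observation is the two-line identity linking $\Delta_\ell^d$ to $\Delta_h^*$ and $\Delta_h^d$ to $\Delta_\ell^*$, which is where the asymmetric borrowing costs bite; everything else is algebraic manipulation using the cut-off characterizations of $\theta_1^k$ and $\theta_2^k$.
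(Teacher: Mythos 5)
Your proposal is correct and follows essentially the same route as the paper: both verify the Harsanyi--Selten product criterion using exactly the four deviation-loss expressions you derive (your $\Delta_\ell^d,\Delta_h^d,\Delta_\ell^*,\Delta_h^*$ are the paper's $v_1,v_2,u_1,u_2$, with identical closed forms and the same positivity checks). The only difference is in the last step: the paper expands $v_1v_2\ge u_1u_2$ and cancels terms until it reduces to $(1+\rho_h)\bigl(\pi_{I0}-\pi_{00}+\pi_{0I}-\pi_{II}\bigr)\ge(1+\rho_\ell)\bigl(\pi_{I0}-\pi_{00}+\pi_{0I}-\pi_{II}\bigr)$, which holds by Assumption 1(iv), whereas your crosswise pairing $\Delta_\ell^d>\Delta_h^*>0$ and $\Delta_h^d>\Delta_\ell^*>0$ reaches the same conclusion (in strict form) slightly more directly.
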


\begin{proof}
For notational simplicity and consistency with \cite{harsanyi1988general}, we will denote the payoffs from the table above as follows:

\begin{table}[htb!]
    \centering
    \setlength{\extrarowheight}{2pt}
    \begin{tabular}{cc|c|c|}
      & \multicolumn{1}{c}{} & \multicolumn{2}{c}{Firm $h$}\\
      & \multicolumn{1}{c}{} & \multicolumn{1}{c}{$r^*_h$}  & \multicolumn{1}{c}{$r^d_h$} \\\cline{3-4}
      \multirow{2}*{Firm $\ell$}  & $r^*_{\ell}$ & $a_{11}, b_{11}$ & $a_{12}, b_{12}$ \\\cline{3-4}
      & $r^d_{\ell}$ & $a_{21}, b_{21}$ & $a_{22}, b_{22}$ \\\cline{3-4}
    \end{tabular}
  \end{table}

Using the notation from \cite{harsanyi1988general}, we have:
\begin{align*}
u_1 &= a_{11} - a_{21} = \mathcal{M}^D (\pi_{0I} -  \pi_{II}) + (1 + \rho_{\ell}) \int_{\mathcal{M}^D} C(j) dj > 0 \\
u_2 &= b_{11} - b_{12} = \mathcal{M}^D (\pi_{I0} - \pi_{00}) - (1 + \rho_h) \int_{\mathcal{M}^D} C(j) dj > 0 \\
v_1 &= a_{22} - a_{12} = \mathcal{M}^D (\pi_{I0} - \pi_{00}) - (1 + \rho_{\ell}) \int_{\mathcal{M}^D} C(j) dj > 0 \\
v_2 &= b_{22} - b_{21} = \mathcal{M}^D (\pi_{0I} - \pi_{II}) + (1 + \rho_h) \int_{\mathcal{M}^D} C(j) dj > 0
\end{align*}

By Theorem 3.9.1 in \cite{harsanyi1988general}, the strategy profile $(r_{\ell}^d,r_{h}^d)$ will be risk-dominant if $v_1v_2 \geq u_1 u_2$.

This is equivalent to showing that:
\begin{align*}
    \left[\mathcal{M}^D (\pi_{I0} - \pi_{00}) - (1 + \rho_{\ell}) \int_{\mathcal{M}^D} C(j) dj\right]\left[ \mathcal{M}^D (\pi_{0I} - \pi_{II}) + (1 + \rho_h) \int_{\mathcal{M}^D} C(j) dj \right] \geq \\
    \left[\mathcal{M}^D (\pi_{0I} -  \pi_{II}) + (1 + \rho_{\ell}) \int_{\mathcal{M}^D} C(j) dj \right]\left[ \mathcal{M}^D (\pi_{I0} - \pi_{00}) - (1 + \rho_h) \int_{\mathcal{M}^D} C(j) dj\right]
\end{align*}
$\iff$
\begin{align*}
    \left(\mathcal{M}^D\right)^2 (\pi_{I0} - \pi_{00})(\pi_{0I} - \pi_{II}) + \mathcal{M}^D (\pi_{I0} - \pi_{00})(1 + \rho_h) \int_{\mathcal{M}^D} C(j) dj \\
    - \mathcal{M}^D (\pi_{0I} - \pi_{II})(1 + \rho_{\ell}) \int_{\mathcal{M}^D} C(j) dj - (1 + \rho_{\ell}) (1 + \rho_{h})\left( \int_{\mathcal{M}^D} C(j) dj\right)^2 \geq \\
    \left(\mathcal{M}^D\right)^2 (\pi_{I0} - \pi_{00})(\pi_{0I} - \pi_{II}) - \mathcal{M}^D (\pi_{0I} - \pi_{II})(1 + \rho_h) \int_{\mathcal{M}^D} C(j) dj \\
    + \mathcal{M}^D (\pi_{I0} - \pi_{00})(1 + \rho_{\ell}) \int_{\mathcal{M}^D} C(j) dj - (1 + \rho_{\ell}) (1 + \rho_{h})\left( \int_{\mathcal{M}^D} C(j) dj\right)^2
\end{align*}
$\iff$
\begin{align*}
    (\pi_{I0} - \pi_{00})(1 + \rho_h) 
    -  (\pi_{0I} - \pi_{II})(1 + \rho_{\ell})   \geq \\
    -  (\pi_{0I} - \pi_{II})(1 + \rho_h) 
    + (\pi_{I0} - \pi_{00})(1 + \rho_{\ell})
\end{align*}
$\iff$
\begin{align*}
    (1 + \rho_h)(\pi_{I0} - \pi_{00}+\pi_{0I} - \pi_{II}) \geq
    (1 + \rho_{\ell})(\pi_{I0} - \pi_{00}+\pi_{0I} - \pi_{II})
\end{align*}
which always holds since $(\pi_{I0} - \pi_{00}+\pi_{0I} - \pi_{II})\geq0$ by Assumption 1(iv).
\end{proof}

 \subsection{Examples}\label{SecAppExamples}
 
 \subsubsection{Linear Cournot Competition}

We now sketch the details for the Cournot example of Section \ref{SecCournot}. Using the
notation $\alpha=a-c$, it is straightforward to show that, under the
assumption that $\alpha>I$ an equilibrium with positive outputs and profits
exists for both firms, so that the innovation is non-drastic. The equilibrium
profits are given as $\pi_{I\mathit{0}}  =\frac{1}{9}\frac{\left(
\alpha+2I\right)  ^{2}}{b}$, $\pi_{II}  =\frac{1}{9}\frac{\left(
\alpha+I\right)  ^{2}}{b}$, $\pi_{\mathit{0}\mathit{0}}  =\frac{1}{9}\frac
{\alpha^{2}}{b}$, $\pi_{\mathit{0}I}  =\frac{1}{9}\frac{\left(
\alpha-I\right)  ^{2}}{b}$. These expressions imply that, whenever $\alpha>I$,
Assumption 1 holds, as well as the stricter condition that competition is not soft
required by Proposition 1(ii). Next, Corollary 1 follows
directly from inserting these profit expressions in the term $\overline{\rho}
$. \ Furthermore, the boundary between intense and moderate competition, given by
$2\pi_{II}  =\pi_{I\mathit{0}}  +\pi_{\mathit{0}I}  $,
can be calculated as $\alpha=3I/2$.

\subsubsection{Differentiated Price Competition}

We now add further details for the case of price competition with inverse demand $p_{i}=1-q_{i}-bq_{j}$
for $b\in\left[  0,1\right)  $. We assume that cost differences are not too
large $\bigl( c_{i} < \frac{2-b-b^{2}+bc_{j}}{2-b^{2}}\bigr)$ for $i \in \{1,2\}, j \neq i$. Then standard calculations show that both equilibrium outputs are positive, with equilibrium profit
\begin{equation}\label{ProfitDPC}
\pi_{i}=\frac{\left(  2-b-b^{2}-\left(  2-b^{2}\right)  c_{i}+bc_{j}\right)
^{2}}{\left(  4-b^{2}\right)  ^{2}\left(  1-b^{2}\right)  }.
\end{equation}

Inserting appropriate values for $c_1$ and $c_2$ gives
\begin{align*}
\pi_{I\mathit{0}}   &  =\frac{\left(  2-b-b^{2}-\left(  2-b^{2}\right)
\left(  c-I\right)  +bc\right)  ^{2}}{\left(  4-b^{2}\right)  ^{2}\left(
1-b^{2}\right)  }\\
\pi_{II}   &  =\frac{\left(  2-b-b^{2}-\left(  2-b^{2}\right)
\left(  c-I\right)  +b\left(  c-I\right)  \right)  ^{2}}{\left(
4-b^{2}\right)  ^{2}\left(  1-b^{2}\right)  }\\
\pi_{\mathit{0}\mathit{0}}   &  =\frac{\left(  2-b-b^{2}-\left(  2-b^{2}\right)
c+bc\right)  ^{2}}{\left(  4-b^{2}\right)  ^{2}\left(  1-b^{2}\right)  }\\
\pi_{\mathit{0}I}   &  =\frac{\left(  2-b-b^{2}-\left(  2-b^{2}\right)
c+b\left(  c-I\right)  \right)  ^{2}}{\left(  4-b^{2}\right)  ^{2}\left(
1-b^{2}\right)  }%
\end{align*}

The requirement that all profits be non-negative $\bigl( c_{i}\leq
\frac{2-b-b^{2}+bc_{j}}{2-b^{2}} \bigr)$ is most demanding when $c_{i}=c$ and
$c_{j}=c-I$, in which case it can be guaranteed by assuming $\ $
\[
I<\frac{b^{2}c-2c-b+bc-b^{2}+2}{b}%
\]
In Figure \ref{fig:Bertrand}, we set $c=0.5$ and hence
\begin{equation}
I<\frac{0.5}{b}\left(  2-b^{2}-b\right)\nonumber
\end{equation}
The assumptions of the paper can easily be verified in this case. We also find expressions for the regions plotted in Figure \ref{fig:Bertrand}. After some rearrangements, the condition that $\pi_{I\mathit{0}}  +\pi_{\mathit{0}\mathit{0}}
 \leq 2\pi_{II} $ becomes
\begin{equation}
\left(  4-8b-2b^{2}+4b^{3}+b^{4}\right)  I\geq\\
8c+12b-4b^{2}c+6b^{3}c+2b^{4}c-12bc+4b^{2}-6b^{3}-2b^{4}-8\nonumber
\end{equation}
For $c=0.5$, this simplifies to%
\begin{equation}
4I-8bI-2b^{2}I+4b^{3}I+b^{4}I\geq-b^{4}-3b^{3}+2b^{2}+6b-4\nonumber
\end{equation}
The condition $\pi_{I\mathit{0}}  +\Pi\left(
0,I\right)  <2\pi_{II} $ becomes
\begin{equation}
4I-4b-8bI-3b^{2}I+4b^{3}I+b^{4}I-3b^{2}+2b^{3}+b^{4}+4\geq0\text{.}\nonumber
\end{equation}

\subsection{Mergers}

In this section, we first provide formal statements of the informal claims in the main text; thereafter, we state and prove the central result comparing mergers and RJVs.

\subsubsection{Optimal R\&D portfolio of Merged Entity}\label{SecAppMergerPortfolio}

We first describe the investment behavior of the merged entity in a similar way as for the RJV (see Lemma \ref{prop_RJVInvestment}).

\begin{lemma}\label{prop_mergerInvestment}
The merged entity chooses a single cut-off strategy with 
\begin{align*}
    \hat{\theta}= \begin{cases} 
    \theta^\rho_m  \quad &\text{if} \quad \theta^{B} < \theta^\rho_m \\
     \theta^{B} \quad &\text{if} \quad \theta^{B} \in [\theta^\rho_m,\theta^u_m] \\
     \theta^u_m  \quad &\text{if} \quad \theta^{B}>\theta^u_m.
    \end{cases}
\end{align*}
\end{lemma}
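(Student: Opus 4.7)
The plan is to mirror the proof of Lemma \ref{prop_RJVInvestment}, adapting it by replacing $2(\pi_{II}-\pi_{\mathit{0}\mathit{0}})$ with $\pi_I-\pi_{\mathit{0}}$ throughout. The merged firm's objective has the same structure as the RJV's---a linear benefit from the probability of innovation, minus a research cost schedule with a kink at total spending equal to $2B$---so the optimization reduces to exactly the same three-case logic.

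First, I would argue that restricting attention to single-cutoff strategies is without loss of generality. Because $C(\theta)$ is strictly increasing, for any target probability of innovation $\hat\theta = \int_0^1 r_m(\theta)\,d\theta$, the unique cost-minimizing way to attain it sets $r_m(\theta)=1$ for $\theta<\hat\theta$ and $r_m(\theta)=0$ otherwise, so total research cost equals $\int_0^{\hat\theta} C(\theta)\,d\theta$. Substituting into the merged firm's payoff (formula \eqref{eq_rjv_profit} with $\pi_{II}$ replaced by $\pi_I$, $\pi_{\mathit{0}\mathit{0}}$ replaced by $\pi_{\mathit{0}}$, and the factor of two dropped since the merged entity is a single firm), the problem reduces to maximizing
\[
\mathbb{E}\hat\Pi_m(\hat\theta) = \pi_{\mathit{0}} + (\pi_I - \pi_{\mathit{0}})\,\hat\theta - \int_0^{\hat\theta} C(\theta)\,d\theta - \rho\,\max\Bigl\{\int_0^{\hat\theta} C(\theta)\,d\theta - 2B,\, 0\Bigr\}
\]
over $\hat\theta \in [0,1)$. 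The derivative equals $\pi_I-\pi_{\mathit{0}}-C(\hat\theta)$ for $\hat\theta<\theta^B$ and $\pi_I-\pi_{\mathit{0}}-(1+\rho)C(\hat\theta)$ for $\hat\theta>\theta^B$; by the definitions of $\theta^u_m$ and $\theta^\rho_m$ these switch sign from positive to negative at $\theta^u_m$ and $\theta^\rho_m$ respectively, with $\theta^\rho_m<\theta^u_m$.

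Finally, I would split into the three cases according to where $\theta^B$ sits relative to $\theta^\rho_m$ and $\theta^u_m$ and read off the unique maximizer in each. If $\theta^B<\theta^\rho_m$, the derivative is positive throughout $[0,\theta^\rho_m)$ (in both subregimes) and negative on $(\theta^\rho_m,1)$, giving $\hat\theta=\theta^\rho_m$. If $\theta^B\in[\theta^\rho_m,\theta^u_m]$, the derivative is positive for $\hat\theta<\theta^B$ and negative for $\hat\theta>\theta^B$, giving $\hat\theta=\theta^B$. If $\theta^B>\theta^u_m$, the derivative is positive on $[0,\theta^u_m)$ and negative thereafter, giving $\hat\theta=\theta^u_m$. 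The only minor subtlety is the upward jump in marginal cost from $C(\theta^B)$ to $(1+\rho)C(\theta^B)$ at $\hat\theta=\theta^B$; in the intermediate regime this kink pins down the optimum at $\theta^B$ precisely because the marginal benefit $\pi_I-\pi_{\mathit{0}}$ lies between these two marginal costs. I do not expect any real obstacle: the argument is structurally identical to the RJV case and amounts to carefully tracking the sign of the derivative across the three subcases.
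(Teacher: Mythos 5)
Your proposal is correct and follows exactly the paper's route: the paper proves this lemma by declaring it "entirely analogous" to Lemma \ref{prop_RJVInvestment} after substituting $\pi_I-\pi_{\mathit{0}}$ for $2(\pi_{II}-\pi_{\mathit{0}\mathit{0}})$, which is precisely the reduction-to-a-cutoff plus sign-of-derivative argument you spell out (including correctly dropping the factor of two so that the marginal benefit matches the definitions of $\theta^u_m$ and $\theta^\rho_m$).
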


\begin{proof}
The proof is entirely analogous to the proof of Lemma \ref{prop_RJVInvestment}. We merely have to replace $\pi_{\mathit{0}\mathit{0}}$ with $\pi_{\mathit{0}}$ and $\pi_{II}$ with $\pi_I$ in the profit expressions and adjust the critical values.
\end{proof}

Next, we adapt Proposition \ref{prop_RJVEffect} to the case of mergers.

\begin{proposition}[Comparison of R\&D-Competition and Mergers]\label{prop_CompMergers} $ $
\begin{enumerate}[label=(\roman*)]
\item Suppose $\pi_I -\pi_{\mathit{0}}>\pi_{I\mathit{0}}-\pi_{II}$. Then the innovation probability is strictly larger after the merger than under R\&D competition.
\item Suppose $\pi_I-\pi_{\mathit{0}} \leq \pi_{I\mathit{0}}- \pi_{II}$. Then: 
\newline (a) The innovation probability is strictly larger after the merger than in any equilibrium under competition if and only if $B > \bar{B}(\rho)$ and $\rho > \bar{\rho}_m$.\newline
(b) If the merger strictly increases the innovation probability, then it  weakly decreases total R\&D spending.
\end{enumerate}
\end{proposition}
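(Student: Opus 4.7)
The plan is to mimic the proof of Proposition \ref{prop_RJVEffect} line by line, using the observation from the main text that the merged entity's problem is formally identical to the RJV's problem after the substitution $2(\pi_{II}-\pi_{\mathit{0}\mathit{0}}) \mapsto \pi_I - \pi_{\mathit{0}}$. All comparisons then proceed against the same equilibrium cut-off $\theta_1$ under R\&D competition from Lemma \ref{prop_equilibrium}, so the only new ingredient is tracking how the shifted threshold $\theta^\rho_m, \theta^u_m$ and the associated $\bar{\rho}_m$ relate to $\theta_1$.

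For part (i), I would first establish the analog of Lemma \ref{lemma_theta_rho_theta_1}: namely that $\pi_I - \pi_{\mathit{0}} > \pi_{I\mathit{0}} - \pi_{II}$ is equivalent to $\theta^\rho_m > \theta_1$. The chain is $\pi_I - \pi_{\mathit{0}} > \pi_{I\mathit{0}} - \pi_{II} \geq \pi_{I\mathit{0}} - \pi_{\mathit{0}\mathit{0}}$ (using Assumption \ref{ass_r}(iii)), so $(1+\rho)C(\theta^\rho_m) > (1+\rho)C(\theta_1)$, hence $\theta^\rho_m > \theta_1$. By Lemma \ref{prop_mergerInvestment}, the merger's cut-off $\hat\theta$ is always at least $\theta^\rho_m$, so $\hat\theta > \theta_1$; since Lemma \ref{prop_equilibrium} identifies $\theta_1$ as the innovation probability in any R\&D competition equilibrium, the merger strictly increases it.

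For part (ii)(a), I would next establish the analog of Lemma \ref{lemma_theta_u_theta_1}: under $\pi_I - \pi_{\mathit{0}} \leq \pi_{I\mathit{0}} - \pi_{II}$, the equivalence $\rho > \bar{\rho}_m \Leftrightarrow \theta^u_m > \theta_1$ follows from the definitions by the same rearrangement as in the RJV case (with the separate treatment of $\bar\rho_m = \infty$ handled as in Lemma \ref{lemma_theta_u_theta_1}). For the ``if'' direction of (ii)(a), $B > \bar{B}(\rho)$ gives $\theta^B > \theta_1$ and, combined with $\theta^\rho_m \leq \theta_1$ (the analog of Lemma \ref{lemma_theta_rho_theta_1}), this means either $\theta^B \in (\theta^\rho_m, \theta^u_m)$, in which case the merger invests up to $\theta^B > \theta_1$, or $\theta^B \geq \theta^u_m$, in which case it invests up to $\theta^u_m > \theta_1$ by the auxiliary lemma. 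The ``only if'' direction runs in reverse: the merger increasing the innovation probability forces $\hat\theta > \theta_1$, and by Lemma \ref{prop_mergerInvestment} this requires both $\theta^u_m > \theta_1$ (hence $\rho > \bar\rho_m$) and $\theta^B > \theta_1$ (hence $B > \bar B(\rho)$).

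For part (ii)(b), under the hypothesis that the merger strictly raises the innovation probability, the argument from Lemma \ref{lemma_Prop3i}/\ref{lemma_Prop3ii} adapted here shows $\hat\theta = \min\{\theta^B, \theta^u_m\}$, so the merger's total R\&D outlay is at most $\int_0^{\theta^B} C(\theta)d\theta = 2B$. Under R\&D competition, by Assumption \ref{ass_budget} each firm spends strictly more than $B$, so joint spending is strictly greater than $2B$; this yields the weak cost reduction. The main obstacle is really just bookkeeping rather than any conceptual step: one has to be careful that the budget threshold $\bar{B}(\rho)$ uses the \emph{competition} cut-off $\theta_1$ (unchanged from Proposition \ref{prop_RJVEffect}), while the interest-rate threshold $\bar{\rho}_m$ must reflect the merger-specific profit differential $\pi_I - \pi_{\mathit{0}}$, and to verify that Assumption \ref{ass_merger} (replacing Assumption \ref{ass_r}(ii)) is all that is needed from the merged-firm side to run the parallel arguments.
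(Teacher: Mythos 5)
Your overall strategy --- running the proof of Proposition \ref{prop_RJVEffect} verbatim after substituting $\pi_I - \pi_{\mathit{0}}$ for $2(\pi_{II}-\pi_{\mathit{0}\mathit{0}})$ --- is exactly what the paper intends (its own ``proof'' is one line saying the argument is analogous), and your treatment of parts (ii)(a) and (ii)(b) is sound: under the hypothesis of (ii) you have $\pi_I-\pi_{\mathit{0}} \leq \pi_{I\mathit{0}}-\pi_{II} \leq \pi_{I\mathit{0}}-\pi_{\mathit{0}\mathit{0}}$, where the second inequality uses $\pi_{II}\geq\pi_{\mathit{0}\mathit{0}}$ from Assumption \ref{ass_r}(ii); hence $\theta^\rho_m \leq \theta_1$, and the Lemma \ref{lemma_theta_u_theta_1}/\ref{lemma_Prop3i}/\ref{lemma_Prop3ii} machinery transfers with $\bar{\rho}_m$ in place of $\bar{\rho}$ (indeed more simply, since Assumption \ref{ass_merger} rules out the $\bar{\rho}_m=\infty$ case). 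The cost comparison in (ii)(b) against the $>2B$ lower bound for competitive spending is also correct.

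The gap is in part (i). You assert $\pi_{I\mathit{0}}-\pi_{II} \geq \pi_{I\mathit{0}}-\pi_{\mathit{0}\mathit{0}}$ ``using Assumption \ref{ass_r}(iii)''. That inequality is equivalent to $\pi_{\mathit{0}\mathit{0}} \geq \pi_{II}$, which is the \emph{reverse} of Assumption \ref{ass_r}(ii); Assumption \ref{ass_r}(iii) only gives $\pi_{I\mathit{0}}\geq\pi_{II}$ and $\pi_{\mathit{0}\mathit{0}}\geq\pi_{\mathit{0}I}$, neither of which delivers it. What the parallel argument actually needs is $\theta^\rho_m>\theta_1$, and since $(1+\rho)C(\theta^\rho_m)=\pi_I-\pi_{\mathit{0}}$ while $(1+\rho)C(\theta_1)=\pi_{I\mathit{0}}-\pi_{\mathit{0}\mathit{0}}$, the correct characterization (the true analog of Lemma \ref{lemma_theta_rho_theta_1}) is $\theta^\rho_m>\theta_1 \Leftrightarrow \pi_I-\pi_{\mathit{0}}>\pi_{I\mathit{0}}-\pi_{\mathit{0}\mathit{0}}$, a strictly stronger condition than the stated hypothesis $\pi_I-\pi_{\mathit{0}}>\pi_{I\mathit{0}}-\pi_{II}$ whenever $\pi_{II}>\pi_{\mathit{0}\mathit{0}}$. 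In the intermediate region $\pi_{I\mathit{0}}-\pi_{II}<\pi_I-\pi_{\mathit{0}}\leq\pi_{I\mathit{0}}-\pi_{\mathit{0}\mathit{0}}$ your chain breaks: there $\theta^\rho_m\leq\theta_1$, and for small $B$ Lemma \ref{prop_mergerInvestment} gives $\hat{\theta}=\theta^\rho_m\leq\theta_1$, so the unconditional conclusion of (i) cannot be reached by this route. Note that applying the paper's substitution rule to the soft-competition condition written in the form $2(\pi_{II}-\pi_{\mathit{0}\mathit{0}})>\pi_{I\mathit{0}}-\pi_{\mathit{0}\mathit{0}}$ yields $\pi_I-\pi_{\mathit{0}}>\pi_{I\mathit{0}}-\pi_{\mathit{0}\mathit{0}}$; that is the hypothesis under which your part (i) argument closes, so you should either prove (i) under that condition or flag the discrepancy explicitly rather than bridge it with a reversed inequality.
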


The proof is analagous to the proof of Proposition \ref{prop_RJVEffect}.

\subsubsection{Proof of Proposition \ref{prop_comparison_m-rjv}: Comparing Mergers and RJVs}\label{SecCompMergerRJV}

We require an auxiliary result, the proof of which is obvious.

\begin{lemma} \label{lemma_order_m-rjv}
$2(\pi_{II}-\pi_{\mathit{0}\mathit{0}}) \gtreqqless \pi_I - \pi_{\mathit{0}} \Leftrightarrow \theta^u \gtreqqless \theta^u_m \Leftrightarrow \theta^\rho \gtreqqless \theta^\rho_m$.
\end{lemma}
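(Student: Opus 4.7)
The plan is to exploit the fact that the cost function $C:[0,1)\to\mathbb{R}^+$ is strictly increasing, so that inequalities on levels of $C(\cdot)$ translate directly into inequalities on the arguments, and vice versa. Recall the defining equations
\[
C(\theta^u) = 2(\pi_{II}-\pi_{\mathit{0}\mathit{0}}), \qquad (1+\rho)C(\theta^\rho) = 2(\pi_{II}-\pi_{\mathit{0}\mathit{0}}),
\]
and, for the merged entity,
\[
C(\theta^u_m) = \pi_I-\pi_{\mathit{0}}, \qquad (1+\rho)C(\theta^\rho_m) = \pi_I-\pi_{\mathit{0}}.
\]

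First I would establish the left equivalence. Subtracting the defining equations for $\theta^u$ and $\theta^u_m$ gives $C(\theta^u)-C(\theta^u_m) = 2(\pi_{II}-\pi_{\mathit{0}\mathit{0}})-(\pi_I-\pi_{\mathit{0}})$, so the sign of $C(\theta^u)-C(\theta^u_m)$ coincides with the sign of $2(\pi_{II}-\pi_{\mathit{0}\mathit{0}}) - (\pi_I-\pi_{\mathit{0}})$; strict monotonicity of $C$ then makes $\theta^u \gtreqqless \theta^u_m$ equivalent to $2(\pi_{II}-\pi_{\mathit{0}\mathit{0}}) \gtreqqless \pi_I-\pi_{\mathit{0}}$.

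Next I would carry out exactly the same argument for $\theta^\rho$ and $\theta^\rho_m$: dividing both defining equations by the common positive factor $1+\rho>0$ reduces them to equations of the form $C(\theta^\rho) = \frac{2(\pi_{II}-\pi_{\mathit{0}\mathit{0}})}{1+\rho}$ and $C(\theta^\rho_m) = \frac{\pi_I-\pi_{\mathit{0}}}{1+\rho}$, and monotonicity of $C$ again yields $\theta^\rho \gtreqqless \theta^\rho_m \Leftrightarrow 2(\pi_{II}-\pi_{\mathit{0}\mathit{0}}) \gtreqqless \pi_I-\pi_{\mathit{0}}$. Chaining the two equivalences finishes the lemma. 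There is no real obstacle here; the only care needed is to note that $1+\rho>0$ so dividing preserves the direction of the inequality, and to observe the argument works uniformly for $>$, $=$, and $<$, justifying the triple symbol $\gtreqqless$.
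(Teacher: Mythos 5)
Your proof is correct and is precisely the ``obvious'' argument the paper has in mind (the paper states the lemma without proof, remarking that the proof is obvious): the defining equations give $C(\theta^u)=2(\pi_{II}-\pi_{\mathit{0}\mathit{0}})$, $C(\theta^u_m)=\pi_I-\pi_{\mathit{0}}$, and the same right-hand sides divided by $1+\rho>0$ for $\theta^\rho$, $\theta^\rho_m$, so strict monotonicity of $C$ yields all three equivalences. Nothing is missing.
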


Next, we prove the two statements of the proposition in turn.

$(i)$ Suppose that $2(\pi_{II}-\pi_{\mathit{0}\mathit{0}})>\pi_I - \pi_{\mathit{0}}$. By Lemma \ref{lemma_order_m-rjv}, $\theta^\rho_m < \theta^\rho$ and $\theta^u_m < \theta^u $. This implies that there are two possible orderings of critical values:
\begin{align}
    \theta^\rho_m < \theta^\rho &\leq \theta^u_m < \theta^u  \label{ineq_critorder1}\\
    \theta^\rho_m < \theta^u_m & < \theta^\rho  < \theta^u. \label{ineq_critorder2}
\end{align}

Suppose first that ordering \eqref{ineq_critorder1} holds.
If $\theta^{B}<\theta^\rho_m$, then the RJV invests in the set $(0,\theta^\rho)$ and the merged firm in $(0,\theta^\rho_m)$. If $\theta^{B}\in[\theta^\rho_m,\theta^\rho)$, then the RJV invests in the set $(0,\theta^\rho)$ and the merged firm in $(0,\theta^{B})$. Hence, since $\theta^\rho>\theta^\rho_m$, it follows that the RJV invests in a larger set than the merged firm whenever $\theta^{B}<\theta^\rho$. If $\theta^{B}\in[\theta^\rho, \theta^u_m]$, then both invest in the identical set $(0,\theta^{B})$. If $\theta^{B}\in(\theta^u_m,\theta^u)$, then the RJV invests in the set $(0,\theta^{B})$ and the merged firm in $(0,\theta^u_m)$. If $\theta^{B}\geq\theta^u$, then the RJV invests in the set $(0, \theta^u)$, whereas the merged firm still invests in $(0,\theta^u_m)$. Hence, since $\theta^u>\theta^u_m$, it follows that the RJV invests in a larger set than the merged firm whenever $\theta^{B}>\theta^u_m$.  

Now suppose that ordering \eqref{ineq_critorder2} holds.
The analysis for $\theta^{B} < \theta^\rho_m $ and $\theta^{B} \geq \theta^u$ is unchanged. If $\theta^{B}\in[\theta^\rho_m,\theta^u_m)$, then the RJV invests in the set $(0,\theta^\rho)$ and the merged firm in $(0,\theta^{B})$. If $\theta^{B}\in[\theta^u_m,\theta^\rho)$, then the RJV still invests in the set $(0,\theta^\rho)$, and the merged firm invests in $(0,\theta^u_m)$. If $\theta^{B}\in[\theta^\rho, \theta^u)$, then the RJV invests in the set $(0,\theta^{B})$ and the merged firm in $(0,\theta^u_m)$.
Hence, whenever ordering \eqref{ineq_critorder2} holds, the RJV invests in a larger set than the merged firm.

Next, suppose that $2(\pi_{II}-\pi_{\mathit{0}\mathit{0}})=\pi_I - \pi_{\mathit{0}}$. By  Lemma \ref{lemma_order_m-rjv}, $\theta^u = \theta^u_m$ and $\theta^\rho = \theta^\rho_m$. If $\theta^{B}<\theta^\rho=\theta^\rho_m$, then both the RJV and the merged firm invest in $(0,\theta^\rho)$. If $\theta^{B}\in[\theta^\rho, \theta^u)$, then both invest in the set $(0,\theta^{B})$. If $\theta^{B} \geq \theta^u=\theta^u_m$, then both the RJV and the merged firm invest in the set $(0,\theta^u)$. Hence, for any $\theta^{B}$ both the RJV and the merged firm invest in the same set of research projects.

$(ii)$ Suppose that $2(\pi_{II}-\pi_{\mathit{0}\mathit{0}})<\pi_I - \pi_{\mathit{0}}$. By  Lemma \ref{lemma_order_m-rjv}, $\theta^u < \theta^u_m$ and $\theta^\rho < \theta^\rho_m$. This implies that there are two possible orderings of critical values:
\begin{align}
    \theta^\rho < \theta^\rho_m &\leq \theta^u < \theta^u_m  \label{ineq_critorder3}\\
    \theta^\rho < \theta^u & < \theta^\rho_m  < \theta^u_m. \label{ineq_critorder4}
\end{align} 

Suppose first that ordering \eqref{ineq_critorder3} holds. 
If $\theta^{B}<\theta^\rho$, then the merged firm invests in the set $(0,\theta^\rho_m)$ and the RJV in $(0,\theta^\rho)$. If $\theta^{B}\in[\theta^\rho,\theta^\rho_m)$, then the merged firm invests in the set $(0,\theta^\rho_m)$ and the RJV in $(0, \theta^{B})$.  Hence, if $\theta^{B}<\theta^\rho_m$ the merged firm invests in a larger set than the RJV. If $\theta^{B}\in[\theta^\rho_m,\theta^u]$, then both invest in the identical set $(0,\theta^{B})$. If $\theta^{B}\in(\theta^u,\theta^u_m)$, then the merged firm invests in the set $(0, \theta^{B})$ and the RJV in $(0,\theta^u)$. If $\theta^{B} \geq \theta^u_m$, then the merged firm invests in the set $(0,\theta^u_m)$ and the RJV still in $(0,\theta^u)$. Hence, whenever $\theta^{B}>\theta^u$ the merged firm invests in a larger set than the RJV. 

Now suppose that ordering \eqref{ineq_critorder4} holds and consider again different values that $\theta^{B}$ can take. The analysis for $\theta^{B} < \theta^\rho $ and $\theta^{B} \geq \theta^u_m$ is unchanged. If $\theta^{B}\in[\theta^\rho,\theta^u)$, then the merged firm invests in the set $(0,\theta^\rho_m)$ and the RJV in $(0, \theta^{B})$. If $\theta^{B}\in[\theta^u,\theta^\rho_m)$, then the merged firm invests in the set $(0,\theta^\rho_m)$ and the RJV in $(0, \theta^u)$. If $\theta^{B}\in[\theta^\rho_m, \theta^u_m)$, then the merged firm invests in the set $(0,\theta^{B})$ and the RJV in $(0, \theta^u)$. Hence, whenever the ordering \eqref{ineq_critorder4} holds, the  merged firm invests in a larger set than the RJV. \qed

\subsection{Spillovers}

\subsubsection{No Financial Constraints}\label{SecAppSpillNFC}
As a benchmark, we now consider a model without financial constraints.
Instead, we allow for spillovers. Specifically, if a firm has invested successfully in a project and the rival has not, then with probability $\sigma\in[0,1]$ the rival will obtain access to the innovation. The expected total payoff of firm $i$, given the strategy of firm $j$ is then
\begin{align*}
\mathbb{E}\Pi_i(r_i, &r_{j})= \int_0^1  (1-r_{j}(\theta))\left[r_i(\theta)((1-\sigma)\pi_{I\mathit{0}}+\sigma\pi_{II}) +(1-r_i(\theta))\pi_{\mathit{0}\mathit{0}}\right] d\theta  \; + \\
&  \int_0^1  r_{j}(\theta)\left[(r_i(\theta)+\sigma(1-r_i(\theta)))\pi_{II} +(1-\sigma)(1-r_i(\theta))\pi_{\mathit{0}I}\right] d\theta  -\\&
\int_0^1 r_i(\theta)C(\theta) d\theta.
\end{align*}
Compared to the expected total payoff with financial constraints, firms do not have additional costs from borrowing. Moreover, there is now the possibility that a firm obtains the innovation without innovating itself.
The equilibrium characterization for R\&D competition closely follows the previous analysis. We first implicitly define critical projects similar to those defined previously.
\begin{align*}
C(\theta^{nc}_1) &= (1-\sigma)\pi_{I\mathit{0}}+\sigma\pi_{II} -\pi_{\mathit{0}\mathit{0}}\\
C(\theta^{nc}_2) &= (1-\sigma )(\pi_{II} -\pi_{\mathit{0}I})
\end{align*}
The intuition for $\theta^{nc}_1$ and $\theta^{nc}_2$ is analogous to the one for $\theta_1$ and $\theta_2$, taking into account different payoffs due to potential spillovers. It is straightforward to show that $\theta^{nc}_1 \geq \theta^{nc}_2$.

\begin{lemma}[Investment strategies under competition with spillovers]\label{prop_equilibrium_spill} $ $\\
\noindent (i) The research competition game has multiple equilibria. A profile of double cut-off strategies $(r^*_i,r^*_j)$ is an equilibrium if it satisfies (a) $\theta_L=\theta^{nc}_2$ and $\theta_H=\theta^{nc}_1$  and (b) for each $ \theta \in (\theta^{nc}_2, \theta^{nc}_1)$ either:
\begin{align*}
    r^*_i(\theta)&=1 \text{ and } r^*_{j}(\theta)=0 \text{ or }\\
    r^*_i(\theta)&=0 \text{ and } r^*_{j}(\theta)=1.
\end{align*}
(ii) No other equilibria of the research-competition game exist.
\end{lemma}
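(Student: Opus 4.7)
The approach is to reduce the equilibrium problem to pointwise best-response conditions, exploiting the fact that the payoff $\mathbb{E}\Pi_i(r_i, r_j)$ is an integral whose integrand depends only on the triple $(\theta, r_i(\theta), r_j(\theta))$. The plan is to proceed in three steps, mirroring the structure of the proof of Lemma \ref{prop_equilibrium} but simplified by the absence of a budget constraint (so no analogue of Lemma \ref{dominated_r} is needed).

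First, I would compute the conditional marginal value of investing in a project $\theta$. Holding $r_j(\theta)$ fixed, the expected gain to firm $i$ from setting $r_i(\theta)=1$ rather than $r_i(\theta)=0$, conditional on $\theta$ being correct, equals $(1-\sigma)\pi_{I\mathit{0}} + \sigma\pi_{II} - \pi_{\mathit{0}\mathit{0}} - C(\theta)$ if $r_j(\theta)=0$ and $(1-\sigma)(\pi_{II} - \pi_{\mathit{0}I}) - C(\theta)$ if $r_j(\theta)=1$. Since $C$ is strictly increasing, these are respectively positive precisely when $\theta < \theta^{nc}_1$ and $\theta < \theta^{nc}_2$. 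To establish $\theta^{nc}_1 \geq \theta^{nc}_2$, I would rewrite the difference between the two right-hand sides as $(1-\sigma)(\pi_{I\mathit{0}}+\pi_{\mathit{0}I}-\pi_{II}-\pi_{\mathit{0}\mathit{0}}) + \sigma(\pi_{II}-\pi_{\mathit{0}\mathit{0}})$, which is non-negative by Assumptions \ref{ass_r}(ii) and \ref{ass_r}(iv).

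Second, for part (i), I would verify that any profile satisfying (a) and (b) is a Nash equilibrium. Given $r_j$ in the prescribed form, firm $i$'s pointwise best response at each $\theta$ follows directly from the comparison above: on $[0, \theta^{nc}_2)$, we have $r_j(\theta)=1$ and $C(\theta) < (1-\sigma)(\pi_{II}-\pi_{\mathit{0}I})$, so $r_i(\theta)=1$ is strictly optimal; on $(\theta^{nc}_1, 1)$, $r_j(\theta)=0$ and $C(\theta) > (1-\sigma)\pi_{I\mathit{0}} + \sigma\pi_{II} - \pi_{\mathit{0}\mathit{0}}$, so $r_i(\theta)=0$ is strictly optimal; and on $(\theta^{nc}_2, \theta^{nc}_1)$, the firm whose competitor invests strictly prefers not to invest and vice versa. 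Integrating these pointwise best-response conclusions gives the desired equilibrium condition.

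Third, for part (ii), suppose by contradiction that some pure-strategy equilibrium $(r_i, r_j)$ differs from every profile characterized in (i) on a set $I$ of positive measure. Decompose $I$ into $I_1 = I \cap [0,\theta^{nc}_2)$, $I_2 = I \cap (\theta^{nc}_2, \theta^{nc}_1)$, and $I_3 = I \cap (\theta^{nc}_1, 1)$; at least one must have positive measure. On $I_1$, the marginal value of investing is strictly positive regardless of $r_j(\theta)$, so $r_i(\theta)=0$ on any positive measure subset contradicts best response. Symmetrically on $I_3$. On $I_2$, if $r_i(\theta)=r_j(\theta)$ on a positive measure subset, then either both invest (and either firm strictly benefits from deviating to $0$, since $C(\theta)>(1-\sigma)(\pi_{II}-\pi_{\mathit{0}I})$) or neither invests (and either firm strictly benefits from deviating to $1$, since $C(\theta)<(1-\sigma)\pi_{I\mathit{0}} + \sigma\pi_{II} - \pi_{\mathit{0}\mathit{0}}$), yielding a contradiction. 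The main (and only nontrivial) step is this measure-theoretic uniqueness argument, which parallels the corresponding argument in the proof of Lemma \ref{prop_equilibrium}; the rest is bookkeeping of spillover-adjusted thresholds.
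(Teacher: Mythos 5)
Your proposal is correct and follows essentially the same route as the paper, which omits an explicit proof and simply asserts that the characterization ``closely follows the previous analysis'' of Lemma \ref{prop_equilibrium}; your argument is exactly that adaptation, simplified by the absence of the budget constraint so that best responses are genuinely pointwise. Your verification of the threshold ordering $\theta^{nc}_1 \geq \theta^{nc}_2$ via the decomposition $(1-\sigma)(\pi_{I\mathit{0}}+\pi_{\mathit{0}I}-\pi_{II}-\pi_{\mathit{0}\mathit{0}}) + \sigma(\pi_{II}-\pi_{\mathit{0}\mathit{0}})$ and Assumptions \ref{ass_r}(ii) and \ref{ass_r}(iv) is accurate, and the measure-theoretic uniqueness step correctly mirrors part (ii) of the benchmark proof.
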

Next, we consider the case with RJVs. The analysis is simpler than in the case with financial constraints. The increase in joint profit from a successful innovation is $2\pi_{II}-2\pi_{\mathit{0}\mathit{0}}$. Hence, the RJV invests in all projects up to 
$\theta^{u}$, 
and it does not invest in the remaining ones. We can now prove Proposition \ref{PropCompSpill}.

\subparagraph{Proof of Proposition \ref{PropCompSpill}}
When competition is soft, the argument follows as in the case without spillovers (without relying on Assumption \ref{ass_budget}). When competition is not soft, we need to show that the condition in the proposition is equivalent with the requirement that 
 $\theta^{nc}_1$ < $\theta^{u}$. This follows from simple rearrangements:
\begin{align*}
    \sigma &> 1 - \frac{\pi_{II}-\pi_{\mathit{0}\mathit{0}}}{\pi_{I\mathit{0}}-\pi_{II}}\\
    \sigma(\pi_{I\mathit{0}}-\pi_{II}) &> \pi_{I\mathit{0}}-2\pi_{II} + \pi_{\mathit{0}\mathit{0}}\\
    2\pi_{II}-2\pi_{\mathit{0}\mathit{0}}&>(1-\sigma)\pi_{I\mathit{0}}+\sigma\pi_{II}-\pi_{\mathit{0}\mathit{0}}\\
    C(\theta^{u})&>C(\theta^{nc}_1).
\end{align*}

\subsubsection{Financial constraints}\label{SecAppSpillFC}

We now augment the model with spillovers with financial constraints. The analysis of Sections \ref{SecModel} and \ref{SecEffects} carries over directly if we replace the function $\pi$ with $\Tilde{\pi}$ defined as follows
\begin{align*}
\Tilde{\pi}_{\mathit{0}\mathit{0}} &\equiv \pi_{\mathit{0}\mathit{0}}   \\
\Tilde{\pi}_{II} &\equiv \pi_{II}   \\
\Tilde{\pi}_{I\mathit{0}} &\equiv (1-\sigma)\pi_{I\mathit{0}} + \sigma \pi_{II}  \\
\Tilde{\pi}_{\mathit{0}I} &\equiv (1-\sigma)\pi_{\mathit{0}I} + \sigma \pi_{II}  
\end{align*}

Replacing $\pi$ with $\Tilde{\pi}$, we obtain new expressions for expected profits, $\mathbb{E}\Tilde{\Pi}_i(r_i, r_{j})$, for critical values $\Tilde{\theta}_1, \Tilde{\theta}_2$, etc. We replace Assumption \ref{ass_budget} with 
\begin{assumption}\label{ass_budget_spillFC}
$B < \int_0^{{\Tilde{\theta}_2}}C(\theta) d\theta$.
\end{assumption}
It is straightforward to show that $\Tilde{\theta}_1 \geq \Tilde{\theta}_2$. Moreover, if $\theta_1>0$, then $\theta_1>\Tilde{\theta}_1$ for all $\sigma>0$. Hence, spillover reduces the incentives to invest because rivals could also benefit from the innovation. The following result follows directly from replacing $\pi$ with $\Tilde{\pi}$ in Lemma \ref{prop_equilibrium} and then inserting the above definitions for $\Tilde{\pi}$.

\begin{lemma}
[Investment strategies under competition with spillovers and financial constraints]\label{prop_equilibriumSpillFC} $ $\\
\noindent (i) The research competition game has multiple equilibria. A profile of double-cut off strategies $(r^*_i,r^*_j)$ is an equilibrium if it satisfies (a) $\theta_L=\Tilde{\theta}_2$ and $\theta_H=\Tilde{\theta}_1$ and (b) for each $ \theta \in (\Tilde{\theta}_2, \Tilde{\theta}_1)$ either:
\begin{align*}
 r^*_i(\theta)&=1 \text{ and } r^*_{j}(\theta)=0 \text{ or }\\
    r^*_i(\theta)&=0 \text{ and } r^*_{j}(\theta)=1.
\end{align*}
(ii) No other pure-strategy equilibria of the research-competition game exist.
\end{lemma}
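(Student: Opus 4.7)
The plan is to reduce the lemma to Lemma \ref{prop_equilibrium} by showing that, up to a relabeling of payoffs, the game with spillovers and financial constraints is identical to the benchmark competition game with the profit function $\pi$ replaced by $\tilde{\pi}$. I would begin by regrouping the terms in the expected payoff $\mathbb{E}\Pi_i(r_i, r_j)$ from Section \ref{SpilloversNoFC}, augmented with the borrowing-cost term from Section \ref{SecModel}. For each indicator combination $(r_i(\theta), r_j(\theta)) \in \{0,1\}^2$, the coefficient inside the integrand equals $\tilde{\pi}_{t_i t_j}$ with $t_i = I$ iff $r_i(\theta)=1$; for instance, when $r_i(\theta)=1$ and $r_j(\theta)=0$, the coefficient is $(1-\sigma)\pi_{I\mathit{0}} + \sigma\pi_{II} = \tilde{\pi}_{I\mathit{0}}$. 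Thus $\mathbb{E}\Pi_i$ equals the benchmark payoff functional evaluated at $\tilde{\pi}$ rather than at $\pi$, and the game is strategically equivalent to the benchmark game.

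The substantive step is to verify that the properties of $\tilde{\pi}$ actually invoked by the proof of Lemma \ref{prop_equilibrium} still hold. Three inequalities are needed: (a) $\tilde{\pi}_{II} - \tilde{\pi}_{\mathit{0}I} = (1-\sigma)(\pi_{II}-\pi_{\mathit{0}I}) \geq 0$, which guarantees that $\tilde{\theta}_2$ is well-defined and non-negative; (b) $\tilde{\pi}_{I\mathit{0}} - \tilde{\pi}_{\mathit{0}\mathit{0}} = (1-\sigma)(\pi_{I\mathit{0}}-\pi_{\mathit{0}\mathit{0}}) + \sigma(\pi_{II}-\pi_{\mathit{0}\mathit{0}}) \geq 0$, which does the same for $\tilde{\theta}_1$; and (c) the submodularity
\[
[\tilde{\pi}_{I\mathit{0}} - \tilde{\pi}_{\mathit{0}\mathit{0}}] - [\tilde{\pi}_{II} - \tilde{\pi}_{\mathit{0}I}] = (1-\sigma)\bigl[(\pi_{I\mathit{0}}-\pi_{\mathit{0}\mathit{0}}) - (\pi_{II}-\pi_{\mathit{0}I})\bigr] + \sigma(\pi_{II}-\pi_{\mathit{0}\mathit{0}}) \geq 0,
\]
which gives $\tilde{\theta}_2 \leq \tilde{\theta}_1$ and yields the monotonicity of the function $\Gamma_i(\theta, r_j)$ in $r_j(\theta)$ that drives the uniqueness argument. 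All three follow from Assumption \ref{ass_r}(ii)--(iv) combined with $\sigma \in [0,1]$.

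With these inequalities and with Assumption \ref{ass_budget_spillFC} playing the role of Assumption \ref{ass_budget}, the dominance argument of Lemma \ref{dominated_r} and the pointwise case analysis for $\theta \in [0,\tilde{\theta}_2)$, $\theta \in (\tilde{\theta}_2,\tilde{\theta}_1)$, and $\theta \in (\tilde{\theta}_1,1)$ used to prove both parts of Lemma \ref{prop_equilibrium} transfer verbatim, after substituting $\tilde{\pi}$ for $\pi$ and $\tilde{\theta}_k$ for $\theta_k$. This delivers the existence of the double-cut-off equilibria in part (i) and rules out any other pure-strategy equilibrium in part (ii). The main subtlety worth flagging is that Assumption \ref{ass_r}(iii) in its original form need not survive the transformation to $\tilde{\pi}$ — for example, $\tilde{\pi}_{\mathit{0}\mathit{0}} \geq \tilde{\pi}_{\mathit{0}I}$ is equivalent to $(1-\sigma)(\pi_{\mathit{0}\mathit{0}}-\pi_{\mathit{0}I}) \geq \sigma(\pi_{II}-\pi_{\mathit{0}\mathit{0}})$, which can fail for strong spillovers and large innovations. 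However, this stronger property is never invoked in the proof of Lemma \ref{prop_equilibrium}; only the weaker conclusion $\tilde{\pi}_{II} \geq \tilde{\pi}_{\mathit{0}I}$ matters, and that is already guaranteed by inequality (a). Recognizing this distinction is the only real work; the remainder is mechanical substitution.
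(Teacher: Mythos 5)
Your proof is correct and takes essentially the same route as the paper, whose entire argument is that the result ``follows directly from replacing $\pi$ with $\Tilde{\pi}$ in Lemma \ref{prop_equilibrium} and then inserting the above definitions for $\Tilde{\pi}$.'' Your version is in fact more careful than the paper's: the paper asserts that Assumption \ref{ass_r} holds wholesale for $\Tilde{\pi}$, whereas you correctly note that $\Tilde{\pi}_{\mathit{0}\mathit{0}} \geq \Tilde{\pi}_{\mathit{0}I}$ can fail when $\sigma$ is large and $\pi_{II}>\pi_{\mathit{0}\mathit{0}}$, and that only the three inequalities you isolate (non-negativity of $\Tilde{\pi}_{II}-\Tilde{\pi}_{\mathit{0}I}$ and $\Tilde{\pi}_{I\mathit{0}}-\Tilde{\pi}_{\mathit{0}\mathit{0}}$, plus submodularity) are actually invoked in the proof of Lemma \ref{prop_equilibrium}.
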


Suppose now that the two firms form an RJV. Since the firms will share a successful innovation, spillovers do not affect innovation behavior under cooperative R\&D. Therefore, an RJV still has the critical projects $\theta^\rho$ and $\theta^u$ and invests according to Lemma \ref{prop_RJVInvestment}. 

For the comparison between RJV and R\&D competition, we replace $\theta_1$ and $\pi$ in the definitions of  $\bar{\rho}$ and $\bar{B}(\rho)$ with $\Tilde{\theta}_1$ and $\Tilde{{\pi}}$ to obtain:
\begin{align*}
\Tilde{{B}}(\rho) & = \dfrac{\int_0^{\Tilde{\theta}_1} C(\theta)d\theta}{2}\\
    \Tilde{\rho} & =
    \begin{cases}
    \dfrac{\Tilde{\pi}_{I\mathit{0}}-\Tilde{\pi}_{II}-(\Tilde{\pi}_{II}-\Tilde{\pi}_{\mathit{0}\mathit{0}})}{2(\Tilde{\pi}_{II}-\Tilde{\pi}_{\mathit{0}\mathit{0}})}, & \text{for } \Tilde{\pi}_{II} > \Tilde{\pi}_{\mathit{0}\mathit{0}} \\
    \infty, & \text{for }\Tilde{\pi}_{II} = \Tilde{\pi}_{\mathit{0}\mathit{0}}.
  \end{cases} 
\end{align*}

It is straightforward to show that $\bar{\rho}>\Tilde{\rho}$ and $\bar{B}(\rho)> \Tilde{{B}}(\rho)$. Replacing $\pi$ with $\Tilde{\pi}$ in Proposition \ref{prop_RJVEffect} and then inserting the values for $\pi_{t_it_j}$ into the definitions of $\Tilde{\pi}_{t_i,t_j}$ immediately shows under which circumstances an RJV increases innovation with spillovers.

\begin{proposition}[Comparison of competition and RJV with spillovers]\label{prop_RJVEffect_spill} $ $
\begin{enumerate}[label=(\roman*)]
\item Suppose $2\pi_{II}-2\pi_{\mathit{0}\mathit{0}} > \pi_{I\mathit{0}}-\pi_{\mathit{0}\mathit{0}}-\sigma(\pi_{I\mathit{0}}-\pi_{II})$. Then the innovation probability is strictly larger under the RJV than under R\&D competition.
\item Suppose $2\pi_{II}-2\pi_{\mathit{0}\mathit{0}} < \pi_{I\mathit{0}}-\pi_{\mathit{0}\mathit{0}}-\sigma(\pi_{I\mathit{0}}-\pi_{II})$. Then: \newline (a) The innovation probability is strictly larger under the RJV than under competition if and only if $B > \Tilde{B}(\rho)$ and $\rho > \Tilde{\rho}$.\newline
(b) If the formation of the RJV strictly increases the innovation probability, then it  weakly decreases total R\&D spending.
\end{enumerate}
\end{proposition}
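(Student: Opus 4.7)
The plan is a direct reduction to Proposition \ref{prop_RJVEffect} via the substitution $\pi_{t_it_j}\mapsto\tilde{\pi}_{t_it_j}$. The excerpt already supplies the two ingredients. Under competition, firm $i$'s expected payoff with spillovers is algebraically identical to the benchmark payoff after this substitution, so Lemma \ref{prop_equilibriumSpillFC} delivers the double cut-off equilibrium with innovation probability $\tilde{\theta}_1$ in every equilibrium. The RJV's problem is unaffected by spillovers, since both firms share any successful innovation, so Lemma \ref{prop_RJVInvestment} applies verbatim and the RJV innovates with probability $\theta^*\in\{\theta^{\rho},\theta^{B},\theta^{u}\}$.

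First I would verify that Assumption \ref{ass_r} and Assumption \ref{ass_budget_spillFC} hold with $\pi$ replaced by $\tilde{\pi}$ (and $\theta_2$ replaced by $\tilde{\theta}_2$). Parts (i)--(iii) of Assumption \ref{ass_r} are immediate from $\tilde{\pi}_{II}=\pi_{II}$, $\tilde{\pi}_{\mathit{0}\mathit{0}}=\pi_{\mathit{0}\mathit{0}}$, $\tilde{\pi}_{I\mathit{0}}=(1-\sigma)\pi_{I\mathit{0}}+\sigma\pi_{II}$, and $\tilde{\pi}_{\mathit{0}I}=(1-\sigma)\pi_{\mathit{0}I}+\sigma\pi_{II}$. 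For part (iv), the transformed inequality reads $(1-\sigma)(\pi_{I\mathit{0}}+\pi_{\mathit{0}I})\geq\pi_{\mathit{0}\mathit{0}}+(1-2\sigma)\pi_{II}$; both sides are linear in $\sigma$, the inequality holds at $\sigma=0$ by Assumption \ref{ass_r}(iv) and at $\sigma=1$ by Assumption \ref{ass_r}(ii), so it holds throughout $[0,1]$. In particular, $\tilde{\theta}_2\leq\tilde{\theta}_1$.

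Next, I would apply Proposition \ref{prop_RJVEffect} with $\pi\mapsto\tilde{\pi}$ (and $\theta_1\mapsto\tilde{\theta}_1$, $\bar{B}\mapsto\tilde{B}$, $\bar{\rho}\mapsto\tilde{\rho}$) to obtain the comparisons in the $\tilde{\pi}$ language, and finally translate the soft-competition condition into primitives. The transformed condition $2\tilde{\pi}_{II}>\tilde{\pi}_{I\mathit{0}}+\tilde{\pi}_{\mathit{0}\mathit{0}}$ becomes, after substituting the definition of $\tilde{\pi}_{I\mathit{0}}$ and rearranging, $2\pi_{II}-2\pi_{\mathit{0}\mathit{0}}>\pi_{I\mathit{0}}-\pi_{\mathit{0}\mathit{0}}-\sigma(\pi_{I\mathit{0}}-\pi_{II})$, which is exactly condition (i). Since the thresholds $\tilde{B}(\rho)$ and $\tilde{\rho}$ are already defined in terms of $\tilde{\pi}$ by construction, part (ii)(a) follows immediately from Proposition \ref{prop_RJVEffect}(ii)(a), and part (ii)(b) follows likewise from Proposition \ref{prop_RJVEffect}(ii)(b), using Assumption \ref{ass_budget_spillFC} (in place of Assumption \ref{ass_budget}) to guarantee that firms borrow strictly positive amounts under competition.

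The main obstacle is the linearity verification for Assumption \ref{ass_r}(iv) after substitution: without the ordering $\tilde{\theta}_2\leq\tilde{\theta}_1$ the equilibrium characterization of Lemma \ref{prop_equilibriumSpillFC} would fail, and the reduction would collapse. Once that bookkeeping is in hand, the remainder is a mechanical transcription of the benchmark argument, with no new economic content.
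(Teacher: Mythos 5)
Your proposal is correct and takes essentially the same route as the paper, which proves this result by substituting $\tilde{\pi}$ for $\pi$ in Proposition \ref{prop_RJVEffect} and translating the conditions back into primitives. Your explicit endpoint-and-linearity check that Assumption \ref{ass_r}(iv) survives the substitution (equivalently, $\tilde{\theta}_2\leq\tilde{\theta}_1$) spells out a step the paper only labels as straightforward.
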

 Moreover, the conditions on the budget and interest rate that an RJV increases the probability of innovation are weaker with higher spillovers (see Section \ref{SecAppProofSpillFC}).

\subsubsection{Proof of Proposition \ref{prop_RJV_better} }\label{SecAppProofSpillFC}

First, we note two auxiliary results which are analogous to Lemmas \ref{lemma_theta_rho_theta_1} and \ref{lemma_theta_u_theta_1}, replacing $\pi$ with $\Tilde{\pi} $ and $\theta_1$ with $\Tilde{\theta}_1 $.

\begin{lemma}\label{lemma_theta_rho_theta_1_spill}
$\pi_{II}>(1-\sigma)(\pi_{I\mathit{0}}-\pi_{II})+\pi_{\mathit{0}\mathit{0}} \Leftrightarrow \theta^\rho > \Tilde{\theta}_1$.
\end{lemma}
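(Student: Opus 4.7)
The plan is to exploit the fact that both $\theta^\rho$ and $\tilde\theta_1$ are defined implicitly by the strictly increasing cost function $C$, so that comparing them reduces to comparing the right-hand sides of their defining equations. Since $1+\rho>0$ and $C$ is strictly increasing, $\theta^\rho>\tilde\theta_1$ holds if and only if $(1+\rho)C(\theta^\rho) > (1+\rho)C(\tilde\theta_1)$.

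I would then substitute the two defining equations. From the RJV optimality condition, $(1+\rho)C(\theta^\rho)=2(\pi_{II}-\pi_{\mathit{0}\mathit{0}})$. From the definition of $\tilde\theta_1$ (the analogue of $\theta_1$ with $\pi$ replaced by $\tilde\pi$), together with the explicit formulas $\tilde\pi_{I\mathit{0}}=(1-\sigma)\pi_{I\mathit{0}}+\sigma\pi_{II}$ and $\tilde\pi_{\mathit{0}\mathit{0}}=\pi_{\mathit{0}\mathit{0}}$, we obtain $(1+\rho)C(\tilde\theta_1)=(1-\sigma)\pi_{I\mathit{0}}+\sigma\pi_{II}-\pi_{\mathit{0}\mathit{0}}$.

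The inequality to verify is thus
\[
2(\pi_{II}-\pi_{\mathit{0}\mathit{0}}) \;>\; (1-\sigma)\pi_{I\mathit{0}}+\sigma\pi_{II}-\pi_{\mathit{0}\mathit{0}}.
\]
Regrouping terms on the right as $\sigma\pi_{II}=\pi_{II}-(1-\sigma)\pi_{II}$ yields $(1-\sigma)(\pi_{I\mathit{0}}-\pi_{II})+\pi_{II}-\pi_{\mathit{0}\mathit{0}}$, so the inequality is equivalent to $\pi_{II}-\pi_{\mathit{0}\mathit{0}}>(1-\sigma)(\pi_{I\mathit{0}}-\pi_{II})$, which rearranges to the condition in the statement. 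Since each step is a reversible algebraic equivalence, the ``$\Leftrightarrow$'' follows.

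There is no real obstacle here; the result is a purely definitional restatement that parallels Lemma \ref{lemma_theta_rho_theta_1} of the benchmark model, and the only thing to be careful about is correctly substituting the modified profits $\tilde\pi$ and keeping track of the $(1-\sigma)$ weighting in $\tilde\pi_{I\mathit{0}}$.
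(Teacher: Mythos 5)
Your proof is correct and follows the same route the paper intends: the paper states this lemma without an explicit proof, noting only that it is the analogue of Lemma \ref{lemma_theta_rho_theta_1} with $\pi$ replaced by $\tilde\pi$ and $\theta_1$ by $\tilde\theta_1$, and unpacking that substitution gives exactly your chain of equivalences via the strict monotonicity of $C$. Your algebraic regrouping of $\sigma\pi_{II}$ correctly recovers the stated condition, so nothing is missing.
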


\begin{lemma}\label{lemma_theta_u_theta_1_spill}
$\rho > \Tilde{\rho} \Leftrightarrow  \theta^u > \Tilde{\theta}_1$. 
\end{lemma}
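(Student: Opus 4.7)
The plan is to prove the biconditional $\rho > \Tilde{\rho} \Leftrightarrow \theta^u > \Tilde{\theta}_1$ by direct algebraic manipulation, mimicking the template of the proof of Lemma \ref{lemma_theta_u_theta_1} but substituting $\Tilde{\pi}$ for $\pi$ throughout. The ingredients are the defining equations $C(\theta^u) = 2(\pi_{II} - \pi_{\mathit{0}\mathit{0}})$ and $(1+\rho)C(\Tilde{\theta}_1) = \Tilde{\pi}_{I\mathit{0}} - \Tilde{\pi}_{\mathit{0}\mathit{0}}$, together with the identity $\Tilde{\pi}_{II} = \pi_{II}$ and $\Tilde{\pi}_{\mathit{0}\mathit{0}} = \pi_{\mathit{0}\mathit{0}}$, which means the case distinction built into $\Tilde{\rho}$ is governed by whether $\pi_{II} > \pi_{\mathit{0}\mathit{0}}$ or $\pi_{II} = \pi_{\mathit{0}\mathit{0}}$.

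For the main case $\pi_{II} > \pi_{\mathit{0}\mathit{0}}$, I would start from $\rho > \Tilde{\rho}$, multiply through by $2(\pi_{II} - \pi_{\mathit{0}\mathit{0}}) > 0$, then add $2(\pi_{II} - \pi_{\mathit{0}\mathit{0}})$ to both sides. The right-hand side collapses to $\Tilde{\pi}_{I\mathit{0}} - \Tilde{\pi}_{\mathit{0}\mathit{0}}$ after using $(1-\sigma)\pi_{I\mathit{0}} + \sigma\pi_{II} - \pi_{\mathit{0}\mathit{0}}$, so the inequality becomes $2(1+\rho)(\pi_{II} - \pi_{\mathit{0}\mathit{0}}) > (1+\rho)C(\Tilde{\theta}_1)$. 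Dividing by $1+\rho > 0$ gives $C(\theta^u) > C(\Tilde{\theta}_1)$, and because $C$ is strictly increasing this is equivalent to $\theta^u > \Tilde{\theta}_1$. Every step is a reversible equivalence, so the biconditional follows immediately in this case.

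For the degenerate case $\pi_{II} = \pi_{\mathit{0}\mathit{0}}$, $\Tilde{\rho}$ is defined to be $\infty$, so $\rho > \Tilde{\rho}$ is never satisfied. I would verify that $\theta^u > \Tilde{\theta}_1$ also fails: from $C(\theta^u) = 2(\pi_{II} - \pi_{\mathit{0}\mathit{0}}) = 0 = C(0)$ and strict monotonicity of $C$ we get $\theta^u = 0$, while Assumption \ref{ass_r}(iii) gives $\pi_{I\mathit{0}} \geq \pi_{II} = \pi_{\mathit{0}\mathit{0}}$, hence $\Tilde{\pi}_{I\mathit{0}} - \Tilde{\pi}_{\mathit{0}\mathit{0}} = (1-\sigma)(\pi_{I\mathit{0}} - \pi_{\mathit{0}\mathit{0}}) \geq 0$ and therefore $\Tilde{\theta}_1 \geq 0 = \theta^u$. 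Since both sides of the biconditional are false, it holds trivially.

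The argument is essentially a routine transcription of Lemma \ref{lemma_theta_u_theta_1}, so there is no real obstacle; the only delicate point is confirming that the simplification $(1-\sigma)\pi_{I\mathit{0}} + \sigma\pi_{II} - \pi_{\mathit{0}\mathit{0}}$ on the right-hand side is exactly $\Tilde{\pi}_{I\mathit{0}} - \Tilde{\pi}_{\mathit{0}\mathit{0}}$ and handling the boundary case cleanly so the biconditional is not vacuous but holds by both sides failing.
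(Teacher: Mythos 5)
Your proof is correct and is exactly the argument the paper intends: the paper gives no separate proof for this lemma but declares it analogous to Lemma \ref{lemma_theta_u_theta_1} after substituting $\Tilde{\pi}$ for $\pi$ and $\Tilde{\theta}_1$ for $\theta_1$, which is precisely the chain of reversible inequalities you carry out, including the correct identification $\Tilde{\pi}_{I\mathit{0}}-\Tilde{\pi}_{\mathit{0}\mathit{0}}=(1-\sigma)\pi_{I\mathit{0}}+\sigma\pi_{II}-\pi_{\mathit{0}\mathit{0}}$. Your handling of the degenerate case $\Tilde{\pi}_{II}=\Tilde{\pi}_{\mathit{0}\mathit{0}}$ (both sides of the biconditional fail, using $\theta^u=0$ and Assumption \ref{ass_r}(iii)) matches the paper's treatment of $\bar{\rho}=\infty$ in the original lemma and is, if anything, stated more cleanly.
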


Next, we provide a useful monotonicity result:
\begin{lemma}\label{lemma_theta_1_sigma_rho}
$\Tilde{\theta}_1$ is a weakly decreasing function of $\sigma$ and $\rho$. $\Tilde{B}$ and $\Tilde{\rho}$ are weakly decreasing in $\sigma$.
\end{lemma}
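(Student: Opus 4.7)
The plan is to verify each monotonicity claim by direct inspection of the defining equations, relying only on (a) Assumption \ref{ass_r}(iii), which gives $\pi_{I\mathit{0}} \geq \pi_{II}$, and (b) the fact that $C$ is strictly increasing and strictly positive on $(0,1)$. There is no hard step here; the whole lemma is a bookkeeping exercise once one writes out $\Tilde{\pi}$ in terms of the primitives $\pi$ and $\sigma$.

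First I would substitute the definitions of $\Tilde{\pi}$ into the implicit equation $(1+\rho)C(\Tilde{\theta}_1)=\Tilde{\pi}_{I\mathit{0}}-\Tilde{\pi}_{\mathit{0}\mathit{0}}$ to obtain
\begin{equation*}
(1+\rho)\,C(\Tilde{\theta}_1)=\pi_{I\mathit{0}}-\pi_{\mathit{0}\mathit{0}}-\sigma\bigl(\pi_{I\mathit{0}}-\pi_{II}\bigr).
\end{equation*}
Since $\pi_{I\mathit{0}}-\pi_{II}\geq 0$ by Assumption \ref{ass_r}(iii), the right-hand side is weakly decreasing in $\sigma$, and at the same time dividing by $(1+\rho)$ shows that $C(\Tilde{\theta}_1)$ is weakly decreasing in $\rho$. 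Because $C$ is strictly increasing, both comparative statics transfer to $\Tilde{\theta}_1$ itself, giving the first claim.

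Next, for $\Tilde{B}$ I would note that $\Tilde{B}(\rho)=\tfrac{1}{2}\int_0^{\Tilde{\theta}_1}C(\theta)\,d\theta$ is a monotone function (in the weak sense) of the upper limit $\Tilde{\theta}_1$, since the integrand $C$ is non-negative; combined with the previous step this gives that $\Tilde{B}$ is weakly decreasing in $\sigma$. Finally, for $\Tilde{\rho}$, on the region $\pi_{II}>\pi_{\mathit{0}\mathit{0}}$ I would plug the $\Tilde{\pi}$-definitions into the formula for $\Tilde{\rho}$ and simplify to
\begin{equation*}
\Tilde{\rho}=\frac{(1-\sigma)\bigl(\pi_{I\mathit{0}}-\pi_{II}\bigr)}{2\bigl(\pi_{II}-\pi_{\mathit{0}\mathit{0}}\bigr)}-\frac{1}{2},
\end{equation*}
which is manifestly weakly decreasing in $\sigma$ using Assumption \ref{ass_r}(iii) and the sign of the denominator. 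On the degenerate region $\pi_{II}=\pi_{\mathit{0}\mathit{0}}$ one has $\Tilde{\rho}=\infty$ independently of $\sigma$, so the monotonicity statement continues to hold trivially there.

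The only mildly subtle point, and the place I would be careful in the write-up, is the boundary between the two cases in the definition of $\Tilde{\rho}$: since $\Tilde{\pi}_{II}-\Tilde{\pi}_{\mathit{0}\mathit{0}}=\pi_{II}-\pi_{\mathit{0}\mathit{0}}$ does not depend on $\sigma$, the piecewise split of the definition is itself $\sigma$-independent, so no issue arises from the threshold moving with $\sigma$. Everything else is an algebraic verification directly from the implicit definitions of $\Tilde{\theta}_1$, $\Tilde{B}$ and $\Tilde{\rho}$.
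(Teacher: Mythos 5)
Your proposal is correct and follows essentially the same route as the paper: substitute the $\Tilde{\pi}$ definitions into the implicit equation for $\Tilde{\theta}_1$, use $\pi_{I\mathit{0}}\geq\pi_{II}$ and monotonicity of $C$, and then inspect the closed form of $\Tilde{\rho}$ (the paper computes $\partial\Tilde{\rho}/\partial\sigma$ directly, which is the same calculation). The only implicit step shared with the paper is that the numerator $\pi_{I\mathit{0}}-\pi_{\mathit{0}\mathit{0}}-\sigma(\pi_{I\mathit{0}}-\pi_{II})$ is non-negative (which follows from Assumption \ref{ass_r}), as this is needed for the division by $1+\rho$ to yield monotonicity in $\rho$.
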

\begin{proof}
Suppose $\sigma'\geq\sigma$ and $\rho'\geq\rho$. Then
\begin{align*}
\frac{\pi_{I\mathit{0}}-\sigma(\pi_{I\mathit{0}}-\pi_{II}) -\pi_{\mathit{0}\mathit{0}}}{1+\rho} &\geq \frac{\pi_{I\mathit{0}}-\sigma'(\pi_{I\mathit{0}}-\pi_{II}) -\pi_{\mathit{0}\mathit{0}}}{1+\rho'}\\
\frac{(1-\sigma)\pi_{I\mathit{0}}+\sigma\pi_{II} -\pi_{\mathit{0}\mathit{0}}}{1+\rho}&\geq\frac{(1-\sigma')\pi_{I\mathit{0}}+\sigma'\pi_{II} -\pi_{\mathit{0}\mathit{0}}}{1+\rho'}
\end{align*}
where the inequality holds since $\pi_{I\mathit{0}}\geq\pi_{II}$. The first result immediately follows.
Next, since $C(\theta)$ is a strictly increasing function, $\Tilde{B} = \dfrac{\int_0^{\Tilde{\theta}_1} C(\theta)d\theta}{2}$ must also be weakly decreasing in $\sigma$.
The interest rate cut-off value $\Tilde{\rho}$ is decreasing in $\sigma$, since 
\begin{align*}
    \dfrac{\partial \Tilde{\rho}}{\partial\sigma}= \frac{\pi_{II}-\pi_{I\mathit{0}}}{2(\pi_{II}-\pi_{\mathit{0}\mathit{0}})}\leq 0,
\end{align*}
if $\pi_{II}-\pi_{\mathit{0}\mathit{0}}>0$ and zero otherwise.
\end{proof}

To prove Proposition \ref{prop_RJV_better}, suppose first that we have weak competition in the sense that
$\pi_{II}>(1-\sigma)(\pi_{I\mathit{0}}-\pi_{II})+\pi_{\mathit{0}\mathit{0}}$. By Lemma \ref{lemma_theta_rho_theta_1_spill}, this implies $\theta^\rho > \Tilde{\theta}_1$. By Lemma \ref{prop_RJVInvestment}, the probability that the RJV innovates is at least $\theta^{\rho}$. By Lemma \ref{prop_equilibriumSpillFC}, the probability of innovation under R\&D competition is $\Tilde{\theta}_1$, where this expression is decreasing in $\sigma$ by Lemma \ref{lemma_theta_1_sigma_rho}. 
Now, suppose  $\pi_{II}\leq(1-\sigma)(\pi_{I\mathit{0}}-\pi_{II})+\pi_{\mathit{0}\mathit{0}}$. 
By Proposition \ref{prop_RJVEffect_spill}(ii)(a), a strictly larger innovation probability under the RJV implies $B > \Tilde{B}(\rho)$ and $\rho > \Tilde{\rho}$. 
Further, arguing as in the proof of Lemma \ref{lemma_Prop3ii}, if the innovation probability is strictly larger under the RJV than under R\&D competition, then $\theta^{B}>\theta^\rho$. 
If $\theta^{B}\in(\theta^\rho, \theta^u)$, the RJV invests in all projects in the set $(0,\theta^{B})$ and discovers the innovation with probability $\theta^{B}$. Since $B> \Tilde{B}(\rho)=\int_0^{\Tilde{\theta}_1} C(\theta)d\theta/2$, we have $\int_0^{\theta^{B}} C(\theta)d\theta >\int_0^{\Tilde{\theta}_1} C(\theta)d\theta$, which implies $\theta^{B}>\Tilde{\theta}_1$. Without the RJV, in any equilibrium, the firms invest in projects in the set $(0, \Tilde{\theta}_1)$ and the innovation is discovered with probability  $\Tilde{\theta}_1$. As $\Tilde{\theta}_1$ is weakly decreasing in $\rho$ and weakly decreasing in $\sigma$ by Lemma \ref{lemma_theta_1_sigma_rho}, it immediately follows that, if the probability of innovation is strictly larger under the RJV for any $\sigma$ and $\rho$, then this is also true for any $\sigma'\geq\sigma$ and $\rho'\geq\rho$.
If $\theta^{B}\geq \theta^u $, then the RJV invests in all projects in the set $(0, \theta^u )$  and discovers the innovation with probability $\theta^u$. By Lemma \ref{lemma_theta_u_theta_1_spill}, $\rho > \Tilde{\rho}$ implies $\theta^u >\Tilde{\theta}_1$. It immediately follows that, if the probability of innovation is strictly larger under the RJV for any $\sigma$ and $\rho$, then this is also true for $\sigma'\geq\sigma$ and $\rho'\geq\rho$.
\qed

\subsubsection{Licensing}\label{SecAppLicense}

We now add some more details to the licensing model sketched in Section \ref{sec_license}, where a successful innovator chooses a two-part tariff licensing contract $(L,\eta)$ at which the unsuccessful innovator can use the innovation. The buyer accepts any contract that yields at least the outside option of $\pi_{\mathit{0}I}$. In equilibrium, the innovator extracts all rents and sets a fixed fee $L$ such that the unsuccessful firm earns $\pi_{\mathit{0}I}$. Therefore, the single innovator receives the total market surplus net of the outside option, $2\pi_{II}+\Delta-\pi_{\mathit{0}I}$.
We spell out the profit function $\pi^L$ as 

\begin{align*}
\pi^L_{00} &\equiv \pi_{\mathit{0}\mathit{0}}   \\
\pi^L_{II} &\equiv \pi_{II}   \\
\pi^L_{I0} &= \max \{\pi_{I\mathit{0}},2\pi_{II}+\Delta-\pi_{\mathit{0}I}\} \\
\pi^L_{0I} &= \pi_{\mathit{0}I}.
\end{align*}

Replacing $\pi$ with $\pi^L$, we obtain a new expression for expected profits, $\mathbb{E}\Pi^L_i(r_i, r_{j})$, and critical values $\theta^L_1, \theta^L_2$, etc. We maintain Assumption \ref{ass_budget}.
It is straightforward to show that $\theta^L_1 \geq \theta_1$ and $\theta^L_2=\theta_2$. The next result follows directly from replacing $\pi$ with $\pi^L$ in Lemma \ref{prop_equilibrium} and then inserting the above definitions for $\pi^L$.

\begin{lemma}
[Investment strategies under competition with licensing]\label{prop_equilibrium_license} $ $\\
Suppose that $2\pi_{II}+\Delta-\pi_{\mathit{0}I} \geq \pi_{I\mathit{0}}$. Then:\\
\noindent (i) The research competition game with licensing has multiple equilibria. A profile of double-cut off strategies $(r^*_i,r^*_j)$ is an equilibrium if it satisfies (a) $\theta_L=\theta_2$ and $\theta_H=\theta^L_1$ and (b) for each $ \theta \in (\theta_2, \theta^L_1)$ either:
\begin{align*}
    r^*_i(\theta)&=1 \text{ and } r^*_{j}(\theta)=0 \text{ or }\\
    r^*_i(\theta)&=0 \text{ and } r^*_{j}(\theta)=1.
\end{align*}
(ii) No other equilibria of the research-competition game exist.
\end{lemma}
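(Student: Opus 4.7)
The plan is to reduce the result directly to Lemma~\ref{prop_equilibrium} by substituting the profit function $\pi^L$ for $\pi$ and checking that nothing in the earlier proof breaks. Since the characterization in Lemma~\ref{prop_equilibrium} only relied on (a) the structure of the expected-profit functional, (b) the regularity Assumption~\ref{ass_r}, and (c) the budget Assumption~\ref{ass_budget} (which pins down the set of dominated strategies via $\theta_2$), it suffices to show that each of these transfers to the licensing environment when $2\pi_{II}+\Delta-\pi_{\mathit{0}I}\geq \pi_{I\mathit{0}}$.

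First, I would argue that under the hypothesis of the lemma the innovator strictly prefers licensing when it is the sole successful firm, so the realized payoff profile in the investment stage is exactly $\pi^L$, with $\pi^L_{I\mathit{0}}=2\pi_{II}+\Delta-\pi_{\mathit{0}I}$ and $\pi^L_{\mathit{0}\mathit{0}},\pi^L_{II},\pi^L_{\mathit{0}I}$ unchanged. Thus the expected-profit functional $\mathbb{E}\Pi_i^L(r_i,r_j)$ is obtained from $\mathbb{E}\Pi_i(r_i,r_j)$ by a single symmetric substitution, and the game reduces formally to the competition game of Section~\ref{SecRDCompetition} with the primitive profit vector $\pi^L$.

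Second, I would verify Assumption~\ref{ass_r} for $\pi^L$. Parts (i)--(iii) are immediate from the original assumption and the fact that $\pi^L_{I\mathit{0}}\geq \pi_{I\mathit{0}}$. The non-trivial check is (iv):
\[
\pi^L_{I\mathit{0}}-\pi^L_{\mathit{0}\mathit{0}} \;\geq\; \pi^L_{II}-\pi^L_{\mathit{0}I},
\]
which, after substitution, becomes $2\pi_{II}+\Delta-\pi_{\mathit{0}I}-\pi_{\mathit{0}\mathit{0}}\geq \pi_{II}-\pi_{\mathit{0}I}$, i.e.\ $\pi_{II}-\pi_{\mathit{0}\mathit{0}}+\Delta\geq 0$. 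This holds by Assumption~\ref{ass_r}(ii) and $\Delta\geq 0$. Because $\pi^L_{II}=\pi_{II}$ and $\pi^L_{\mathit{0}I}=\pi_{\mathit{0}I}$, the cut-off $\theta^L_2$ defined by $(1+\rho)C(\theta^L_2)=\pi^L_{II}-\pi^L_{\mathit{0}I}$ coincides with $\theta_2$, so Assumption~\ref{ass_budget} carries over verbatim. Similarly, $\theta^L_1$ is defined by $(1+\rho)C(\theta^L_1)=\pi^L_{I\mathit{0}}-\pi^L_{\mathit{0}\mathit{0}}$, and (iv) gives $\theta^L_1\geq \theta^L_2=\theta_2$.

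Finally, applying Lemma~\ref{prop_equilibrium} verbatim to the game with profits $\pi^L$ delivers both parts of the statement: every pure-strategy equilibrium is a double cut-off strategy profile with cut-offs $\theta_L=\theta^L_2=\theta_2$ and $\theta_H=\theta^L_1$, and for $\theta\in(\theta_2,\theta^L_1)$ exactly one of the firms invests. The main obstacle, modest as it is, lies in the regularity check for $\pi^L_{I\mathit{0}}$: one must confirm that the licensing-induced profit does not violate the submodularity condition (iv), because that is the only place where $\pi^L_{I\mathit{0}}$ enters the characterization nontrivially. Once that is handled, the rest of the argument is a straightforward transcription of the proof of Lemma~\ref{prop_equilibrium}.
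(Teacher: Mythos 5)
Your proposal is correct and follows essentially the same route as the paper, which simply invokes Lemma~\ref{prop_equilibrium} with $\pi$ replaced by $\pi^L$ after noting that $\theta^L_2=\theta_2$ and that Assumption~\ref{ass_budget} is maintained. Your explicit verification of Assumption~\ref{ass_r}(iv) for $\pi^L$ (reducing to $\pi_{II}-\pi_{\mathit{0}\mathit{0}}+\Delta\geq 0$) is the right check and is in fact more careful than what the paper writes out.
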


The analysis of the RJV is unchanged; it invests according to Lemma \ref{prop_RJVInvestment}. 

Define the budget threshold $\bar{B}^L(\rho)$ and the interest threshold  $\bar{\rho}^L$ as 
\begin{align*}
    \bar{\rho}^L & =
    \begin{cases}
    \dfrac{\pi_{\mathit{0}\mathit{0}}+\Delta-\pi_{\mathit{0}I}}{2(\pi_{II}-\pi_{\mathit{0}\mathit{0}})}, & \text{for } \pi_{II} > \pi_{\mathit{0}\mathit{0}} \\
    \infty, & \text{for } \pi_{II} = \pi_{\mathit{0}\mathit{0}}.
  \end{cases}\\
  \bar{B}^L(\rho) & = \dfrac{\int_0^{\theta^L_1} C(\theta)d\theta}{2} 
\end{align*}

With this notation in place, it is straightforward to see how Proposition \ref{prop_RJVEffect_licensing} directly follows by reformulation of Proposition \ref{prop_RJVEffect} with $\pi$ replaced by $\pi^L$.

\subsection{Multiple firms} \label{SecAppMultiplefirms}
\subsubsection{Industry-wide RJV}\label{SecAppIndustrywide}
We extend the model to three ex-ante symmetric firms $(i\in\{1,2,3\})$. The product market profits of firm $i$ are now given in the reduced form $\pi_{t_it_jt_k}$ for $j,k \neq i$, $ j \neq k$.
We suppose profits are symmetric: $\pi_{\mathit{0}I\mathit{0}}=\pi_{\mathit{0}\mathit{0}I}$ and $\pi_{II\mathit{0}}=\pi_{I\mathit{0}I}$. That is, only the number of successful rivals matters.
We adjust the assumptions on the product market profits accordingly.
\begin{assumption}[Regularity of market profit functions]\label{ass_r_3firms}
$ $
\begin{enumerate}[label=(\roman*)]
\item Profits are non-negative: $\pi_{t_it_jt_k} \geq 0$ for all $t_i$, $t_j$ and $t_k$.
\item Innovation increases profits: $\pi_{III} \geq \pi_{\mathit{0}\mathit{0}\mathit{0}}$.
\item Competitor innovation reduces profits: $\pi_{\mathit{0}\mathit{0}\mathit{0}} \geq \pi_{\mathit{0}I\mathit{0}} \geq \pi_{\mathit{0}II}$.
\item Competitor innovations reduce the value of own innovations:\\
$\pi_{I\mathit{0}\mathit{0}} - \pi_{\mathit{0}\mathit{0}\mathit{0}}\geq \pi_{II\mathit{0}}- \pi_{\mathit{0}I\mathit{0}}  \geq \pi_{III}-\pi_{\mathit{0}II}$.
\end{enumerate}
\end{assumption}

We obtain cut-off values $\theta_3 \leq\theta_2 \leq \theta_1$ from 
\begin{align*}
(1+\rho)C(\theta_1)&=\pi_{I\mathit{0}\mathit{0}}-\pi_{\mathit{0}\mathit{0}\mathit{0}} \\
(1+\rho)C(\theta_2)&=\pi_{II\mathit{0}} - \pi_{\mathit{0}I\mathit{0}}\\
(1+\rho)C(\theta_3)&=\pi_{III} - \pi_{\mathit{0}II}.
\end{align*}

After appropriately modifying Assumption \ref{ass_budget}, we find that all equilibria have a triple cut-off structure with all firms investing in $[0,\theta_3)$, two firms in $(\theta_3,\theta_2)$, one firm in $(\theta_2,\theta_1)$ and no firm investing in $(\theta_1,1)$

Now we suppose that all three firms form an RJV.
Let $\theta^{B}$ be defined as the solution to $\int_0^{\theta^{B}} C(\theta) d\theta=3B$ if $\int_0^1 C(\theta) d\theta>3B$ and $\theta^{B}=1$ otherwise.
Next, let $\theta^u$ and $\theta^\rho$ be the solutions to the following equations
\begin{align*}
(1+\rho)C(\theta^\rho)&=3(\pi_{III}-\pi_{\mathit{0}\mathit{0}\mathit{0}}) \\
C(\theta^u)&=3(\pi_{III}-\pi_{\mathit{0}\mathit{0}\mathit{0}}).
\end{align*}

Using these cut-off values, the RJV follows a cut-off strategy as in Lemma \ref{prop_RJVInvestment}. Defining soft competition by the requirement that  $3\pi_{III}>\pi_{I\mathit{0}\mathit{0}}+2\pi_{\mathit{0}\mathit{0}\mathit{0}}$ and adjusting the budget cut-off value $\bar{B}$ and the interest rate cut-off value $\bar{\rho}$ appropriately, we finally obtain conditions under which an RJV increases the probability of innovation, which are analogous to those in  Proposition \ref{prop_RJVEffect}.

\subsubsection{Multiple RJVs}

Next, we extend the model to four ex-ante symmetric firms $(i\in\{1,2,3,4\})$. We write the product market profits of firm $i$ facing competitors $j$, $k$ and $l$ as $\pi_{t_it_jt_kt_{\ell}}$. As in the previous subsection, we assume that profits depend only on the own technology and the number of competitors with the new technology, not on their identity. Further, we impose the regularity conditions that profits are non-negative, weakly increasing in own innovation and that the positive effect of own innovation decreases in the number of competitors with access to the new technology.

We again adjust Assumption \ref{ass_budget} so that firms want to borrow externally under R\&D competition. Unsurprisingly, it turns out that, under R\&D competition these equilibria have four cut-off values $\theta_4 \leq\theta_3 \leq\theta_2 \leq \theta_1$, defined in the by now familiar way.

Now we suppose that two RJVs are formed, each consisting of two firms. Thus, instead of four firms, we have two competing RJVs $\{v_1,v_2\}$, each with budget $2B$. Let $\theta^{B}$ be defined as the solution to $\int_0^{\theta^{B}} C(\theta) d\theta=2B$ if $\int_0^1 C(\theta) d\theta>2B$ and $\theta^{B}=1$ otherwise. To find the cutoff-values, consider the equations
\begin{align*}
(1+\rho)C(\theta_1^{\rho})&=2(\pi_{II\mathit{0}\mathit{0}}-\pi_{\mathit{0}\mathit{0}\mathit{0}\mathit{0}}) \\
(1+\rho)C(\theta_2^{\rho})&=2(\pi_{IIII} - \pi_{\mathit{0}\mathit{0}II})\\
C(\theta_1^u)&=2(\pi_{II\mathit{0}\mathit{0}}-\pi_{\mathit{0}\mathit{0}\mathit{0}\mathit{0}})\\
C(\theta_2^u)&=2(\pi_{IIII} - \pi_{\mathit{0}\mathit{0}II}).
\end{align*}
The interpretation is the same as with one RJV. We restrict our analysis to the case in which the budget of an RJV is sufficiently large such that no RJV borrows in equilibrium. 
\begin{assumption}\label{ass_budget_4RJV}
$2B > \int_0^{\theta_1^{\rho}}C(\theta) d\theta$.
\end{assumption}
The assumptions imply $\theta^{B}>\theta_1^{\rho}\geq\theta_2^{\rho}$.
How $\theta^{B}$ relates to the two values $\theta_2^u\leq\theta_1^u$ will determine the  optimal portfolio of an RJV. The research competition game turns out to have multiple equilibria with double cut-offs (and no other equilibria). 

The proof follows a similar structure as in Lemma \ref{prop_equilibrium}, but we have to distinguish between the three cases $\theta^{B}<\theta_2^u$, $\theta^{B}\in[\theta_2^u,\theta_1^u)$ and $\theta_1^u\leq\theta^{B}$.
Further, Assumption \ref{ass_budget_4RJV} implies $(1+\rho)C(\theta)>2(\pi_{II\mathit{0}\mathit{0}}-\pi_{\mathit{0}\mathit{0}\mathit{0}\mathit{0}})$ for any $\theta>\theta^{B}$. Thus, it is never optimal to invest more than the available budget for an RJV.
Defining soft competition by the requirement that  $\pi_{II\mathit{0}\mathit{0}}-\pi_{\mathit{0}\mathit{0}\mathit{0}\mathit{0}}>
\pi_{I\mathit{0}\mathit{0}\mathit{0}}-\pi_{II\mathit{0}\mathit{0}}$ and adjusting the budget cut-off value $\bar{B}$ and the interest rate cut-off value $\bar{\rho}$ appropriately, we finally obtain conditions under which the formation of two RJVs increases the probability of innovation, which are analogous to those in  Proposition \ref{prop_RJVEffect}.

\newpage
\subsection{Sources for RJV Examples}\label{appendix_sources}

In the Introduction, we mentioned several actual RJVs. More information about these ventures can be found at the following links, which are listed in the order in which the RJV appeared in text. All links were last accessed on June 28, 2022 and are archived on \href{https://web.archive.org}{https://web.archive.org}.

\begin{itemize}
    \item \href{https://www.cbo.gov/publication/57126}{https://www.cbo.gov/publication/57126}
   \item \href{https://www.bdo.co.uk/en-gb/news/2021/top-20-global-carmakers-spend-another-71-7bn-on-r-and-d-as-electric-vehicle-rollout-gathers-pace}{https://www.bdo.co.uk/en-gb/news/2021/top-20-global-carmakers-spend-another-\\71-7bn-on-r-and-d-as-electric-vehicle-rollout-gathers-pace}
    \item \href{https://group-media.mercedes-benz.com/marsMediaSite/de/instance/ko.xhtml?oid=42917172}{https://group-media.mercedes-benz.com/marsMediaSite/de/instance/ko.xhtml?oid\\=42917172}
    \item \href{https://www.saftbatteries.com/media-resources/press-releases/psa-a-total-automotive-cells-company}{https://www.saftbatteries.com/media-resources/press-releases/psa-a-total-\\automotive-cells-company}
    \item \href{https://www.bp.com/en/global/corporate/news-and-insights/press-releases/paving-the-way-for-sustainable-mobility-bp-bmw-daimler-announce-bp-third-shareholder-of-dcs.html}{https://www.bp.com/en/global/corporate/news-and-insights/press-releases/\\paving-the-way-for-sustainable-mobility-bp-bmw-daimler-announce-bp-third\\-shareholder-of-dcs.html}
    \item \href{https://www.reuters.com/business/autos-transportation/renault-nissan-mitsubishi-alliance-say-deepen-cooperations-ev-production-2022-01-27/}{https://www.reuters.com/business/autos-transportation/renault-nissan-mitsubishi-\\alliance-say-deepen-cooperations-ev-production-2022-01-27/}
    \item \href{https://www.basf.com/global/de/media/news-releases/2021/05/p-21-215.html}{https://www.basf.com/global/de/media/news-releases/2021/05/p-21-215.html}
    \item \href{https://www.volvogroup.com/en/news-and-media/news/2020/apr/news-3640568.html}{https://www.volvogroup.com/en/news-and-media/news/2020/apr/news-\\3640568.html}
    \item\href{https://www.washingtonpost.com/technology/2022/05/18/solid-state-batteries-electric-vehicles-race/}{https://www.washingtonpost.com/technology/2022/05/18/solid-state-batteries-\\electric-vehicles-race/}
    \item \href{https://www.forbes.com/sites/greggardner/2020/02/03/toyota-and-panasonic-launch-joint-ev-battery-venture/}{https://www.forbes.com/sites/greggardner/2020/02/03/toyota-and-panasonic-\\launch-joint-ev-battery-venture/}
    \item \href{https://www.volkswagen-newsroom.com/en/press-releases/enel-x-and-volkswagen-team-up-for-electric-mobility-in-italy-7315}{https://www.volkswagen-newsroom.com/en/press-releases/enel-x-and-volkswagen-\\team-up-for-electric-mobility-in-italy-7315}
    \item \href{https://www.media.stellantis.com/em-en/fca-archive/press/fiat-chrysler-automobiles-and-engie-eps-plan-to-join-forces-in-a-jv-creating-a-leading-company-in-the-e-mobility-sector}{https://www.media.stellantis.com/em-en/fca-archive/press/fiat-chrysler-\\automobiles-and-engie-eps-plan-to-join-forces-in-a-jv-creating-a-leading-company-\\in-the-e-mobility-sector}
\end{itemize}

\newpage
\bibliographystyle{ecta}
\bibliography{references_RJV}

\begin{thebibliography}{58}
\newcommand{\enquote}[1]{``#1''}
\expandafter\ifx\csname natexlab\endcsname\relax\def\natexlab#1{#1}\fi

\bibitem[\protect\citeauthoryear{Amir, Liu, Machowska, and Resende}{Amir
  et~al.}{2019}]{amir2019spillovers}
\textsc{Amir, R., H.~Liu, D.~Machowska, and J.~Resende} (2019):
  \enquote{Spillovers, subsidies, and second-best socially optimal R\&D,}
  \emph{Journal of Public Economic Theory}, 21, 1200--1220.

\bibitem[\protect\citeauthoryear{Aschhoff and Schmidt}{Aschhoff and
  Schmidt}{2008}]{aschhoff2008empirical}
\textsc{Aschhoff, B. and T.~Schmidt} (2008): \enquote{Empirical evidence on the
  success of R\&D cooperation -- happy together?} \emph{Review of Industrial
  Organization}, 33, 41--62.

\bibitem[\protect\citeauthoryear{Bagwell and Staiger}{Bagwell and
  Staiger}{1994}]{bagwell1994sensitivity}
\textsc{Bagwell, K. and R.~W. Staiger} (1994): \enquote{The sensitivity of
  strategic and corrective R\&D policy in oligopolistic industries,}
  \emph{Journal of International Economics}, 36, 133--150.

\bibitem[\protect\citeauthoryear{Bardey, Jullien, and Lozachmeur}{Bardey
  et~al.}{2016}]{bardey2016}
\textsc{Bardey, D., B.~Jullien, and J.-M. Lozachmeur} (2016): \enquote{Health
  insurance and diversity of treatment,} \emph{Journal of Health Economics},
  47, 50--63.

\bibitem[\protect\citeauthoryear{Bavly, Heller, and Schreiber}{Bavly
  et~al.}{2022}]{bavly2020}
\textsc{Bavly, G., Y.~Heller, and A.~Schreiber} (2022): \enquote{Social welfare
  in search games with asymmetric information,} \emph{Journal of Economic
  Theory}, 202, 105462.

\bibitem[\protect\citeauthoryear{Becker and Dietz}{Becker and
  Dietz}{2004}]{becker2004r}
\textsc{Becker, W. and J.~Dietz} (2004): \enquote{R\&D cooperation and
  innovation activities of firms -- evidence for the German manufacturing
  industry,} \emph{Research Policy}, 33, 209--223.

\bibitem[\protect\citeauthoryear{Boone}{Boone}{2008{\natexlab{a}}}]{boone2008competition}
\textsc{Boone, J.} (2008{\natexlab{a}}): \enquote{Competition: Theoretical
  parameterizations and empirical measures,} \emph{Journal of Institutional and
  Theoretical Economics (JITE)/Zeitschrift f{\"u}r die gesamte
  Staatswissenschaft}, 587--611.

\bibitem[\protect\citeauthoryear{Boone}{Boone}{2008{\natexlab{b}}}]{boone2008new}
---\hspace{-.1pt}---\hspace{-.1pt}--- (2008{\natexlab{b}}): \enquote{A new way
  to measure competition,} \emph{The Economic Journal}, 118, 1245--1261.

\bibitem[\protect\citeauthoryear{Bourreau, Jullien, and Lefouili}{Bourreau
  et~al.}{2021}]{bourreau2018b}
\textsc{Bourreau, M., B.~Jullien, and Y.~Lefouili} (2021): \enquote{Mergers and
  demand-enhancing innovation,} \emph{TSE Working Papers No. 18-907}.

\bibitem[\protect\citeauthoryear{Brunner, Letina, and Schmutzler}{Brunner
  et~al.}{2022}]{brunner2022research}
\textsc{Brunner, P., I.~Letina, and A.~Schmutzler} (2022): \enquote{Research
  joint ventures: The role of financial constraints,} \emph{arXiv preprint
  arXiv:2207.04856v1}.

\bibitem[\protect\citeauthoryear{Bryan and Lemus}{Bryan and
  Lemus}{2017}]{bryan2017}
\textsc{Bryan, K.~A. and J.~Lemus} (2017): \enquote{The direction of
  innovation,} \emph{Journal of Economic Theory}, 172, 247--272.

\bibitem[\protect\citeauthoryear{Caggese}{Caggese}{2019}]{caggese2019financing}
\textsc{Caggese, A.} (2019): \enquote{Financing constraints, radical versus
  incremental innovation, and aggregate productivity,} \emph{American Economic
  Journal: Macroeconomics}, 11, 275--309.

\bibitem[\protect\citeauthoryear{Caloghirou, Ioannides, and
  Vonortas}{Caloghirou et~al.}{2003}]{caloghirou2003research}
\textsc{Caloghirou, Y., S.~Ioannides, and N.~S. Vonortas} (2003):
  \enquote{Research joint ventures,} \emph{Journal of Economic Surveys}, 17,
  541--570.

\bibitem[\protect\citeauthoryear{Cassiman}{Cassiman}{2000}]{cassiman2000research}
\textsc{Cassiman, B.} (2000): \enquote{Research joint ventures and optimal R\&D
  policy with asymmetric information,} \emph{International Journal of
  Industrial Organization}, 18, 283--314.

\bibitem[\protect\citeauthoryear{Cassiman and Veugelers}{Cassiman and
  Veugelers}{2002}]{cassiman2002r}
\textsc{Cassiman, B. and R.~Veugelers} (2002): \enquote{R\&D cooperation and
  spillovers: some empirical evidence from Belgium,} \emph{American Economic
  Review}, 92, 1169--1184.

\bibitem[\protect\citeauthoryear{Choi}{Choi}{1993}]{choi1993cooperative}
\textsc{Choi, J.~P.} (1993): \enquote{Cooperative R\&D with product market
  competition,} \emph{International Journal of Industrial Organization}, 11,
  553--571.

\bibitem[\protect\citeauthoryear{Czarnitzki and Hottenrott}{Czarnitzki and
  Hottenrott}{2011}]{czarnitzki2011r}
\textsc{Czarnitzki, D. and H.~Hottenrott} (2011): \enquote{R\&D investment and
  financing constraints of small and medium-sized firms,} \emph{Small Business
  Economics}, 36, 65--83.

\bibitem[\protect\citeauthoryear{d'Aspremont and Jacquemin}{d'Aspremont and
  Jacquemin}{1988}]{d1988cooperative}
\textsc{d'Aspremont, C. and A.~Jacquemin} (1988): \enquote{Cooperative and
  noncooperative R\&D in duopoly with spillovers,} \emph{American Economic
  Review}, 78, 1133--1137.

\bibitem[\protect\citeauthoryear{Denicol{\`o} and Polo}{Denicol{\`o} and
  Polo}{2018}]{denicolo2018a}
\textsc{Denicol{\`o}, V. and M.~Polo} (2018): \enquote{Duplicative research,
  mergers and innovation,} \emph{Economics Letters}, 166, 56--59.

\bibitem[\protect\citeauthoryear{Duso, R{\"o}ller, and Seldeslachts}{Duso
  et~al.}{2014}]{duso2014collusion}
\textsc{Duso, T., L.-H. R{\"o}ller, and J.~Seldeslachts} (2014):
  \enquote{Collusion through joint R\&D: An empirical assessment,} \emph{Review
  of Economics and Statistics}, 96, 349--370.

\bibitem[\protect\citeauthoryear{Falvey, Poyago-Theotoky, and
  Teerasuwannajak}{Falvey et~al.}{2013}]{falvey2013coordination}
\textsc{Falvey, R., J.~Poyago-Theotoky, and K.~T. Teerasuwannajak} (2013):
  \enquote{Coordination costs and research joint ventures,} \emph{Economic
  Modelling}, 33, 965--976.

\bibitem[\protect\citeauthoryear{Farrell and Shapiro}{Farrell and
  Shapiro}{2000}]{farrell2000scale}
\textsc{Farrell, J. and C.~Shapiro} (2000): \enquote{Scale economies and
  synergies in horizontal merger analysis,} \emph{Antitrust LJ}, 68, 685.

\bibitem[\protect\citeauthoryear{Fauli-Oller and Sandonis}{Fauli-Oller and
  Sandonis}{2003}]{fauli2003merge}
\textsc{Fauli-Oller, R. and J.~Sandonis} (2003): \enquote{To merge or to
  license: implications for competition policy,} \emph{International Journal of
  Industrial Organization}, 21, 655--672.

\bibitem[\protect\citeauthoryear{Federico, Langus, and Valletti}{Federico
  et~al.}{2017}]{federico2017}
\textsc{Federico, G., G.~Langus, and T.~Valletti} (2017): \enquote{A simple
  model of mergers and innovation,} \emph{Economics Letters}, 157, 136--140.

\bibitem[\protect\citeauthoryear{Federico, Langus, and Valletti}{Federico
  et~al.}{2018}]{federico2018}
---\hspace{-.1pt}---\hspace{-.1pt}--- (2018): \enquote{Horizontal mergers and
  product innovation,} \emph{International Journal of Industrial Organization},
  59, 1--23.

\bibitem[\protect\citeauthoryear{Fumagalli, Motta, and Tarantino}{Fumagalli
  et~al.}{2022}]{fumagalli2022}
\textsc{Fumagalli, C., M.~Motta, and E.~Tarantino} (2022): \enquote{Shelving or
  developing? The acquisition of potential competitors under financial
  constraints,} \emph{CSEW Working Paper no. 637}.

\bibitem[\protect\citeauthoryear{Gilbert}{Gilbert}{2019}]{Gilbert2019}
\textsc{Gilbert, R.} (2019): \enquote{Competition, mergers, and R\&D
  diversity,} \emph{Review of Industrial Organization}, 465--484.

\bibitem[\protect\citeauthoryear{Goyal and Moraga-Gonzalez}{Goyal and
  Moraga-Gonzalez}{2001}]{goyal2001r}
\textsc{Goyal, S. and J.~L. Moraga-Gonzalez} (2001): \enquote{R\&d networks,}
  \emph{Rand Journal of Economics}, 686--707.

\bibitem[\protect\citeauthoryear{Grossman and Shapiro}{Grossman and
  Shapiro}{1986}]{grossman1986research}
\textsc{Grossman, G.~M. and C.~Shapiro} (1986): \enquote{Research joint
  ventures: an antitrust analysis,} \emph{JL Econ. \& Org.}, 2, 315.

\bibitem[\protect\citeauthoryear{Hall and Lerner}{Hall and
  Lerner}{2010}]{hall2010financing}
\textsc{Hall, B.~H. and J.~Lerner} (2010): \enquote{The financing of R\&D and
  innovation,} in \emph{Handbook of the Economics of Innovation}, Elsevier,
  vol.~1, 609--639.

\bibitem[\protect\citeauthoryear{Harsanyi and Selten}{Harsanyi and
  Selten}{1988}]{harsanyi1988general}
\textsc{Harsanyi, J.~C. and R.~Selten} (1988): \emph{A General Theory of
  Equilibrium Selection in Games}, Cambridge, Massachusetts: MIT Press Books.

\bibitem[\protect\citeauthoryear{Howell}{Howell}{2017}]{howell2017financing}
\textsc{Howell, S.~T.} (2017): \enquote{Financing innovation: Evidence from
  R\&D grants,} \emph{American Economic Review}, 107, 1136--64.

\bibitem[\protect\citeauthoryear{Jacquemin}{Jacquemin}{1988}]{jacquemin1988cooperative}
\textsc{Jacquemin, A.} (1988): \enquote{Cooperative agreements in R\&D and
  European antitrust policy,} \emph{European Economic Review}, 32, 551--560.

\bibitem[\protect\citeauthoryear{Kamien, Muller, and Zang}{Kamien
  et~al.}{1992}]{kamien1992research}
\textsc{Kamien, M.~I., E.~Muller, and I.~Zang} (1992): \enquote{Research joint
  ventures and R\&D cartels,} \emph{American Economic Review}, 1293--1306.

\bibitem[\protect\citeauthoryear{Kamien and Zang}{Kamien and
  Zang}{2000}]{kamien2000meet}
\textsc{Kamien, M.~I. and I.~Zang} (2000): \enquote{Meet me halfway: Research
  joint ventures and absorptive capacity,} \emph{International Journal of
  Industrial Organization}, 18, 995--1012.

\bibitem[\protect\citeauthoryear{Katz}{Katz}{1986}]{katz1986analysis}
\textsc{Katz, M.~L.} (1986): \enquote{An analysis of cooperative research and
  development,} \emph{RAND Journal of Economics}, 527--543.

\bibitem[\protect\citeauthoryear{Katz and Shapiro}{Katz and
  Shapiro}{1985}]{katzshapiro1985}
\textsc{Katz, M.~L. and C.~Shapiro} (1985): \enquote{On the licensing of
  innovations,} \emph{RAND Journal of Economics}, 16, 504--520.

\bibitem[\protect\citeauthoryear{Kerr and Nanda}{Kerr and
  Nanda}{2015}]{kerr2015financing}
\textsc{Kerr, W.~R. and R.~Nanda} (2015): \enquote{Financing innovation,}
  \emph{Annual Review of Financial Economics}, 7, 445--462.

\bibitem[\protect\citeauthoryear{Krieger, Li, and Papanikolaou}{Krieger
  et~al.}{2022}]{krieger2017developing}
\textsc{Krieger, J., D.~Li, and D.~Papanikolaou} (2022): \enquote{Missing
  novelty in drug development,} \emph{Review of Financial Studies}, 35,
  636--679.

\bibitem[\protect\citeauthoryear{Leahy and Neary}{Leahy and
  Neary}{1997}]{leahy1997public}
\textsc{Leahy, D. and J.~P. Neary} (1997): \enquote{Public policy towards R\&D
  in oligopolistic industries,} \emph{American Economic Review}, 642--662.

\bibitem[\protect\citeauthoryear{Letina}{Letina}{2016}]{letina2016}
\textsc{Letina, I.} (2016): \enquote{The road not taken: competition and the
  R\&D portfolio,} \emph{RAND Journal of Economics}, 47, 433--460.

\bibitem[\protect\citeauthoryear{Letina and Schmutzler}{Letina and
  Schmutzler}{2019}]{letina2019inducing}
\textsc{Letina, I. and A.~Schmutzler} (2019): \enquote{Inducing variety: A
  theory of innovation contests,} \emph{International Economic Review}, 60,
  1757--1780.

\bibitem[\protect\citeauthoryear{Letina, Schmutzler, and Seibel}{Letina
  et~al.}{forthcoming}]{letina2020start}
\textsc{Letina, I., A.~Schmutzler, and R.~Seibel} (forthcoming):
  \enquote{Killer acquisitions and beyond: policy effects on innovation
  strategies,} \emph{International Economic Review}.

\bibitem[\protect\citeauthoryear{Link}{Link}{1998}]{link1998case}
\textsc{Link, A.~N.} (1998): \enquote{Case study of R\&D efficiency in an ATP
  joint venture,} \emph{The Journal of Technology Transfer}, 23, 43--51.

\bibitem[\protect\citeauthoryear{Mancusi and Vezzulli}{Mancusi and
  Vezzulli}{2014}]{mancusi2014r}
\textsc{Mancusi, M.~L. and A.~Vezzulli} (2014): \enquote{R\&D and credit
  rationing in SMEs,} \emph{Economic Inquiry}, 52, 1153--1172.

\bibitem[\protect\citeauthoryear{Martin}{Martin}{1996}]{martin1996r}
\textsc{Martin, S.} (1996): \enquote{R\&D joint ventures and tacit product
  market collusion,} \emph{European Journal of Political Economy}, 11,
  733--741.

\bibitem[\protect\citeauthoryear{Miyagiwa}{Miyagiwa}{2009}]{miyagiwa2009collusion}
\textsc{Miyagiwa, K.} (2009): \enquote{Collusion and research joint ventures,}
  \emph{Journal of Industrial Economics}, 57, 768--784.

\bibitem[\protect\citeauthoryear{Mohnen, Palm, Van Der~Loeff, and
  Tiwari}{Mohnen et~al.}{2008}]{mohnen2008financial}
\textsc{Mohnen, P., F.~C. Palm, S.~S. Van Der~Loeff, and A.~Tiwari} (2008):
  \enquote{Financial constraints and other obstacles: Are they a threat to
  innovation activity?} \emph{De Economist}, 156, 201--214.

\bibitem[\protect\citeauthoryear{Moraga-Gonz{\'a}lez, Motchenkova, and
  Nevrekar}{Moraga-Gonz{\'a}lez et~al.}{2022}]{moraga2022mergers}
\textsc{Moraga-Gonz{\'a}lez, J.~L., E.~Motchenkova, and S.~Nevrekar} (2022):
  \enquote{Mergers and innovation portfolios,} \emph{The RAND Journal of
  Economics}, 53, 641--677.

\bibitem[\protect\citeauthoryear{Motta and Tarantino}{Motta and
  Tarantino}{2021}]{motta2021effect}
\textsc{Motta, M. and E.~Tarantino} (2021): \enquote{The effect of horizontal
  mergers, when firms compete in prices and investments,} \emph{International
  Journal of Industrial Organization}, 78, 102774.

\bibitem[\protect\citeauthoryear{R\"oller, Siebert, and Tombak}{R\"oller
  et~al.}{2007}]{Roeller2007}
\textsc{R\"oller, L.-H., R.~Siebert, and M.~Tombak, Mikhel} (2007):
  \enquote{Why firms form (or do not form) RJVs,} \emph{Economic Journal}, 117,
  1122--1144.

\bibitem[\protect\citeauthoryear{Savignac}{Savignac}{2008}]{savignac2008impact}
\textsc{Savignac, F.} (2008): \enquote{Impact of financial constraints on
  innovation: What can be learned from a direct measure?} \emph{Economics of
  Innovation and New Technology}, 17, 553--569.

\bibitem[\protect\citeauthoryear{Schmutzler}{Schmutzler}{2013}]{schmutzler2013competition}
\textsc{Schmutzler, A.} (2013): \enquote{Competition and investment: A unified
  approach,} \emph{International Journal of Industrial Organization}, 31,
  477--487.

\bibitem[\protect\citeauthoryear{Shapiro}{Shapiro}{1985}]{shapiro1985}
\textsc{Shapiro, C.} (1985): \enquote{Patent licensing and R\&D rivalry,}
  \emph{American Economic Review, Papers and Proceedings}, 75, 25--30.

\bibitem[\protect\citeauthoryear{Sovinsky}{Sovinsky}{2022}]{helland2019research}
\textsc{Sovinsky, M.} (2022): \enquote{Do research joint ventures serve a
  collusive function?} \emph{Journal of the European Economic Association}, 20,
  430--475.

\bibitem[\protect\citeauthoryear{Suzumura}{Suzumura}{1992}]{suzumura1992cooperative}
\textsc{Suzumura, K.} (1992): \enquote{Cooperative and noncooperative R\&D in
  an oligopoly with spillovers,} \emph{American Economic Review}, 1307--1320.

\bibitem[\protect\citeauthoryear{Veugelers}{Veugelers}{1997}]{veugelers1997internal}
\textsc{Veugelers, R.} (1997): \enquote{Internal R\&D expenditures and external
  technology sourcing,} \emph{Research Policy}, 26, 303--315.

\bibitem[\protect\citeauthoryear{Vilasuso and Frascatore}{Vilasuso and
  Frascatore}{2000}]{vilasuso2000public}
\textsc{Vilasuso, J. and M.~R. Frascatore} (2000): \enquote{Public policy and
  R\&D when research joint ventures are costly,} \emph{Canadian Journal of
  Economics/Revue canadienne d'{\'e}conomique}, 33, 818--839.

\end{thebibliography}

\end{document}